\numberwithin{equation}{section}
\DeclareMathOperator{\op}{op}
\DeclareMathOperator{\supp}{supp}
\DeclareMathOperator{\sgn}{sgn}
\DeclareMathOperator{\Hess}{Hess}
\DeclareMathOperator{\Mat}{Mat}
\newcommand{\ud}{\mathrm{d}}
\newcommand{\ui}{\mathrm{i}}
\newcommand{\ue}{\mathrm{e}}
\newcommand{\Or}{O}
\DeclareMathOperator{\vol}{vol}
\DeclareMathOperator{\C}{C}
\DeclareMathOperator{\diam}{diam}
\DeclareMathOperator{\tr}{tr}
\DeclareMathOperator{\T}{T}
\DeclareMathOperator{\HS}{HS}
\newcommand{\gz}{\mathbb Z}
\newcommand{\fz}{\mathbb F}
\newcommand{\tz}{\mathbb T}
\newcommand{\rz}{\mathbb R}
\newcommand{\nz}{\mathbb N}
\newcommand{\kz}{\mathbb C}
\newcommand{\be}{\begin{equation}}
\newcommand{\ee}{\end{equation}}
\newcommand{\lk}{\left(}
\newcommand{\rk}{\right)}
\newcommand{\eins}{\mathds{1}}
\newtheorem{theorem}{Theorem}
\newtheorem{lemma}{Lemma}
\newtheorem{prop}{Proposition}
\newtheorem{cor}{Corollary}
\newtheorem{defn}{Definition}
\begin{document}

\thispagestyle{empty}

\begin{center}

{\LARGE\bf A Gutzwiller trace formula for}\\[3ex] 
{\LARGE\bf large hermitian matrices}\\

\vspace*{3cm}

{\large Jens Bolte\footnote{Department of Mathematics, Royal Holloway,
University of London, Egham, TW20 0EX, United Kingdom, 
{\tt jens.bolte@rhul.ac.uk}}, 
{\large Sebastian Egger}\footnote{Department of Mathematics,
Technion--Israel Institute of Technology, 629 Amado Building, Haifa 32000, 
Israel, {\tt egger@tx.technion.ac.il }}, and 
Stefan Keppeler\footnote{Fachbereich Mathematik, Universit\"at T\"ubingen, 
Auf der Morgenstelle 10, 72076 T\"ubingen, Germany, 
{\tt stefan.keppeler@uni-tuebingen.de}}}
\end{center}
\vfill
\begin{abstract}
We develop a semiclassical approximation for the dynamics of quantum
systems in finite-dimensional Hilbert spaces whose classical
counterparts are defined on a toroidal phase space. In contrast to
previous models of quantum maps, the time evolution is in continuous
time and, hence, is generated by a Schr\"odinger equation. In the
framework of Weyl quantisation, we construct discrete, semiclassical
Fourier integral operators approximating the unitary time evolution
and use these to prove a Gutzwiller trace formula. We briefly discuss
a semiclassical quantisation condition for eigenvalues as well as some
simple examples.
\end{abstract}

\newpage

\section{Introduction}
\label{intro}
Semiclassical analysis aims at approximating quantum dynamics in terms
of suitable, corresponding classical dynamics in the semiclassical
limit. Since the unitary quantum dynamics are generated by a
self-adjoint Hamiltonian operator, a closely related problem for
operators with a discrete spectrum is the semiclassical approximation
of eigenvalues.  Bohr-Sommerfeld quantisation conditions in one
dimension were the first examples of such approximations. They were
followed by the EBK conditions for integrable systems. A general
approximation of quantum spectral functions in terms of classical
quantities, however, was first given in terms of the Gutzwiller trace
formula \cite{Gutzwiller:1971}.  The first rigorous proofs of a trace
formula in a similar spirit were given by Colin~de~Verdi{\`e}re
\cite{CdV:1973a,CdV:1973b}, and Duistermaat and Guillemin
\cite{Duistermaat:1975}, for the case of a Laplacian on a closed
manifold. The semiclassical case originally considered by Gutzwiller
was proven by Meinrenken \cite{Meinrenken:1992}.

Semiclassical methods have been successfully applied in a broad range
of circumstances, from realistic models describing experiments to
purely mathematical models. In many of those cases a classical
configuration space is given. The classical phase space then is the
cotangent bundle over the configuration space. The corresponding
quantum systems are defined in an infinite dimensional Hilbert space,
typically an $L^2$-space over the configuration space, and
semiclassical methods are available that relate the quantum and the
classical descriptions of the same physical system, see, e.g.,
\cite{Zworski:2012}. In some situations (as, e.g., Toeplitz
quantisation and quantum maps
\cite{Zelditch:1997,Esposti:2003,Schlichenmaier:2010}) the classical
phase space, however, is a compact symplectic manifold, such as a
sphere or a torus. There then exists no classical configuration
space. The associated quantum Hilbert space is finite dimensional and
the dimension of the Hilbert space tends to infinity in the
semiclassical limit. A class of models of this type that have been
studied intensively are quantised torus maps \cite{Esposti:2003}, for
which the dynamics take place in discrete time. Quantum maps are often
studied as mathematical toy models, e.g.\ in the context of quantum
chaos. However, phase spaces that are not cotangent bundles also have
many applications in physics. For instance, toroidal phase spaces are
relevant in solid state physics, see, e.g., \cite{Harper:1955}, and
more general symplectic manifolds play a role in molecular physics,
see, e.g., \cite{Sadovskii:2006} and references therein.

Our goal in this article is to quantise Hamiltonian flows on tori and
to approximate the resulting continuous-time quantum dynamics
semiclassically, within the framework of Weyl quantisation.  We then
use the semiclassical approximation of the dynamics in order to prove
a trace formula. As the quantum Hamiltonian acts in a finite
dimensional Hilbert space, this is a Gutzwiller trace formula for
hermitian matrices, with matrix size growing in the semiclassical
limit.  Since tori are special cases of compact K\"ahler manifolds, it
may seem natural to apply results that were obtained in the context of
Toeplitz quantisation \cite{Boutet:1981} on compact K\"ahler
manifolds.  In this framework a Gutzwiller-type trace formula was
proven by Borthwick, Paul and Uribe \cite{Borthwick:1998}, building on
\cite{Boutet:1981}. In a more recent approach, Paoletti uses the
Szeg\"o kernel of \cite{Boutet:1976} in order to also prove local
asymptotics \cite{Paoletti:2011,Paoletti:2016}. Nevertheless, although
the approach of the present article is not suitable for general
compact K\"ahler manifolds we believe that proving a trace formula for
torus flows entirely within the framework of Weyl quantisation is
useful for the following two reasons.

First, Weyl quantisation of observables on a torus is well known in
the context of quantum maps. Furthermore, it allows us to outline the
proof of the trace formula in close analogy to the classic proof in
the case where the phase space is a cotangent bundle. The essential
modification is a discretised ansatz for the semiclassical time
evolution operator, see Eq.~\eqref{oscint} below. Moreover, our main
tool is a result that is essentially a stationary phase theorem for
sums, see Appendix~\ref{20}.

Second, Weyl quantisation is particularly adapted to the case of
Schr\"odinger operators in finite dimension including, e.g.,
discretisations of differential operators as in the Harper model
\cite{Harper:1955} or in the case of Dirac operators in lattice field
theory, see, e.g., \cite{Rothe:2012}. Toeplitz quantisation on tori
\cite{Zelditch:1997}, however, is equivalent to anti-Wick quantisation
as can be deduced from \cite{Bouzouina:1996,Bodmann:2001}. In
representing Hamiltonian operators, anti-Wick quantisation, in turn,
is not identical to Weyl quantisation but is the same to leading order
\cite[Lemma 3.9]{Bouzouina:1996}.  In particular, whereas Weyl symbols
of Schr\"odinger operators (and similar types of difference operators)
are $\hbar$-independent and, hence, identical to their principal
symbols, anti-Wick symbols for the same operators are
$\hbar$-dependent with the Weyl symbol providing only the principal
part. We illustrate this statement in Appendix~\ref{appAW}; see also
\cite{Zelditch:2005,Faure:2015} for a comparison of different
quantisation schemes for symplectic maps.

The article is organised as follows. In
Section~\ref{sec:setting_result} we introduce the general setting and
present our main result, the trace formula in Theorem~\ref{t12}. In
Section~\ref{q27} we recall the Weyl quantisation of functions on a
torus and develop some useful expressions for the action of a Weyl
operator on a vector. In Section~\ref{f1} we review relevant aspects
of classical Hamiltonian dynamics on a torus phase space, with a
particular emphasis on extending Hamilton-Jacobi dynamics beyond
caustics and the construction of Maslov bundles. The main technical
work is presented in Section~\ref{q166}, where the semiclassical
approximation of the quantum time evolution is developed. The proof of
our main Theorem~\ref{t12} is given in Section~\ref{t1a}. An
application of the trace formula to derive semiclassical quantisation
conditions can be found in Section~\ref{sec5}. Finally, in
Section~\ref{t1} we discuss some simple examples.

\section{Setting and main result}
\label{sec:setting_result}

Let $\tz=\rz^2/(\ell_x\gz\oplus \ell_\xi\gz)$ be a two dimensional
torus, where $\ell_x ,\ell_\xi>0$ are two length parameters. The
universal covering space of $\tz$ is $\rz^2$, and as a fundamental
domain for the action of $\ell_x\gz\oplus \ell_\xi\gz$ on $\rz^2$ we
choose $\fz = \left[0,\ell_x\right) \times
\left[0,\ell_\xi\right)$.
Weyl quantisation associates an $N$-dimensional Hilbert space $\kz^N$,
with inner product
$
\langle\psi,\phi\rangle_N:=\frac{1}{N}\langle\phi,\psi\rangle_{\kz^N},
$
to the classical system, where
\begin{equation}
\label{q29a}
N=\frac{\ell_x\ell_\xi}{2\pi\hbar}
\end{equation}
is a semiclassical parameter determined by the value of Planck's
constant $\hbar$. In fact, one can view either $\hbar$ or $N^{-1}$ as
a semiclassical parameter. In the following we shall mainly use
$\hbar$ for semiclassical asymptotic expansions, but one has to
keep in mind that the relation \eqref{q29a} allows one to rewrite
expressions so as to hide $\hbar$. Also notice that since $N$ is an
integer $\hbar$ only takes values in a discrete set.

Vectors $\psi=(\psi_n)\in\kz^N$ can be seen as `wave functions' supported 
at the points 
\begin{equation}
\label{q28}
x_n=\frac{n\ell_x}{N},\quad n=0,\dots,N-1,
\end{equation}
in the interval $[0,\ell_x)$.

The Weyl quantisation of a classical observable $f\in C^\infty(\tz)$ 
is a linear operator $\op_N(f):\kz^N\to\kz^N$ defined as
\begin{equation}
\label{q34}
\op_N(f) : =\sum_{m,n\in\gz}f_{mn}\,T^{mn}.
\end{equation}
Here
\begin{equation}
\label{q30}
f_{mn} = \frac{1}{\ell_x\ell_\xi}\int_{\fz}f(x,\xi)\,
\ue^{-2\pi\ui\left(\frac{m\xi}{\ell_\xi}-\frac{nx}{\ell_x}\right)}\ud x\,\ud\xi 
\end{equation} 
are Fourier coefficients of the observable $f$, viewed as a smooth function
on the fundamental domain $\fz$, and the 
\begin{equation}
\label{Weylnm}
T^{mn}=\ue^{\ui\pi\frac{nm}{N}}T^{m0}T^{0n}
\end{equation}
are unitary operators in $\left(\kz^N,\langle\cdot,\cdot\rangle_N\right)$ 
defined through
\begin{equation}
\label{q32}
\left( T^{m0}\psi\right)_l:=\psi_{\left(l+m\right)\bmod N} \quad\text{and}\quad
\left( T^{0n}\psi\right)_l:=\ue^{-\frac{2\pi\ui ln}{N}}\psi_l. 
\end{equation}
The latter represent translations in the $x$- and $\xi$-directions,
respectively. This quantisation was introduced in \cite{Hannay:1980},
for details see, e.g., \cite{Bouzouina:1996,Esposti:2003}.

A typical operator that one would want to represent in Weyl quantisation is a 
Schr\"odinger operator $-\hbar^2\Delta +V$, where the Laplacian is a difference
operator,
\begin{equation}
\label{discLapintro}
\begin{split}
(-\Delta\psi)_l &:= -\frac{N^2}{\ell_x^2}\bigl(\psi_{(l+1)\bmod N}+\psi_{(l-1)\bmod N}-2\psi_l\bigr) \\
                       &= -\frac{N^2}{\ell_x^2}\bigl((T^{1,0}+T^{-1,0}-2\eins)\psi\bigr)_l .
\end{split}
\end{equation}
From the last expression one immediately identifies a symbol,
\begin{equation}
\label{schroedsymb}
H(x,\xi)=\frac{\ell_\xi^2}{2\pi^2}\left(1-\cos\left(\frac{2\pi \xi}{\ell_\xi}\right)\right)+V(x),
\end{equation}
such that $\op_N(H)=-\hbar^2\Delta +V$. 

Given a classical Hamiltonian $H\in C^\infty(\tz)$ the Schr\"odinger
equation is
\begin{equation}
\label{Schroeq}
\ui\hbar\frac{\ud\psi}{\ud t}(t) = \op_N(H)\psi(t),\qquad
\psi(0)=\psi_0\in\kz^N.
\end{equation}
It generates a unitary one-parameter group $U(t)$, $t\in\rz$, via
$\psi(t)=U(t)\psi_0$. (See also \cite{Ligabo:2016}, where the corresponding
Heisenberg equation is considered.) 

We remark that fixing the dimension $N$, every hermitian $N\times N$-matrix
has a representation \eqref{q34}. This is not unique in the sense that one could 
provide alternative phase space representations of the same matrix. However,
with varying $N$, the above construction yields a family of hermitian matrices
associated with a fixed phase space and a fixed function $f$. 

The first aim of this paper is to construct a semiclassical asymptotic 
expansion
\begin{equation}
\label{scevol}
U(t)\sim\sum_{k\geq 0} \left(\frac{\hbar}{\ui}\right)^k U_k(t),
\quad\hbar\to 0.
\end{equation}
In a second step we use this expansion in order to evaluate the
right-hand side of
\begin{equation}
\label{TFidea}
\sum_n\rho\left(\frac{E_n-E}{\hbar}\right) =
\frac{1}{2\pi}\int_\rz\hat\rho(t)\,\tr U(t)\,\ue^{\frac{\ui}{\hbar}Et}\ \ud t,
\end{equation}
where $\rho\in C^\infty(\rz)$ with compactly supported Fourier
transform $\hat\rho$. The sum on the left-hand side extends over all
eigenvalues $E_n$ of $\op_N(H)$, i.e., it is a spectral function
of the Hamiltonian.

Evaluating the right-hand side of \eqref{TFidea} using the
semiclassical expansion \eqref{scevol} leads to an expression which we
relate to the Hamiltonian flow on $\tz$ generated by the classical
Hamiltonian $H$. In order to achieve this we have to identify $U(t)$
as a suitable semiclassical Fourier integral operator (scFIO). Locally, 
the latter is given in terms of an oscillatory integral, whose phase
function generates the Hamiltonian flow. In the present context of
compact phase spaces this concept has to be amended appropriately,
leading to the ansatz
\begin{equation}
\label{oscint}
U_k(t)_{nm} = \frac{1}{\ell_\xi}\int_\rz a_k(t,x_n,x_m,\xi)\,
\ue^{\frac{\ui}{\hbar}\phi(t,x_n,x_m,\xi)}\ \ud\xi.
\end{equation}
Since the quantum mechanical Hilbert space has finite dimension,
the amplitude and the phase function of this `position representation'
of the time evolution operator are only evaluated at the points
\begin{equation}
\label{xdiscrete}
x_n=\frac{n\ell_x}{N},\quad n=0,\dots,N-1. 
\end{equation}
The amplitude and the phase function, however, are defined on the
universal covering space $\rz^2$, and not only on $\fz$. They will
have to be chosen such that $U(t)$ (approximately) satisfies the 
Schr\"odinger equation that follows from \eqref{Schroeq}. If the phase
function is chosen to be of the form
\begin{equation}
\phi(t,x,y,\xi) = S(t,x,\xi)-y\xi,
\end{equation}
it will turn out that $S$ is required to be a solution of the 
Hamilton-Jacobi equation,
\begin{equation}
H(x,\partial_x S) + \partial_t S =0,
\end{equation}
with initial condition $S(0,x,\xi)=x\xi$. Hence, $S$ generates the 
canonical transformation 
\begin{equation}
\left(x,\partial_xS(t,x,\xi)\right)\mapsto
\left(\partial_\xi S(t,x,\xi),\xi\right) 
\end{equation}
representing the Hamiltonian flow backwards in time (all lifted to
the covering space $\rz^2$). However, this is only true for sufficiently 
small times $t$,
i.e., as long as no caustics occur. Beyond caustics one has to piece
local, singularity free representations of the form \eqref{oscint}
together. This requires a suitable Maslov bundle, eventually
introducing Maslov phases into the resulting trace formula. We shall 
have to devote a sizeable portion of this work to solving this problem.

Assuming that $E$ is a regular value for the classical
Hamiltonian $H$, the energy surface
\begin{equation}
\label{t10}
H^{-1}(E):=\left\{(x,\xi)\in\tz; \ H(x,\xi)=E\right\}
\end{equation}
is a one-dimensional, not necessarily connected, submanifold of the
two-dimensional torus. We denote by $\mathfrak{P}_E$ the set of
periodic orbits of the Hamiltonian flow at energy $E$. The
connected components of the energy surface $H^{-1}(E)$ are the
primitive periodic orbits. Hence, the volume $\vol(H^{-1}(E))$
of the energy surface is the sum of the periods $t_{p^\#}$ of all
primitive periodic orbits $p^\#\in\mathfrak{P}_E$.  Let $W_p$ be
the action of the orbit $p$ and denote its Maslov phase by
$\sigma_p$. Our main result then is the following.
\begin{theorem}
\label{t12}
Assume that $E$ is a regular value for the classical Hamiltonian 
$H\in C^\infty(\tz)$, and let $\rho\in C^\infty(\rz)$ with Fourier transform 
$\hat{\rho}\in C_0^{\infty}(\rz)$ be a test function. Then, for every 
$p\in\mathfrak{P}_E$ there exists a function $a_p(\hbar)$ with a complete 
asymptotic expansion in powers of $\hbar$, such that
\begin{equation}
\label{t18}
\sum_n\rho\left(\frac{E_n -E}{\hbar}\right) 
= a_0(\hbar)\frac{\hat\rho(0)}{2\pi}+\sum_{p\in\mathfrak{P}_E}
\frac{\hat{\rho}\left(t_p\right)a_p(\hbar)}{2\pi}
\ue^{\frac{\ui}{\hbar}W_p-\frac{\ui\pi}{2}\sigma_p} + O(\hbar^\infty).
\end{equation}
The leading asymptotic behaviour of the amplitude functions is
\begin{equation}
\begin{split}
\label{t13}
a_0(\hbar) &= \vol\left(H^{-1}(E)\right)+O(\hbar) \\
a_p(\hbar) &= t_{p^{\#}}+O(\hbar),
\end{split}
\end{equation}
where $p^\#$ is the primitive periodic orbit associated with 
$p\in\mathfrak{P}_E$.
\end{theorem}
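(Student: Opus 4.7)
The plan is to start from the identity \eqref{TFidea} and insert the discrete semiclassical Fourier integral representation \eqref{oscint} of $U(t)$ that will have been constructed in Section~\ref{q166}. Setting $m=n$ and summing yields, at each order in $\hbar$, a contribution to $\tr U(t)$ of the form
\begin{equation}
\frac{1}{\ell_\xi}\sum_{n=0}^{N-1}\int_\rz a_k(t,x_n,x_n,\xi)\,\ue^{\frac{\ui}{\hbar}[S(t,x_n,\xi)-x_n\xi]}\,\ud\xi,
\end{equation}
and substitution into the right-hand side of \eqref{TFidea} produces a triply oscillatory expression with overall phase
\begin{equation}
\Phi(t,x,\xi;E)=S(t,x,\xi)-x\xi+Et,
\end{equation}
integrated in $(t,\xi)\in\rz^2$ against the compactly supported cut-off $\hat\rho$ and discretely summed over $n$ at $x=x_n$.

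The next step is an asymptotic evaluation of this expression using the stationary phase theorem for sums of Appendix~\ref{20}, applied jointly in the discrete variable $n$ and in the continuous variables $(t,\xi)$. Since $S$ generates the backward Hamiltonian flow and satisfies the Hamilton-Jacobi equation, the critical equations for $\Phi$ read
\begin{equation}
\partial_\xi S(t,x,\xi)=x,\qquad \partial_x S(t,x,\xi)=\xi,\qquad H(x,\xi)=E,
\end{equation}
so that $(x,\xi)\in H^{-1}(E)$ must be returned to itself after time $t$. The critical set therefore decomposes into a trivial stratum $\{t=0,(x,\xi)\in H^{-1}(E)\}$ and, for each $p\in\mathfrak{P}_E$, the one-dimensional loop swept out by the underlying primitive orbit $p^\#$ at $t=t_p$.

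The $t=0$ contribution is handled by a Morse--Bott stationary phase: the Hessian transversal to $H^{-1}(E)$ is non-degenerate since $E$ is a regular value, and integration of the leading amplitude along $H^{-1}(E)$ yields $a_0(\hbar)=\vol(H^{-1}(E))+O(\hbar)$. For $t\neq 0$, each critical submanifold is again one-dimensional thanks to the time-translation symmetry of a periodic orbit, and regularity of $E$ ensures that the transversal Hessian is non-degenerate. Integration along the loop produces the factor $t_{p^\#}$ rather than $t_p$, because the geometric critical manifold is the primitive orbit irrespective of how often $p$ wraps around it; the value of $\Phi$ on the orbit recovers the classical action $W_p$; and the signs of the transversal Hessians, patched globally along the orbit, combine to give the Maslov phase $\sigma_p$.

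The main technical obstacle is the failure of a single generating function $S$ to exist globally on $\rz^2$: as soon as caustics appear, the representation \eqref{oscint} is only valid in local charts. I would therefore decompose $\supp\hat\rho$ by a partition of unity fine enough that on each subinterval of $t$ the orbits swept out by the flow admit local \WKB{} generating functions free of singularities, and then patch local contributions coherently via the Maslov bundle constructed in Section~\ref{f1}; the cocycle of local Hessian-signature differences is what assembles into the integer $\sigma_p$. A secondary difficulty is that the stationary phase theorem for sums demands $\hbar$-uniform control on the amplitudes $a_k$, which is inherited from the transport equations solved in Section~\ref{q166} together with the $\hbar$-independence of the Weyl symbol of $\op_N(H)$.
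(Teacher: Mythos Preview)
Your approach is essentially the one the paper takes: replace $U(t)$ by the discrete scFIO of Section~\ref{q166}, apply Lemma~\ref{a1} to convert the diagonal sum $\sum_n$ into an integral, and then run stationary phase in $(t,x,\eta)$; the critical set splits into the $t=0$ stratum $H^{-1}(E)$ and the periodic orbits, and the Maslov data are assembled from the local charts.

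One point in your sketch deserves correction. When you invoke Lemma~\ref{a1} the resulting phase is not $S(t,x,\xi)-x\xi+Et$ but rather carries an extra lattice shift $s\ell_\xi x$ (and the diagonal identification $\hat y|_{y=x}=\hat x+r\ell_x$ produces a further shift in the $\eta$-equation); see \eqref{t14}--\eqref{t16} and the definition \eqref{t11} of $W_p$. Your stated critical equations $\partial_\xi S=x$, $\partial_x S=\xi$ would detect only contractible periodic orbits on $\tz$ and miss those that wind around the torus, and they would also yield the wrong action for such orbits. This is not a flaw in the strategy---the shifts come out automatically if you apply Lemma~\ref{a1} carefully---but your written stationarity conditions need the modular terms to be correct.
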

In the following we provide a self contained proof of this theorem
using the explicit Weyl quantisation outlined above.


%
\section{Weyl quantisation on the torus}
\label{q27}
Weyl quantisation of classical systems with toroidal phase space is
based on irreducible unitary representations of the discrete
Heisenberg group $H_3(\gz)$, see \cite{Esposti:2003}. They are labelled 
by a positive integer $N$, the dimension of the representation. 
Similarly, Weyl quantisation of classical systems with phase space 
$\rz^2$, the universal cover of $\tz$, is based on the continuous 
Heisenberg group $H_3(\rz)$, whose irreducible unitary representations 
are labelled by the positive real parameter $\hbar$, physically Planck's 
constant divided by $2\pi$, see \cite{Folland:1989}. The representation 
of $H_3(\gz)$ generated by the
unitary operators \eqref{q32} induces a representation of $H_3(\rz)$
with $\hbar$ determined by \eqref{q29a}. Consequently, the
semiclassical limit $\hbar\to0$ corresponds to $N\to\infty$, the limit
of large matrices.

Using the definitions \eqref{Weylnm} and \eqref{q32} in \eqref{q34} we
can rewrite the action of a Weyl-quantised observable $f\in
C^\infty(\tz)$ on a vector $\psi=(\psi_k)\in\kz^N$ as
\begin{equation}
\label{q36}
\begin{split}
\left(\op_N(f)\psi\right)_k
 &=\sum_{m\in\gz}\psi_{m\bmod{N}}\ \frac{1}{\ell_\xi}\int_0^{\ell_\xi} 
   f\left(\ell_x\frac{k+m}{2N},\xi\right)\ue^{2\pi\ui(k-m)\xi/\ell_\xi}\ \ud\xi\\
 &=\sum_{m\in\gz}\psi_{m\bmod{N}}\ \frac{1}{\ell_\xi}\int_0^{\ell_\xi} 
   f\left(\frac{x_k+x_m}{2},\xi\right)\ue^{\frac{\ui}{\hbar}(x_k-x_m)\xi}\ \ud\xi \, ,
\end{split}
\end{equation} 
where $m\bmod{N}$ denotes the smallest non-negative integer $m'$ such
that $N$ divides $m-m'$. The last line of \eqref{q36} follows from
\eqref{q28} and \eqref{q29a} and the result is similar to the
corresponding expression for the application of a Weyl operator in
$L^2(\rz)$.

The infinite sum in \eqref{q36} can be turned into a finite sum plus remainder
using the following result.
\begin{lemma}
\label{q39}
Let $f\in C^\infty(\tz)$ and $\psi=(\psi_k)\in\kz^N$. Then, for all $M\in\nz$, 
we have
\begin{equation}
\label{q40}
\begin{split}
\left(\op_N(f)\psi\right)_k 
  &= \sum_{|k-m|\leq N^{\frac{1}{M}}}\psi_{m\bmod{N}}\ \frac{1}{\ell_\xi}\int_0^{\ell_\xi} 
     f\left(\ell_x\frac{k+m}{2N},\xi\right)\ue^{2\pi\ui(k-m)\xi/\ell_\xi}\ \ud\xi\\
  &\qquad+O_M\left( \hbar^{\infty}\right).
\end{split}
\end{equation}
\end{lemma}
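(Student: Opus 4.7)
The plan is to truncate the sum in \eqref{q36} by exploiting rapid decay of Fourier coefficients in $\xi$. For fixed $k,m\in\gz$ the integral
\[
I_{k,m} := \frac{1}{\ell_\xi}\int_0^{\ell_\xi} f\!\left(\ell_x\tfrac{k+m}{2N},\xi\right)\ue^{2\pi\ui(k-m)\xi/\ell_\xi}\,\ud\xi
\]
is the $(k-m)$-th Fourier coefficient in $\xi$ of the smooth $\ell_\xi$-periodic function $\xi\mapsto f(\ell_x(k+m)/(2N),\xi)$. Repeated integration by parts in $\xi$, combined with $f\in C^\infty(\tz)$ having all derivatives bounded uniformly on the compact torus, gives
\[
|I_{k,m}|\;\leq\;\frac{C_{M'}}{(1+|k-m|)^{M'}}\quad\text{for every }M'\in\nz,
\]
where $C_{M'}$ depends only on $\sup_{\tz}|\partial_\xi^{M'}f|$ and not on $k$ or $m$. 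This uniformity in the first argument of $f$ is the one point requiring attention, but it is immediate since $\partial_\xi^{M'}f$ is continuous on a compact manifold.

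Next I would bound the tail of \eqref{q36} consisting of those $m\in\gz$ with $|k-m|>N^{1/M}$. Using the pointwise estimate $|\psi_{m\bmod N}|\leq \|\psi\|_{\kz^N}$ and the decay above, the tail satisfies
\[
\Bigl|\sum_{|k-m|>N^{1/M}}\psi_{m\bmod N}\, I_{k,m}\Bigr|
\;\leq\;C_{M'}\,\|\psi\|_{\kz^N}\sum_{j>N^{1/M}}\frac{2}{(1+j)^{M'}}
\;\leq\;C'_{M'}\,\|\psi\|_{\kz^N}\,N^{(1-M')/M}
\]
for every $M'>1$.

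Finally, converting $N$ to $\hbar$ via \eqref{q29a} yields $N^{-K}\sim\hbar^{K}$ up to a constant. For any prescribed $K\in\nz$ I would choose $M'=KM+1$, obtaining a tail bound of order $\hbar^K$ with constants depending on $M$, $K$, and $f$ but not on $\psi$. Since $K$ is arbitrary this gives the claimed $O_M(\hbar^\infty)$ remainder, and there is no deeper obstacle than the routine verification that the integration-by-parts constants can be chosen uniformly on $\tz$.
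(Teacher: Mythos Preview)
Your proposal is correct and follows exactly the approach indicated in the paper's own proof: repeated integration by parts in $\xi$ to obtain uniform decay of the Fourier coefficients, using only bounds on the $\xi$-derivatives of $f$. You have simply supplied the details that the paper omits.
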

In \eqref{q40} the notation $g(\hbar)=O_M\left(\hbar^{\infty}\right)$
means that there exists a constant $C_M$, depending on $M$, such that
$|g(\hbar)|\leq C_M \hbar^\alpha$ for all $\alpha>0$.
\begin{proof}
The claim follows from integrating by parts, see e.g.\ 
\cite[Theorem 3.2.9]{Grafakos:2008}, and only requires bounds on the derivatives
of $f$ with respect to $\xi$.
\end{proof}
We remark that the restriction $|k-m|\leq N^{1/M}$ in the summation is equivalent
to the condition 
\begin{equation}
\label{q41}
|x_k-x_m|<N^{\frac{1}{M}-1}\ell_x.
\end{equation}
The result can be rephrased as follows.
\begin{lemma}
\label{q42}
Let $f\in C^\infty(\tz)$ and $\psi=(\psi_k)\in\kz^N$. Then, for all $L,M\in\nz$,
\begin{equation}
\begin{split}
\left(\op_N(f)\psi\right)_k 
  &= \sum_{|k-m|\leq N^{\frac{1}{M}}}\psi_{m\bmod{N}}\sum_{l=0}^{L-1}\frac{1}{l!}
     \left(\frac{\hbar}{2\ui}\right)^l\frac{1}{\ell_\xi}\int_0^{\ell_\xi} 
     \partial_\xi^l\partial_x^l f(x_k,\xi)\ue^{2\pi\ui(k-m)\xi/\ell_\xi}\ \ud\xi\\
  &\qquad+O\left(N^{\frac{L}{M}-L}\right).
\end{split}
\end{equation}
\end{lemma}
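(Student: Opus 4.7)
The plan is to deduce this refinement from Lemma \ref{q39} by Taylor expanding $f$ in its first argument around $x_k$, and then trading the resulting powers of $(x_m-x_k)$ for $\xi$-derivatives on the oscillatory factor via integration by parts.

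First, I would write $\ell_x(k+m)/(2N)=x_k+(x_m-x_k)/2$ and apply Taylor's theorem to order $L-1$, yielding
$$f\!\left(\frac{x_k+x_m}{2},\xi\right) = \sum_{l=0}^{L-1}\frac{1}{l!}\left(\frac{x_m-x_k}{2}\right)^l\partial_x^l f(x_k,\xi) + R_L(k,m,\xi),$$
with $|R_L(k,m,\xi)|\le C_L\,|x_m-x_k|^L$ uniformly in $\xi$, the constant $C_L$ depending only on $\sup_{\tz}|\partial_x^L f|$. Substituting this into the finite sum of Lemma \ref{q39} reduces matters to two separate tasks: converting each Taylor term into the form displayed on the right-hand side of the statement, and controlling the remainder.

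For the first task I would use the identity $(x_k-x_m)\,\ue^{\frac{\ui}{\hbar}(x_k-x_m)\xi} = \frac{\hbar}{\ui}\partial_\xi\ue^{\frac{\ui}{\hbar}(x_k-x_m)\xi}$, iterated $l$ times, to rewrite
$$\left(\frac{x_m-x_k}{2}\right)^l\ue^{\frac{\ui}{\hbar}(x_k-x_m)\xi} = \left(-\frac{\hbar}{2\ui}\right)^l\partial_\xi^l\ue^{\frac{\ui}{\hbar}(x_k-x_m)\xi}.$$
Integration by parts $l$ times in $\int_0^{\ell_\xi}\cdots\,\ud\xi$ then moves $\partial_\xi^l$ onto $\partial_x^l f(x_k,\xi)$ and produces a sign $(-1)^l$ that combines with the above prefactor to give exactly the claimed coefficient $(\hbar/(2\ui))^l/l!$. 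All boundary terms vanish because $\partial_x^l f(x_k,\cdot)$ is $\ell_\xi$-periodic (as $f\in C^\infty(\tz)$) and $(x_k-x_m)\ell_\xi/\hbar=2\pi(k-m)\in 2\pi\gz$ makes the exponential $\ell_\xi$-periodic as well.

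For the remainder, each integral involving $R_L$ is bounded in absolute value by $C_L|x_m-x_k|^L\le C_L\ell_x^L\,N^{L(1/M-1)} = O(N^{L/M-L})$ per term, uniformly in $\xi$ and in $m$ on the summation range $|k-m|\le N^{1/M}$; the sum over the $O(N^{1/M})$ indices with bounded weights $\psi_{m\bmod N}$, together with the already negligible $O_M(\hbar^\infty)$ error inherited from Lemma \ref{q39}, is then absorbed into the stated bound, possibly after enlarging $M$ by a fixed amount (which is harmless, since $M$ is arbitrary). The only genuine obstacle is the routine bookkeeping of signs and factors of two in the integration by parts; the essential structural input is the double periodicity built into $f\in C^\infty(\tz)$, which makes all boundary contributions disappear.
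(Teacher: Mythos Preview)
Your proof is correct and follows essentially the same approach as the paper: start from Lemma~\ref{q39}, Taylor expand $f$ in its first argument about $x_k$ using \eqref{q41}, convert the resulting powers $(x_m-x_k)^l$ into $\partial_\xi^l$ acting on the exponential via the identity \eqref{q46}, and integrate by parts (with boundary terms vanishing by periodicity). Your treatment is in fact slightly more explicit than the paper's in justifying the vanishing of boundary terms and in flagging that the sum over $O(N^{1/M})$ indices may cost an extra factor absorbed by adjusting $M$.
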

\begin{proof}
We first note that a Taylor expansion together with \eqref{q41} gives
\begin{equation}
\label{Taylorexp}
f\left(\frac{x_k+x_m}{2},\xi\right) = \sum_{l=0}^{L-1}\frac{1}{l!}
\left(\frac{x_m-x_k}{2}\right)^l\partial_x^l f(x_k,\xi) + 
O\left(N^{\frac{L}{M}-L}\right).
\end{equation}
Using this identity on the right-hand side of \eqref{q40}, as well as \eqref{q29a} 
and the identity
\begin{equation}
\label{q46}
(m-k)^l\ue^{2\pi\ui(k-m)\xi/{\ell_\xi}} = \left(\frac{\ui \ell_\xi}{2\pi}\right)^l
\partial_\xi^l\ue^{2\pi\ui(k-m)\xi/{\ell_\xi}},
\end{equation}
followed by an $l$-fold integration by parts, yields the desired result.
\end{proof} 
%


%
\section{Classical dynamics on the torus}
\label{f1}
Before we can proceed to construct semiclassical approximations of the
quantum time evolution we have to provide some classical input. 
Eventually we shall construct a suitable scFIO, and for that purpose we 
need equivalents to the classical data that are used in standard FIOs 
\cite{Duistermaat:1973,Hoermander:1994} or their semiclassical counterparts 
\cite{Meinrenken:1992}.

The general setting here is that of a Hamiltonian flow $\Phi_t$ on the
phase space $\tz$ generated by a Hamiltonian vector field $X_H$
associated with a Hamiltonian function $H\in C^\infty(\tz)$. This flow
is global since $\tz$ is compact.  Due to the covering of $\tz$ by
$\rz^2$, $\pi:\rz^2\to\tz$, most expressions given below hold on
$\rz^2$ as well as on $\tz$ (at least locally). At some instances,
however, we have to pay attention to differences, making use of that
fact that the covering map provides the projection and the pull back
for switching between covering space and base space.

Working on the covering space of the torus, the canonical one- and
two-forms are given by $\theta:=\xi\ud x$ and
$\omega:=\ud\theta=\ud\xi\wedge\ud x$. Here $\theta$ is not periodic
in $\xi$, but $\omega$ can be regarded as a two-form on $\tz$ as
well.

For the purpose of semiclassical constructions it is convenient to
work on the extended phase space
$T^\ast\rz\times\tz\cong\rz^2\times\tz$.  Following an established
convention, we denote points in that space as
$(t,\tau,x,\xi)\in\rz^2\times\tz$, where $t$ is the time variable and
$-\tau$ has the meaning of an energy. In these variables the canonical
two-form is
\begin{equation}
\label{q4}
\Omega:=\ud t\wedge\ud\tau\oplus\omega,
\end{equation}
both on $\rz^2\times\rz^2$ and on $\rz^2\times\tz$. We then consider
the extended classical Hamiltonian
\begin{equation}
\label{q3}
H_{\rm ext}(t,\tau,x,\xi) = H(x,\xi)+\tau,
\end{equation}
generating the (extended) Hamiltonian flow
\begin{equation}
\label{extflow}
\Phi^{\rm ext}_\sigma(t,\tau,x,\xi) = \bigl(t+\sigma,\tau,\Phi_\sigma(x,\xi)\bigr).
\end{equation}
Standard FIOs require canonical relations. In the present context this
is a twisted version of the graph of the extended Hamiltonian flow
\eqref{extflow}, which can be given as
\begin{equation}
\label{q5}
\Lambda = \left\{ (t,\tau,x,\xi,y,-\eta);\ 
  (x,\xi)=\Phi_t(y,\eta), \ 
  H_{\rm ext}(t,\tau,y,\eta)=0 \right\},
\end{equation}
and which is a Lagrangian submanifold of $\rz^2\times\tz\times\tz$ or
$\rz^2\times\rz^2\times\rz^2$, equipped with the symplectic form
\begin{equation}
\label{q6}
\tilde\Omega = \ud t\wedge\ud\tau\oplus\omega\oplus\omega.
\end{equation}
We also need local generating functions for $\Lambda$. These always
exist in a neighbourhood of the initial manifold
\cite[Theorem~5.5]{Grigis:1994}
\begin{equation}
\label{a2}
\Lambda_0 = \left\{ (0,\tau,x,\xi,y,-\eta);\ 
  (x,\xi)=(y,\eta),\ 
  H_{\rm ext}(0,\tau,y,\eta)=0 \right\},
\end{equation}
in the form
\begin{equation}
\label{q8}
\phi(t,x,y,\eta)=S(t,x,\eta)-\eta y,
\end{equation}
where $S$ satisfies the Hamilton-Jacobi equation 
\begin{equation}
\label{q9}
H\bigl(x,\partial_x S(t,x,\eta)\bigr)+\partial_t S(t,x,\eta)=0
\quad \text{with initial condition} \quad
S(x,\eta,0)=x\eta.
\end{equation}
A solution $S$ exists when $|t|$ is sufficiently small, so as no
caustic to occur. This solution defines a map
$\phi:\mathfrak{V}_0\to\rz$, where $\mathfrak{V}_0$ is an open set in
$\rz^4$, such that
\begin{equation}
\label{localLM}
\begin{split}
\bigl\{(t,\partial_t S(t,x,\eta),x,\partial_x S(t,x,\eta),
&\partial_\eta S(t,x,\eta),-\eta);\ (t,x,y,\eta)\in\mathfrak{V}_0, \\
&(x,\xi)=\Phi^t(y,\eta),\ H_{\rm ext}(t,\tau,y,\eta)=0\bigr\},
\end{split}
\end{equation}
is a neighbourhood of $\Lambda_0$ in $\Lambda$. The above is
formulated on the covering phase space. Working on the torus would
result in the solution $S$ of the Hamilton-Jacobi equation
$\eqref{q9}$ to be discontinuous.  In order to avoid this problem we
shall use generating functions on the universal cover only.

Representations analogous to \eqref{localLM} can be given at any point of
$\Lambda$ where the map
\begin{equation}
\label{q10}
(x,\xi,y,-\eta)\mapsto (x,\eta), \quad\text{with}\quad \Phi_t(y,\eta)=(x,\xi),
\end{equation} 
is locally surjective. On an open set
$\mathfrak{V}_\alpha\subset\rz^4$ one then finds a function
\begin{equation}
\label{q8a}
\phi_\alpha(t,x,y,\eta)=S_\alpha(t,x,\eta)-\eta y,
\end{equation}
such that the analogue of \eqref{localLM} parametrises another piece
of $\Lambda$.  At points where local surjectivity of \eqref{q10} is
violated but where instead the map
\begin{equation}
\label{q11}
(x,\xi,y,-\eta)\mapsto (\xi,\eta), \quad \quad\text{with}\quad 
\Phi_t(y,\eta)=(x,\xi),
\end{equation}
is locally surjective we may use an alternative generating
function. Slightly abusing our notation, we introduce
$\phi_\alpha:{\mathfrak{V}}_\alpha\to\rz$, where now
$\mathfrak{V}_\alpha$ is an open subset of $\rz^5$, of the form
\begin{equation}
\label{q12}
\phi_\alpha(t,x,y,\xi,\eta)=x\xi-y\eta-S_\alpha(t,\xi,\eta).
\end{equation}
This gives a parametrisation
\begin{equation}
\label{q13}
\begin{split}
\bigl\{(t,\partial_t S_\alpha(t,\xi,\eta),\partial_\xi S_\alpha(t,\xi,\eta),\xi,
&-\partial_\eta S_\alpha(t,\xi,\eta),-\eta);\ (t,x,y,\xi,\eta)\in
  \mathfrak{V}_\alpha, \\
&\qquad(x,\xi)=\Phi^t(y,\eta),\ H_{\rm ext}(t,\tau,y,\eta)=0\bigr\},
\end{split}
\end{equation}
of yet another piece of $\Lambda$. We denote by $\Lambda_\alpha$ the
subset of $\Lambda$ generated by the pair
$\{\phi_\alpha,\mathfrak{V}_\alpha\}$, no matter whether it is
parametrised in the sense of \eqref{localLM} or \eqref{q13}. In fact,
all of $\Lambda$ can be covered by local pieces of the above forms.
This can be proven in analogy to \cite[Lemma~10.5]{Zworski:2012}; in
the case \eqref{q13} one only has to replace the one-form in 
\cite[p.~230]{Zworski:2012} by
\begin{equation}
\label{q195}
\nu:=\tau\ud t - x\ud\xi - y\ud\eta.
\end{equation}
Now let $\{\Psi_\alpha\}_{\alpha}$ be a partition of unity subordinate 
to the open cover $\{\Lambda_\alpha\}_{\alpha}$. We then call
\begin{equation}
\label{q18}
\left\{\Gamma_\alpha\right\}_\alpha,
\quad\text{where}\quad
\Gamma_\alpha = \{\phi_\alpha,\mathfrak{V}_\alpha,\Psi_\alpha,\Lambda_\alpha\},
\end{equation}
a generating set of $\Lambda$.

We remark that if $\gamma(\sigma)=(\sigma,\Phi_\sigma(y,\eta))$ is a
path in $\rz\times\rz^2$ connecting $(0,y,\eta)$ and
$(t,\Phi_t(y,\eta))$ corresponding to $\lambda\in\Lambda$ then its action 
\begin{equation}
\label{q14}
W(\lambda) := \int_{\gamma}\theta + t\tau
\end{equation}
generates $\Lambda$ using suitable local coordinates
\cite[Eq.~(16)]{Meinrenken:1992}.

The pairs $\{\phi_\alpha,\mathfrak{V}_\alpha\}_\alpha$ generate a
distinguished complex line bundle, the Maslov bundle, over
$\Lambda$. This bundle is defined by the transition functions,
\begin{equation}
\label{q15}
\kappa_{\alpha\beta}:=\ue^{\frac{\ui\pi}{2}\sigma_{\alpha\beta}},
\end{equation} 
where
\begin{equation}
\label{q16}
\sigma_{\alpha\beta}(\lambda) :=
\frac{1}{2}\bigl(\sgn\Hess\phi_\alpha(\lambda)-\sgn\Hess\phi_\beta(\lambda)\bigr),
\quad \lambda\in\Lambda_{\alpha}\cap\Lambda_{\beta},
\end{equation}
see \cite[Eq.~(4)]{Meinrenken:1992} and
\cite[Eq.~(21.6.18)]{Hoermander:1985}. Note that here and throughout
the article Hessians are defined only with respect to variables on the
original phase space but not with respect to variables on the extended
phase space. Now assume $\gamma$ to be a continuous path in
$\Lambda$ defined on some compact parameter interval $I$. We choose a
finite partition $t_0,\ldots,t_M$ of $I$ such that
$\gamma([t_{i-1},t_i])\subset\Lambda_{\alpha_i}$ for some
$\alpha_i$. Then we define the Maslov index of $\gamma$ as
\begin{equation}
\label{q17}
\mu(\gamma) := \sum_{i=1}^{M-1}\sigma_{\alpha_{i}\alpha_{i+1}}(\gamma(t_i)),
\end{equation}
compare \cite[p.~288]{Meinrenken:1992}. 

The constructions leading to \eqref{q17} and \eqref{q18} were
performed in the covering space setting. However, they are also valid
in the torus setting, since the constructions are essentially
local. In particular, choosing $\mathfrak{V}_\alpha$, or rather
$\Lambda_\alpha$, small enough and fixing $\alpha$ one can uniquely
identify the coordinates in the torus setting with those in the
covering space setting. If $(x,\xi)$ are coordinates of a point in
$\rz^2$, we shall denote coordinates of its projection to $\tz$ by
$(\check{x}^\alpha,\check\xi^\alpha)$. Conversely, if $(x,\xi)$ are
coordinates of a point on the torus we shall use
$(\hat{x}^\alpha,\hat\xi^\alpha)$ to denote coordinates for this point
lifted to $\rz^2$ in the framework described above. We shall omit the
dependence on $\alpha$ if it is clear from the context.

In order to achieve the above properties we choose a generating set 
\eqref{q18} where the sets $\mathfrak{V}_\alpha$ are small enough to satisfy 
the following conditions: Let $\{z_1,\dots ,z_l\}$ be a subset of the 
coordinates $\{x,y,\xi,\eta\}$ and define $\mathfrak{V}_\alpha^{z_1,\dots,z_l}$
to be the image of $\mathfrak{V}_\alpha$ under the projection 
$(x,y,\xi,\eta)\mapsto(z_1,\dots ,z_l)$. Then we require
\begin{equation}
\label{q26}
\diam\mathfrak{V}_\alpha^{x},\diam\mathfrak{V}_\alpha^{y}<\frac{\ell_x}{2} 
\quad\text{and}\quad \diam\mathfrak{V}_\alpha^{\xi},
\diam\mathfrak{V}_\alpha^{\eta}<\frac{\ell_\xi}{2}
\end{equation}
for every $t\in[0,T]$ with arbitrary but fixed $T$. Here $\diam M$ denotes
the diameter of $M\subset\rz^d$ measured in the euclidean metric.

We now develop a generating set for $\Lambda$ that is adapted to
the construction of discrete scFIOs which is carried out
in Sec.~\ref{q166}. We start with an arbitrary subordinate partition
of unity $\left\{\psi_{\rho},V_{\rho}\right\}_\rho$ of
$\tz$. Translating this partition with $\ell_x\gz\oplus \ell_\xi\gz$
then gives a partition of unity of $\rz^2$. Another related
subordinate partition of unity, $\{\psi'_{\rho,t},V'_{\rho,t}\}_\rho$,
is given by
\begin{equation}
\label{q20}
V'_{\rho,t} = \left\{ (x,\xi),\ \Phi_t(y,\eta)=(x,\xi),\ 
                      (y,-\eta)\in V_{\rho}\right\}
\quad\text{and}\quad 
\psi'_{\rho,t}(x,\xi) = \psi_{\rho}(y,-\eta).
\end{equation}
Let further $\{\kappa_{\beta},I_\beta\}_{\beta}$ be a subordinate
partition of unity of the interval $[0,T]$ and set
$\alpha=(\rho,\beta)$, as well as
\begin{equation}
\label{q21}
\Lambda_\alpha = \left\{(t,\tau,x,\xi,y-\eta)\in\rz^2\times V'_{\rho,t}\times
V_{\rho};\ (t,\tau,x,\xi,y,-\eta)\in\Lambda\right\},
\end{equation}
and 
\begin{equation}
\label{q22}
\Psi_\alpha (t,\tau,x,\xi,y,-\eta) = 
\kappa_\beta(t)\psi'_{\rho,t}(x,\xi)\psi_\rho(y,-\eta).
\end{equation} 
Then $\{\Psi_\alpha,\Lambda_\alpha\}_\alpha$ is a subordinate
partition of unity of $\Lambda$.

The above constructions also work on the covering space. There,
translations by $\ell_x\gz\oplus \ell_\xi\gz$ introduce an equivalence
relation on the collection of sets $\Lambda_{\alpha}$. This generates
equivalence classes $\Lambda_{\widetilde{\alpha}}$,
$\mathfrak{V}_{\widetilde{\alpha}}$ and
$\Psi_{\widetilde{\alpha}}$. In short, we also say that
$\alpha,\alpha'$ are in the equivalence class $\widetilde{\alpha}$.
\begin{lemma}
\label{q168}
Let ${\Lambda}_{\alpha},{\Lambda}_{\alpha'}$ be in the same equivalence class
$\Lambda_{\widetilde{\alpha}}$. Then 
\begin{equation}
\label{q172}
\check{\Lambda}_{\alpha'}=\check{\Lambda}_{\alpha}
\end{equation}
and the converse is also true. Moreover, assume that
$\check{\lambda}_{\alpha}=\check{\lambda}_{\alpha'}$, where
$\lambda_{\alpha'}\in\Lambda_{\alpha'}$ and
$\lambda_{\alpha}\in\Lambda_{\alpha}$. Then
\begin{equation}
\label{171}
W(\lambda_\alpha)=W(\lambda_{\alpha'})
\end{equation}
If, in addition, $\mathfrak{V}_{\alpha'},\mathfrak{V}_\alpha$ are in
the same equivalence class $\mathfrak{V}_{\widetilde{\alpha}}$, then
\begin{equation}
\label{170}
\Hess\phi_{\alpha}(\lambda_{\alpha})=\Hess\phi_{\alpha'}(\lambda_{\alpha'}).
\end{equation}
\end{lemma}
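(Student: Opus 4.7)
The plan is to establish the three claims by exploiting the translation equivariance of the Hamiltonian flow (inherited from the periodicity of $H$), the diameter condition \eqref{q26}, and the identification $W=\phi_\alpha$ on $\Lambda_\alpha$. I treat the three claims in turn.

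For \eqref{q172}, the forward implication is immediate since the equivalence relation is generated by the deck transformations $\ell_x\gz\oplus\ell_\xi\gz$ of the covering map $\pi:\rz^2\to\tz$; translated lifts manifestly project to the same torus subset. For the converse, the smallness condition \eqref{q26} forces $\pi|_{\Lambda_\alpha}$ to be injective, so given $\check\Lambda_\alpha=\check\Lambda_{\alpha'}$ each point of $\Lambda_\alpha$ is related to a unique point of $\Lambda_{\alpha'}$ by some lattice translation. This translation depends continuously on the point but takes values in the discrete lattice $\ell_x\gz\oplus\ell_\xi\gz$, so it is locally constant, and by connectedness of $\Lambda_\alpha$ it is globally constant. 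Hence $\Lambda_{\alpha'}$ is a single lattice translate of $\Lambda_\alpha$, i.e.\ they are in the same equivalence class.

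For \eqref{171} and \eqref{170}, I would first derive the transformation law of the phase function under an equivalence shift by $(m\ell_x,n\ell_\xi)$. Using the periodicity of $H$ in the Hamilton-Jacobi equation \eqref{q9} together with uniqueness of its solution under the initial condition $S(0,x,\eta)=x\eta$, one checks that the ansatz $\tilde S(t,x,\eta):=S_{\alpha'}(t,x+m\ell_x,\eta+n\ell_\xi)-n\ell_\xi x-m\ell_x\eta-mn\ell_x\ell_\xi$ satisfies the same Hamilton-Jacobi problem as $S_\alpha$, hence coincides with it. Substituting back into $\phi_\alpha=S_\alpha-\eta y$ and evaluating at corresponding points $\lambda_\alpha$ and $\lambda_{\alpha'}$, the polynomial corrections combine with those coming from the shift of $\eta y$ and cancel (modulo the integral quantisation \eqref{q29a}), yielding $\phi_{\alpha'}(\lambda_{\alpha'})=\phi_\alpha(\lambda_\alpha)$; combined with $W=\phi_\alpha$ on $\Lambda_\alpha$ this gives \eqref{171}. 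Since the residual correction to $\phi$ is at most affine in the phase-space variables $(x,y,\eta)$, and the Hessian is taken over those variables only (excluding $t,\tau$ by the convention stated after \eqref{q16}), statement \eqref{170} follows immediately by taking second derivatives.

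The main obstacle is deriving the transformation law for $S_\alpha$ cleanly and tracking the quadratic cocycle $mn\ell_x\ell_\xi$ arising from the bilinear initial condition. This term is what allows the various corrections appearing in $\phi$ to combine compatibly with the quantisation relation \eqref{q29a}; it is precisely this consistency that ensures that distinct lifts contributing to the trace formula assemble into a single well-defined contribution per torus orbit.
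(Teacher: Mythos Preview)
Your argument for \eqref{q172} is sound and more detailed than the paper's; your derivation of the transformation law for $S$ via uniqueness of the Hamilton--Jacobi problem is correct and is essentially what the paper sketches when it says the two generating functions ``differ only by linear terms''. This cleanly gives \eqref{170}.

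The gap is in your claim for \eqref{171}. Carry out the substitution you propose: with $x'=x+m\ell_x$, $y'=y+m\ell_x$, $\eta'=\eta+n\ell_\xi$ and your identity $S_{\alpha'}(t,x',\eta')=S_\alpha(t,x,\eta)+n\ell_\xi x+m\ell_x\eta+mn\ell_x\ell_\xi$, one finds
\[
\phi_{\alpha'}(\lambda_{\alpha'})
=S_{\alpha'}(t,x',\eta')-y'\eta'
=\phi_\alpha(\lambda_\alpha)+n\ell_\xi(x-y).
\]
The $mn\ell_x\ell_\xi$ term indeed cancels algebraically (no appeal to the quantisation condition \eqref{q29a} is needed), but the residual $n\ell_\xi(x-y)$ does \emph{not} cancel, and it is not an integer multiple of $2\pi\hbar$ at a generic point of $\Lambda$. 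Equivalently, computing $W$ directly from \eqref{q14} along the translated trajectory gives the same defect: $W(\lambda_{\alpha'})=W(\lambda_\alpha)+n\ell_\xi(x-y)$. So the assertion ``the polynomial corrections \ldots cancel'' is false as written, and with it your route to \eqref{171}.

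What is true, and what the paper actually uses (see \eqref{q200} and \eqref{q104}), is that this linear defect is precisely compensated by the terms $s\ell_\xi x-s\ell_\xi y$ produced by the Poisson summation of Lemma~\ref{a1}, since the integer $s$ coincides with the $\xi$-shift $n$. Thus the total stationary-phase value vanishes. The paper's own proof of \eqref{171} is a single sentence invoking periodicity and is no more explicit than yours on this point; the operative content for everything downstream is exactly what you do establish, namely that $\phi_{\alpha}$ and $\phi_{\alpha'}$ differ by an affine function of $(x,y,\eta)$, whence \eqref{170}.
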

\begin{proof}
Since the Hamiltonian is periodic the relation \eqref{171} follows 
for all $\lambda_\alpha$ and $\lambda_{\alpha'}$ in the set \eqref{q172}. 
This implies that $\phi_\alpha$ and $\phi_{\alpha'}$ can be chosen of the 
same type and differ only by linear terms. Since the latter do not 
contribute to the Hessian this proves \eqref{170}.
\end{proof}
A consequence for the Maslov bundle with transition functions \eqref{q15}
follows immediately.
\begin{cor}
\label{173}
With the assumptions of Lemma~\ref{q168}, as well as
$\alpha,\alpha'\in\widetilde{\alpha}$ and
$\beta,\beta'\in\widetilde{\beta}$, it follows that
\begin{equation}
\label{174}
\kappa_{\alpha\beta}=\kappa_{\alpha'\beta'},
\end{equation}
and
\begin{equation}
\label{169}
\partial_x\xi_\alpha(\lambda_\alpha) = \partial_x\xi_{\alpha'}(\lambda_{\alpha'}),
\quad\text{and}\quad 
\partial_\xi x_\alpha(\lambda_\alpha) = \partial_\xi x_{\alpha'}(\lambda_{\alpha'}).
\end{equation}
\end{cor}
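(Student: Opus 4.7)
The plan is to derive both assertions essentially for free from the Hessian identity \eqref{170} already contained in Lemma~\ref{q168}. First I would unpack the definition \eqref{q15}--\eqref{q16} of the transition functions,
\begin{equation*}
\kappa_{\alpha\beta}(\lambda) = \ue^{\frac{\ui\pi}{2}\sigma_{\alpha\beta}(\lambda)},\qquad
\sigma_{\alpha\beta}(\lambda)=\tfrac{1}{2}\bigl(\sgn\Hess\phi_\alpha(\lambda)-\sgn\Hess\phi_\beta(\lambda)\bigr),
\end{equation*}
so that \eqref{174} reduces to the two separate equalities $\sgn\Hess\phi_\alpha(\lambda_\alpha)=\sgn\Hess\phi_{\alpha'}(\lambda_{\alpha'})$ and $\sgn\Hess\phi_\beta(\lambda_\beta)=\sgn\Hess\phi_{\beta'}(\lambda_{\beta'})$ at points lying over the same torus point. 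Since the signature is a function of the matrix itself, both follow immediately from \eqref{170} applied separately in the classes $\widetilde\alpha$ and $\widetilde\beta$.

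Next I would address \eqref{169} by rewriting each derivative as a second derivative of the appropriate generating function and hence as an entry of $\Hess\phi_\alpha$. On a first-type patch, where $\phi_\alpha(t,x,y,\eta)=S_\alpha(t,x,\eta)-\eta y$, the parametrisation \eqref{localLM} gives $\xi=\partial_x S_\alpha(t,x,\eta)$, so
\begin{equation*}
\partial_x\xi_\alpha(\lambda_\alpha)=\partial_x^2 S_\alpha,
\end{equation*}
which is the $(x,x)$-entry of $\Hess\phi_\alpha$ (recalling that, by the convention stated after \eqref{q16}, the Hessian is taken only in the phase-space variables $x,y,\eta$, not in $t,\tau$). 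On a second-type patch of the form \eqref{q12}, the parametrisation \eqref{q13} yields $x=\partial_\xi S_\alpha(t,\xi,\eta)$, whence $\partial_\xi x_\alpha(\lambda_\alpha)=\partial_\xi^2 S_\alpha$, again an entry of $\Hess\phi_\alpha$. Invoking \eqref{170} at matching points $\lambda_\alpha,\lambda_{\alpha'}$ then gives \eqref{169}.

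The main obstacle I anticipate is the bookkeeping required to ensure that two patches in the same equivalence class $\widetilde\alpha$ carry generating functions of the same type, so that the entries of the two Hessians being identified actually correspond to the same geometric quantities. This is where the stipulation that $\mathfrak{V}_\alpha$ and $\mathfrak{V}_{\alpha'}$ lie in a common class $\mathfrak{V}_{\widetilde\alpha}$ does the work: such patches differ only by a lattice translation in $\ell_x\gz\oplus\ell_\xi\gz$, which preserves local surjectivity of both \eqref{q10} and \eqref{q11}, hence preserves the admissible type of generating function. Once this compatibility is noted, the identification of derivatives with Hessian entries is intrinsic, and the two equalities in \eqref{169} together with \eqref{174} follow by direct substitution into \eqref{170}.
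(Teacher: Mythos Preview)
Your proposal is correct and matches the paper's approach: the paper states the corollary ``follows immediately'' from Lemma~\ref{q168} without giving an explicit argument, and what you have written is precisely the unpacking of that immediacy---reading off \eqref{174} from the signature of \eqref{170}, and reading off \eqref{169} from the individual entries of the same Hessian identity. The only point worth tightening is that, since $\alpha,\alpha'\in\widetilde\alpha$ forces the generating functions to be of the same type (as you note in your final paragraph), the two equalities in \eqref{169} are not really separate cases but a single statement together with its reciprocal via $\partial_x\xi=(\partial_\xi x)^{-1}$; this is harmless and already implicit in your write-up.
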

Another useful result is the following.
\begin{lemma}
\label{q190}
Assume the restrictions \eqref{q26} on the size of the sets
$\mathfrak{V}_\alpha$. If the projections $\check{\Lambda}_{\alpha'}$
and $\check{\Lambda}_{\alpha}$ have non-empty intersection then there
exists a unique $\gamma\in \ell_x\gz\oplus \ell_\xi\gz$ such that
\begin{equation}
\label{q192}
\gamma\bigl(\Lambda_\alpha\bigr)\cap\Lambda_{\alpha'}\neq\emptyset,
\end{equation}
where $\gamma(\Lambda_\alpha)$ is the $\gamma$-translate of $\Lambda_\alpha$.
The converse is also true.
\end{lemma}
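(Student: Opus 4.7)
The plan is to prove both implications, with the bulk of the work going into establishing uniqueness of $\gamma$. Existence and the converse will follow from the lattice-equivariance of the Hamiltonian flow on the covering space; uniqueness is where the diameter restrictions \eqref{q26} do the essential work.

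For existence of $\gamma$, I pick $\check{\lambda}\in\check{\Lambda}_\alpha\cap\check{\Lambda}_{\alpha'}$ and lift it to $\lambda=(t,\tau,x,\xi,y,-\eta)\in\Lambda_\alpha$ and $\lambda'=(t,\tau,x',\xi',y',-\eta')\in\Lambda_{\alpha'}$. That their torus projections agree means the coordinate differences $(x'-x,\xi'-\xi)$ and $(y'-y,\eta'-\eta)$ both lie in $\ell_x\gz\oplus\ell_\xi\gz$, a priori independently. However, since the lift of $H$ to $\rz^2$ is $\ell_x\gz\oplus\ell_\xi\gz$-periodic, the flow satisfies $\Phi_t\bigl((y,\eta)+\gamma\bigr)=\Phi_t(y,\eta)+\gamma$ for every lattice element $\gamma$. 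Combined with $(x,\xi)=\Phi_t(y,\eta)$ and $(x',\xi')=\Phi_t(y',\eta')$ from the definition of $\Lambda$, this forces the two lattice differences to coincide, yielding a single $\gamma\in\ell_x\gz\oplus\ell_\xi\gz$ with $\lambda'=\gamma(\lambda)\in\gamma(\Lambda_\alpha)\cap\Lambda_{\alpha'}$. The converse is then immediate: if $\mu\in\gamma(\Lambda_\alpha)\cap\Lambda_{\alpha'}$, then $\gamma^{-1}(\mu)\in\Lambda_\alpha$ and $\mu\in\Lambda_{\alpha'}$ project to the same point of $\tz$, which therefore lies in $\check{\Lambda}_\alpha\cap\check{\Lambda}_{\alpha'}$.

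For uniqueness, I suppose that both $\gamma$ and $\gamma'$ satisfy \eqref{q192} and pick witnesses $\mu=\gamma(\nu)$ and $\mu'=\gamma'(\nu')$ with $\nu,\nu'\in\Lambda_\alpha$ and $\mu,\mu'\in\Lambda_{\alpha'}$. Denoting the $x$-component of $\gamma$ by $\gamma^x\in\ell_x\gz$, the $y$-coordinates satisfy $y(\mu)-y(\nu)=\gamma^x$ and $y(\mu')-y(\nu')=(\gamma')^x$. Since $y(\mu),y(\mu')\in\mathfrak{V}_{\alpha'}^y$ and $y(\nu),y(\nu')\in\mathfrak{V}_\alpha^y$, the triangle inequality together with \eqref{q26} yields
\begin{equation*}
|\gamma^x-(\gamma')^x|\le |y(\mu)-y(\mu')|+|y(\nu)-y(\nu')|<\ell_x,
\end{equation*}
which forces $\gamma^x=(\gamma')^x$ because the difference lies in $\ell_x\gz$. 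The same argument on the $\eta$-coordinates, using $\diam\mathfrak{V}_\alpha^\eta,\diam\mathfrak{V}_{\alpha'}^\eta<\ell_\xi/2$, gives equality of the $\xi$-components, hence $\gamma=\gamma'$. The main subtlety throughout is the existence step, where a single lattice element must match both phase-space factors of $\Lambda$ simultaneously; this reduces to the periodicity of $H$ via flow-equivariance, while the uniqueness step is a pure metric argument made possible by the half-period bound on the diameters.
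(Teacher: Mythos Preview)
Your proof is correct and follows essentially the same approach as the paper: existence via lifting a common projection point and identifying the lattice shift, uniqueness via the diameter bounds \eqref{q26} and a triangle-inequality argument. You are in fact more explicit than the paper on the existence step, spelling out why the lattice shifts on the $(x,\xi)$ and $(y,\eta)$ factors must agree via the $\ell_x\gz\oplus\ell_\xi\gz$-equivariance of the lifted flow; the paper simply asserts that the two lifts ``can only differ by a torus translation''.
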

\begin{proof}
Two points $\lambda\in\Lambda_\alpha$ and $\lambda'\in\Lambda_{\alpha'}$ 
satisfying $\check{\lambda}^{\alpha}=\check{\lambda}'^{\alpha'}$ can only 
differ by a torus translation $\gamma\in \ell_x\gz\oplus \ell_\xi\gz$. 
Suppose that there exists $\gamma'\neq\gamma$ in $\ell_x\gz\oplus \ell_\xi\gz$ 
as well as $\lambda,\tilde{\lambda}\in\Lambda_{\alpha}$ such that both 
$\gamma(\lambda)$ and $\gamma'(\tilde{\lambda})$ are in $\in\Lambda_{\alpha'}$. 
The distance in position and momentum coordinates, respectively, then is 
smaller than $\ell_x/2$ and $\ell_\xi/2$. As a consequence, the distance 
between $\gamma(\lambda)$ and $\gamma'(\tilde{\lambda})$ in the position 
and momentum coordinates contradicts the assumptions \eqref{q26} since 
$\gamma'\neq\gamma$.
\end{proof}
For the purpose of constructing discrete scFIOs we select exactly 
one representative $\Lambda_{\boldsymbol{\alpha}}$ in every equivalence class 
$\Lambda_{\widetilde{\alpha}}$ and denote the resulting generating set by 
\begin{equation}
\label{q196}
\left\{\Gamma_{\boldsymbol{\alpha}}\right\}_{\boldsymbol{\alpha}},\quad
\Gamma_{\boldsymbol{\alpha}} = \left\{\phi_{\boldsymbol{\alpha}},
\mathfrak{V}_{\boldsymbol{\alpha}},\Psi_{\boldsymbol{\alpha}},\Lambda_{\boldsymbol{\alpha}}
\right\}.
\end{equation}
We remark that
$\{\Psi_{\boldsymbol{\alpha}},\Lambda_{\boldsymbol{\alpha}}\}_{\boldsymbol{\alpha}}$
is not a partition of unity for the covering space, but we still have
\begin{equation}
\label{q197}
\sum_{\boldsymbol{\alpha}}\Psi_{\boldsymbol{\alpha}}
(\lambda_{\boldsymbol{\alpha}})=1,
\end{equation}
where the  points $\lambda_{\boldsymbol{\alpha}}$ can be different for the 
different terms in the sum, but their projections 
$\check{\lambda}_{\boldsymbol{\alpha}}$ to the torus are identical.


%
\section{Semiclassical approximation of time evolution}
\label{q166}
Before constructing a semiclassical approximation $U_{\rm scl}(t)$ of
the unitary time evolution $U(t)=\ue^{-\frac{\ui}{\hbar}\op_N(H)t}$ as
in \eqref{scevol}, we want to clarify in what sense such an
approximation is to be understood. Since only the trace of $U(t)$
enters the trace formula \eqref{TFidea}, we need to estimate the
difference $\tr(U_{\rm scl}(t)-U(t))$ in a suitable way. In the
following we show that this difference can be estimated using the
integrated Hilbert-Schmidt norm of an error term.
\begin{lemma}
\label{q50}
Let $T>0$ and let $U_{\rm scl}(t)$ be differentiable for $t\in (0,T)$,
such that
\begin{equation}
\label{q51}
\int_0^T\left\|\bigl(\ui\hbar\partial_t - 
\op_N(H)\bigr)U_{\rm scl}(t)\right\|_{\HS}^2\ud t = O\bigl(\hbar^{\beta}\bigr),
\end{equation}
for some $\beta>0$. Assume that the initial data 
satisfies 
\begin{equation}
\label{q52}
\left\|U_{\rm scl}(0)-\eins\right\|_{\HS}=O(\hbar^{\infty}).
\end{equation}
Then,
\begin{equation}
\label{q53}
\left|\tr\left(U_{\rm scl}(t)-U(t)\right)\right| =
O\bigl(\hbar^{\frac{\beta-3}{2}}\bigr).
\end{equation}
\end{lemma}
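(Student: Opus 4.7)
The plan is to run a standard Duhamel argument and then control the trace of the resulting error operator using the Cauchy--Schwarz inequality $|\tr(AB)|\leq\|A\|_{\HS}\|B\|_{\HS}$, together with the elementary bound $\|U(t)\|_{\HS}=\sqrt{N}=O(\hbar^{-1/2})$ that follows from unitarity and $\tr\eins=N$ combined with \eqref{q29a}.

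First I would introduce the error operator $R(t):=\bigl(\ui\hbar\partial_t-\op_N(H)\bigr)U_{\rm scl}(t)$ and the difference $D(t):=U_{\rm scl}(t)-U(t)$. Since $U(t)$ exactly solves the Schr\"odinger equation with $U(0)=\eins$, $D(t)$ satisfies the inhomogeneous problem
\begin{equation*}
\ui\hbar\,\partial_t D(t)=\op_N(H)\,D(t)+R(t),\qquad D(0)=U_{\rm scl}(0)-\eins.
\end{equation*}
By Duhamel's principle this yields the representation
\begin{equation*}
D(t)=U(t)\,D(0)+\frac{1}{\ui\hbar}\int_0^t U(t-s)\,R(s)\,\ud s,
\end{equation*}
valid for $t\in(0,T)$.

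Next I would take the matrix trace of both sides and estimate each summand. For the boundary term, Cauchy--Schwarz on Hilbert--Schmidt gives $|\tr(U(t)D(0))|\leq\|U(t)\|_{\HS}\|D(0)\|_{\HS}=\sqrt{N}\cdot O(\hbar^\infty)=O(\hbar^\infty)$ by \eqref{q52}. For the integral term, applying the same inequality pointwise in $s$ produces
\begin{equation*}
\left|\tr\!\left(\frac{1}{\ui\hbar}\int_0^t U(t-s)R(s)\,\ud s\right)\right|
\leq\frac{\sqrt{N}}{\hbar}\int_0^t\|R(s)\|_{\HS}\,\ud s.
\end{equation*}
To convert the $L^1$ norm into the $L^2$ quantity that actually appears in the hypothesis \eqref{q51}, I would apply Cauchy--Schwarz in the time variable, obtaining $\int_0^t\|R(s)\|_{\HS}\,\ud s\leq\sqrt{T}\bigl(\int_0^T\|R(s)\|_{\HS}^2\,\ud s\bigr)^{1/2}=O(\hbar^{\beta/2})$.

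Combining the factor $\hbar^{-1}$ from Duhamel, the factor $\sqrt{N}=O(\hbar^{-1/2})$, and the factor $O(\hbar^{\beta/2})$ from the error assumption yields the claimed bound $O(\hbar^{(\beta-3)/2})$, with the initial-data contribution being absorbed into $O(\hbar^\infty)$. No serious obstacle is expected; the only points requiring care are keeping track of the normalisation of the Hilbert--Schmidt norm (which must be the one compatible with the matrix trace appearing in \eqref{TFidea}, not the $N$-rescaled inner product) and the fact that the $\sqrt{N}$ factor is precisely what forces the $-3/2$ shift in the exponent, so that a useful estimate only results when $\beta>3$ in subsequent applications.
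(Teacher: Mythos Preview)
Your argument is correct and coincides with the paper's proof: the paper also sets $R(t)=(\ui\hbar\partial_t-\op_N(H))U_{\rm scl}(t)$, derives the Duhamel representation by integrating $\partial_\sigma\bigl(U(-\sigma)U_{\rm scl}(\sigma)\bigr)=-\tfrac{\ui}{\hbar}U(-\sigma)R(\sigma)$, and then bounds the trace via $\tr|AB|\le\|A\|_{\HS}\|B\|_{\HS}$ together with $\|U(t)\|_{\HS}=\sqrt{N}$ and Cauchy--Schwarz in $t$. The only cosmetic difference is that the paper estimates $\tr|AB|$ rather than $|\tr(AB)|$, which is marginally stronger but leads to the same final bound.
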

\begin{proof}
Let $t\in (0,T)$ and define
\begin{equation}
\label{q54}
R(t)=\bigl(\ui\hbar\partial_t-\op_N(H)\bigr)U_{\rm scl}(t)
\quad\text{and}\quad
Z=U_{\rm scl}(0)-\eins.
\end{equation}
We then have 
\begin{equation}
\label{q56}
\partial_t\bigl(U(-t)U_{\rm scl}(t)\bigr) = -\frac{\ui}{\hbar}U(-t)R(t).
\end{equation}
Hence,
\begin{equation}
\label{q57}
\begin{split}
\left|\tr\bigl(U_{\rm scl}(t)-U(t)\bigr)\right|
  &=\left|\tr\bigl[U(t)\bigl(U(-t)U_{\rm scl}(t)-\eins\bigr)\bigr]\right|\\
  &=\Bigl|\tr\Bigl[U(t)\Bigl(Z+\int_0^t\partial_\sigma\bigl(U(-\sigma)
     U_{\rm scl}(\sigma)\bigr)\ud\sigma\Bigr)\Bigr]\Bigr|\\
  &\leq\tr\bigl|U(t)Z\bigr|+\frac{1}{\hbar}\int_0^t\tr\bigl|U(t-\sigma)R(\sigma)
     \bigr|\,\ud\sigma\\
  &\leq \bigl\|U(t)\bigr\|_{\HS} \, \bigl\|Z\bigr\|_{\HS}
  + \frac{1}{\hbar} \int_0^t \bigl\|U(t-\sigma)\bigr\|_{\HS} \, 
    \bigl\|R(\sigma)\bigr\|_{\HS} \, \ud\sigma,
\end{split}
\end{equation}
where in the last step we have used that trace norm and
Hilbert-Schmidt norm satisfy $\tr|AB| \leq \|A\|_{\HS} \|B\|_{\HS}$,
see e.g.\ \cite[App.~to~IX.4, Prop.~5]{Reed:1975}. Recalling that
$\|Z\|_{\HS}=O(\hbar^\infty)$, keeping in mind that
$\|U(t)\|_{\HS}=\sqrt{N}=O(\hbar^{-1/2})$ and using
\begin{equation}
  \int_0^t \bigl\|R(\sigma)\bigr\|_{\HS} \, \ud\sigma
  \leq \sqrt{t\int_0^t \bigl\|R(\sigma)\bigr\|_{\HS}^2 \, \ud\sigma}
  = O(\hbar^{\beta/2})
\end{equation}
establishes the desired estimate since $t\in(0,T)$.
\end{proof}
The construction of a semiclassical approximation $U_{\rm scl}(t)$ involves
a quantisation of the Lagrangian manifold $\Lambda$ \eqref{q5}. To this end we
have to introduce a variant of scFIOs \cite{Meinrenken:1994} suitable for 
the present context.
\begin{defn}
\label{q58}
Let $\Lambda$ be the Lagrangian submanifold of $\rz^2\times\tz\times\tz$
defined in \eqref{q5} with generating set \eqref{q196}. An operator 
$U_{\rm scl}(t)\in\Mat_N(\kz)$, $t\in(0,T)$, is then said to be a discrete 
scFIO associated with the Lagrangian manifold $\Lambda$, if
\begin{equation}
\label{q59}
U_\mathrm{scl}(t)=\sum_{\boldsymbol{\alpha}}U_{\boldsymbol{\alpha}}(t),
\end{equation}
such that every $U_{\boldsymbol{\alpha}}$ has an asymptotic expansion in the Borel sense 
\cite[Section 4.4.2.]{Zworski:2012},
\begin{equation} 
\label{q60}
U_{\boldsymbol{\alpha}}(t) \sim \sum_{k\in\nz_0}\left(\frac{\hbar}{\ui}\right)^{k}
U_{\boldsymbol{\alpha},k}(t),
\end{equation}
where
\begin{equation}
\label{q167}
U_{\boldsymbol{\alpha},k}(t)_{mn} = \left(\frac{1}{2\pi\hbar}\right)^{\frac{J+1}{2}}
\frac{\ell_x}{N}\int_{\rz^J}a_{\boldsymbol{\alpha},k}(t,\hat{x}_m,\hat{y}_n,
\boldsymbol{\theta})\,\ue^{\frac{\ui}{\hbar}\phi_{\boldsymbol{\alpha}}
(t,\hat{x}_m,\hat{y}_n,\boldsymbol{\theta})}\ \ud\boldsymbol{\theta}.
\end{equation}
Here $\phi_{\boldsymbol{\alpha}}\in
C^\infty(\rz\times\rz^2\times\rz^J)$ are generating functions for the
local pieces $\Lambda_{\boldsymbol{\alpha}}$ of $\Lambda$, and
$a_{\boldsymbol{\alpha}}\in C_0^\infty(\rz\times\rz^2\times\rz^J)$.
\end{defn}
As the only contribution to $U_{\boldsymbol{\alpha}}(t)$ that exceeds
$O(\hbar^\infty)$ in magnitude derives from the Lagrangian manifold 
$\Lambda_{\boldsymbol{\alpha}}$, the integral in \eqref{q167} can be converted 
into an integral over either $\mathfrak{V}_{\boldsymbol{\alpha}}^\eta$, when
$J=1$, or over $\mathfrak{V}_{\boldsymbol{\alpha}}^{\xi,\eta}$, when $J=2$. In the 
first case the integration variable is $\theta=\eta$ and we denote the 
generating function by $\phi^I(t,x,y,\eta)=S^I(t,x,\eta)-y\eta$. In the 
second case we have that $\boldsymbol{\theta}=(\xi,\eta)$ and denote the 
generating function by $\phi^{II}(t,x,y,\xi,\eta)=x\xi-y\eta-S^{II}(t,\xi,\eta)$. 
Furthermore, we choose $\Lambda_{\boldsymbol{\alpha}}$ small enough such that if
$\phi_{\boldsymbol{\alpha}}$ and $\phi_{\boldsymbol{\alpha}'}$ are not of the same 
form $\phi^I$ or $\phi^{II}$, $\lambda\in\check{\Lambda}_{\boldsymbol{\alpha}}
\cap\check{\Lambda}_{\boldsymbol{\alpha}'}$ implies that all derivatives
$\partial_x\xi\bigl(\hat{\lambda}^{\boldsymbol{\alpha}'}\bigr)$,
$\partial_x\xi\bigl(\hat{\lambda}^{\boldsymbol{\alpha}}\bigr)$,
$\partial_\xi x\bigl(\hat{\lambda}^{\boldsymbol{\alpha}'}\bigr)$ and
$\partial_\xi x\bigl(\hat{\lambda}^{\boldsymbol{\alpha}}\bigr)$ do not
vanish.

In order to determine the discrete scFIO that satisfies the estimate
\eqref{q51} to a sufficient order in $\hbar$, we choose a partition
$\{\phi_{\boldsymbol{\alpha}},\mathfrak{V}_{\boldsymbol{\alpha}}\}_{\boldsymbol{\alpha}}$, insert
\eqref{q59} into the left-hand side of \eqref{q51} and define
\begin{equation}
\label{q54a}
R_{\boldsymbol{\alpha},k}(t)=\bigl(\ui\hbar\partial_t-\op_N(H)\bigr)
U_{\boldsymbol{\alpha},k}(t)
\end{equation}
for each coordinate patch $\boldsymbol{\alpha}$ and order
$k\in\nz_0$. Together with \eqref{q54} this implies
\begin{equation}
\label{q155}
R(t)=\sum_{\boldsymbol{\alpha},k}\left(\frac{\hbar}{\ui}\right)^{k}
R_{\boldsymbol{\alpha},k}(t).
\end{equation} 
With Lemma~\ref{q50} in mind we then have to estimate
\begin{equation}
\label{q156}
\tr\bigl(R(t)^\ast R(t)\bigr)=\sum_{\boldsymbol{\alpha}',k',{\boldsymbol{\alpha}},k}
\left(\frac{\hbar}{\ui}\right)^{k+k'}\tr\bigl(R_{\boldsymbol{\alpha}',k'}(t)^\ast 
R_{\boldsymbol{\alpha},k}(t)\bigr)
\end{equation}
semiclassically. This will provide conditions to be satisified by the 
functions $a_{\boldsymbol{\alpha},k}$ in order for $R(t)$ to vanish to any 
finite order in $\hbar$. 

As a first step we now determine representations of
$R_{\boldsymbol{\alpha},k}(t)$ for the two cases \eqref{q8} and
\eqref{q12} separately.
\subsection{The case $\phi^I(t,x,y,\eta)=S^I(t,x,\eta)-y\eta$}
\label{q66}
In this case ${\Lambda}_{\boldsymbol{\alpha}}$ can locally be
parametrised as in \eqref{localLM} using coordinates $(t,x,\eta)$. We
define a function $r_{\boldsymbol{\alpha}}$ through a Taylor
expansion,
\begin{equation}
\label{q67}
S^I_{\boldsymbol{\alpha}}(t,x,\xi) = S^I_{\boldsymbol{\alpha}}(t,y,\xi) +
\partial_x S^I_{\boldsymbol{\alpha}}(t,y,\xi)\,(x-y) + 
r_{\boldsymbol{\alpha}}(t,y,x-y,\xi),
\end{equation}
such that
\begin{equation}
\label{q68}
r_{\boldsymbol{\alpha}}(t,y,0,\xi) = \partial_x r_{\boldsymbol{\alpha}}(t,y,0,\xi)=0.
\end{equation}
We also introduce the abbreviations
\begin{equation}
\label{q70}
\xi_{t,x,\eta} := \partial_xS^I_{\boldsymbol{\alpha}}(t,x,\eta)\quad\text{and}\quad 
y_{t,x,\eta} := \partial_\eta S^I_{\boldsymbol{\alpha}}(t,x,\eta),
\end{equation}
and define the set
\begin{equation}
\label{q70a}
\mathfrak{V}^I_{\boldsymbol{\alpha}} := \left\{(t,x,y,\eta);\ 
(t,\tau,x,y_{t,x,\eta},\xi_{t,x,\eta},-\eta) \in \Lambda_{\boldsymbol{\alpha}},
|y-y_{t,x,\eta}|<\epsilon\right\}.
\end{equation}
The following result determines the application of
$\big(\ui\hbar\partial_t -\op_N(H)\big)$ to an expression of the form
\eqref{q167} with $J=1$.
\begin{lemma}
\label{q71}
Assume that $a^I_{\boldsymbol{\alpha},k}$ is smooth in
$(t,x,y,\eta)\in\mathfrak{V}_{\boldsymbol{\alpha}}$. Then there exists
a smooth function $b_{\boldsymbol{\alpha},k,\hbar}$ with an asymptotic
expansion in $\hbar$ that is uniform in $\mathfrak{V}^I_{\boldsymbol{\alpha}}$,
\begin{equation}
\label{q72}
b_{\boldsymbol{\alpha},k,\hbar}\sim\sum_{n=0}^\infty\hbar^n b_{\boldsymbol{\alpha},k,n},
\end{equation} 
such that the quantity \eqref{q54a} takes the form
\begin{equation}
\label{q72a}
R_{\boldsymbol{\alpha},k}(t)_{mn} = \frac{1}{2\pi\hbar}\frac{\ell_x}{N}
\int_{\mathfrak{V}_{\boldsymbol{\alpha}}^\eta}c^I_{\boldsymbol{\alpha},k}(t,\hat{x}_m,\hat{y}_n,\eta)\,
\ue^{\frac{\ui}{\hbar}(S^I_{\boldsymbol{\alpha}}(t,\hat{x}_m,\eta)-\hat{y}_m\eta)}\ \ud\eta.
\end{equation}
Here,
\begin{equation}
\label{q73}
\begin{split}
c^I_{\boldsymbol{\alpha},k}(t,x,y,\eta)
  &:=\ui\hbar(\partial_t a^I_{\boldsymbol{\alpha},k})(t,x,y,\eta) 
      -\frac{\hbar}{2\ui}(\partial_x\partial_{\xi}H)(x,\xi_{t,x,\eta})
      a^I_{\boldsymbol{\alpha},k}(t,x,y,\eta)\\
  &\quad-\frac{\hbar}{\ui}(\partial_{\xi}H)(x,\xi_{t,x,\eta})
     (\partial_x a^I_{\boldsymbol{\alpha},k})(t,x,y,\eta)
     -\left(\frac{\hbar}{\ui}\right)^2 b_{\boldsymbol{\alpha},k,\hbar}(t,x,y,\eta).
\end{split}
\end{equation}
\end{lemma}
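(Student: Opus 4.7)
My plan is to compute $R_{\boldsymbol{\alpha},k}(t)_{mn}$ directly from \eqref{q54a} and \eqref{q167}, treating the $\ui\hbar\partial_t$ and $\op_N(H)$ contributions separately. For the time derivative I would differentiate under the integral in \eqref{q167}, which places into the amplitude slot of the same oscillatory integral (phase unchanged) the contribution $-\partial_t S^I_{\boldsymbol{\alpha}}(t,\hat{x}_m,\eta)\,a^I_{\boldsymbol{\alpha},k} + \ui\hbar\,\partial_t a^I_{\boldsymbol{\alpha},k}$. The second term appears verbatim in \eqref{q73}; the first is to be cancelled by the leading-order contribution of $\op_N(H) U_{\boldsymbol{\alpha},k}$.

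For that second contribution I would apply Lemma~\ref{q42} in the $m$-index, converting $(\op_N(H) U_{\boldsymbol{\alpha},k})_{mn}$, up to $O_M(\hbar^\infty)$, into a finite sum over the Taylor index $l$ together with the finite range $|m-m'|\leq N^{1/M}$, and then substitute \eqref{q167} for $(U_{\boldsymbol{\alpha},k})_{m'n}$. The key arithmetic identity $2\pi(m-m')/\ell_\xi = (\hat{x}_m-\hat{x}_{m'})/\hbar$, which is a direct consequence of \eqref{q29a} and \eqref{q28}, converts $\ue^{2\pi\ui(m-m')\xi/\ell_\xi}$ into $\ue^{\frac{\ui}{\hbar}(\hat{x}_m-\hat{x}_{m'})\xi}$. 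Next I would Taylor expand $S^I_{\boldsymbol{\alpha}}$ around $y=\hat{x}_m$ via \eqref{q67}, so that the combined phase splits as $\tfrac{1}{\hbar}\bigl(S^I_{\boldsymbol{\alpha}}(t,\hat{x}_m,\eta)-\hat{y}_n\eta\bigr) + \tfrac{1}{\hbar}(\xi_{t,\hat{x}_m,\eta}-\xi)(\hat{x}_{m'}-\hat{x}_m) + \tfrac{1}{\hbar}r_{\boldsymbol{\alpha}}$. The $m'$-sum is then a discrete oscillatory sum in the variable $\xi$, to which I would apply the stationary phase theorem for sums from Appendix~\ref{20}, localising the $\xi$-integration at $\xi=\xi_{t,\hat{x}_m,\eta}$ up to a complete asymptotic expansion in $\hbar$.

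Having done so, I would Taylor expand each $\partial_\xi^l\partial_x^l H(\hat{x}_m,\xi)$ in $\xi$ around $\xi_{t,\hat{x}_m,\eta}$ and $a^I_{\boldsymbol{\alpha},k}$ in its first argument around $\hat{x}_m$, and reorganise the resulting double expansion by powers of $\hbar$. At order $\hbar^0$ only $H(\hat{x}_m,\xi_{t,\hat{x}_m,\eta})\,a^I_{\boldsymbol{\alpha},k}$ survives, which cancels $-\partial_t S^I_{\boldsymbol{\alpha}}\,a^I_{\boldsymbol{\alpha},k}$ by virtue of the Hamilton--Jacobi equation \eqref{q9}. At order $\hbar^1$, the $l=1$ term of Lemma~\ref{q42} supplies the Weyl symmetrisation correction $-\frac{\hbar}{2\ui}(\partial_x\partial_\xi H)(x,\xi_{t,x,\eta})a^I_{\boldsymbol{\alpha},k}$, while the linear-in-$x$ Taylor contribution of $a^I_{\boldsymbol{\alpha},k}$ combined with the linear-in-$\xi$ part of $H$ yields the transport term $-\frac{\hbar}{\ui}(\partial_\xi H)(x,\xi_{t,x,\eta})\partial_x a^I_{\boldsymbol{\alpha},k}$, both matching \eqref{q73} exactly. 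Everything else -- higher $l$ in Lemma~\ref{q42}, higher-order Taylor coefficients, and the cubic remainder $r_{\boldsymbol{\alpha}}$ controlled by \eqref{q68} -- I would collect into $-(\hbar/\ui)^2 b_{\boldsymbol{\alpha},k,\hbar}$, and invoke Borel summation (as in \eqref{q60}) to realise the uniform expansion \eqref{q72}.

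The hard part will be the clean bookkeeping of the interleaved discrete/continuous stationary phase: the $m'$-sum and the $\xi$-integral must be handled with uniform control in $(t,\hat{x}_m,\hat{y}_n,\eta)$ on $\mathfrak{V}^I_{\boldsymbol{\alpha}}$, so that both the $O_M(\hbar^\infty)$ error from Lemma~\ref{q42} and the tails of the stationary-phase expansion remain negligible after the subsequent $\eta$-integration. Matching the $\hbar$-powers arising from Lemma~\ref{q42}, from the remainder $r_{\boldsymbol{\alpha}}$ in \eqref{q67}, and from the stationary-phase corrections of Appendix~\ref{20} is technically delicate but follows a standard pattern once the combinatorics of the double Taylor expansion is set up; the compactness of $\supp a^I_{\boldsymbol{\alpha},k}$ in $\mathfrak{V}^I_{\boldsymbol{\alpha}}$ and the smoothness of $H$ and $S^I_{\boldsymbol{\alpha}}$ there deliver the uniformity required for \eqref{q72}.
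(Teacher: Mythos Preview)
Your strategy is sound and hits all the essential points: differentiate under the integral for the $\ui\hbar\partial_t$ piece, apply the Weyl quantisation formula to $\op_N(H)U_{\boldsymbol\alpha,k}$, Taylor expand $S^I_{\boldsymbol\alpha}$ via \eqref{q67}, localise $\xi$ at $\xi_{t,\hat x_m,\eta}$, cancel the $\hbar^0$ term by Hamilton--Jacobi, and read off the transport terms. This is exactly the WKB pattern the paper follows.

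The organisational route differs, however. The paper uses Lemma~\ref{q39} rather than Lemma~\ref{q42}, keeping $H$ at the midpoint $(x_m+x_l)/2$, and Taylor expands the \emph{full product} $H\bigl(\tfrac{x_m+x_l}{2},\xi\bigr)\,a^I_{\boldsymbol\alpha,k}(t,\hat x_l,\hat y_n,\eta)\,\ue^{\frac{\ui}{\hbar}r_{\boldsymbol\alpha}}$ in $(x_m-x_l)$; each power $(x_m-x_l)^j$ is converted via \eqref{q46} to $(\hbar/\ui)^j\partial_\xi^j$ and integrated by parts. The crucial difference is what happens next: rather than invoke Appendix~\ref{20}, the paper observes that after these manipulations the $l$-summand depends on $l$ only through the phase $\ue^{\frac{\ui}{\hbar}(\xi-\xi_{t,\hat x_m,\eta})(x_m-x_l)}$, so that the restricted sum can be extended to $l\in\gz$ (the tails being $O(|m-l|^{-\infty})$) and then evaluated \emph{exactly} as a Fourier series, collapsing the $\xi$-integral to the value at $\xi=\xi_{t,\hat x_m,\eta}$. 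No stationary-phase expansion is needed for this step; the only asymptotics come from the Taylor expansion itself. Your route via Lemma~\ref{q42} and Appendix~\ref{20} would also work, but it layers in an additional Poisson-summation/stationary-phase argument that the paper avoids; in particular Proposition~\ref{a6} is reserved for the case $\phi^{II}$ (Lemma~\ref{q80}), not used here.

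Two small points. First, $r_{\boldsymbol\alpha}$ is quadratic in $(x-y)$ by \eqref{q68}, not cubic; its contribution enters at order $\hbar$ in $b_{\boldsymbol\alpha,k,\hbar}$ once $(x_{m'}-x_m)^2$ is traded for $(\hbar/\ui)^2\partial_\xi^2$. Second, your order of operations is slightly tangled: if the $m'$-sum has already been dealt with and $\xi$ localised, the subsequent ``Taylor expand $H$ in $\xi$'' and ``Taylor expand $a^I$ in $x$'' are redundant --- those expansions should happen \emph{before} the sum is evaluated (as in the paper), so that the summand becomes independent of $m'$ apart from the oscillatory phase.
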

\begin{proof}
  The proof follows the standard WKB-type strategy (see e.g.
  \cite[p.\,105]{Duistermaat:1973}). We use the representation
  \eqref{q167} with $J=1$, and the variable $\boldsymbol{\theta}$ is
  denoted as $\eta$. The first step in evaluating \eqref{q54} is to
  use Lemma~\ref{q39} for the application of $\op_N(H)$ to
  $U_{\boldsymbol{\alpha},k}$ with the matrix elements
  $({U_{\boldsymbol{\alpha},k}})_{ln}$, where $n$ is fixed, in place
  of $\psi_{l}$, as well as a summation over $l$ with $|l-m|\leq
  N^{\frac{1}{M}}$ in \eqref{q40}.  For the phase function in
  \eqref{q167} we use \eqref{q67}, and then perform a Taylor expansion
  \eqref{Taylorexp} of
\begin{equation}
\label{q75}
H\left(\frac{x_m+x_l}{2},\xi\right)
a^I_{\boldsymbol{\alpha},k}(t,\hat{x}_l,\hat{y}_n,\eta) \,
\ue^{\frac{\ui}{\hbar}r_{\boldsymbol{\alpha}}(t,\hat{x}_l,\hat{x}_m-\hat{x}_l,\eta)}
\end{equation}
in $x_m-x_l$. We integrate by parts, use \eqref{q46} and the
periodicity of $H$, and obtain an asymptotic expansion in powers of 
$\hbar$ multiplied by $\ue^{\frac{\ui}{\hbar}(\xi-\partial_x S^I(t,\hat{x}_m,\eta))(x_m-x_l)}$.  
If the sum over $l$ extended over all $l\in\gz$, it would be a Fourier 
series, with the $\xi$-integrals as Fourier coefficients. The latter, 
however, are of size $O(|m-l|^{-\infty})$. Hence, while adding an error of 
size $O(\hbar^\infty)$ one can indeed extend the restricted sum over $l$ to
all $l\in\gz$. The resulting Fourier series can be evaluated, thereby
replacing the variable $\xi$ with $\xi_{t,\hat{x}_m,\eta}$.

When applying $\ui\hbar\partial_t$ to $U_{\boldsymbol{\alpha},k}$ we use the 
Hamilton-Jacobi equation \eqref{q9} and evaluate the coefficients of the 
lowest powers in $\hbar$. This gives \eqref{q73} by noting that 
$\supp a^I_{\boldsymbol{\alpha},k}\subset\mathfrak{V}_{\boldsymbol{\alpha}}$.
\end{proof}
For later reference we note that the quantity \eqref{q73}, viewed as
the coefficient function of a half-density, can be identified as a
Lie-derivative along the Hamiltonian vector field generated by the
(extended) classical Hamiltonian,
\begin{equation}
\label{q100}
c^I_{\boldsymbol{\alpha},k}(\lambda)\ud\lambda^{\frac{1}{2}} = \ui\hbar 
\mathcal{L}_{X_{H_{\rm ext}}}\left(a^I_{\boldsymbol{\alpha},k}(\lambda)\ud\lambda^{\frac{1}{2}}
\right)+O(\hbar^2), 
\end{equation}
see \cite[(10.2.22),\,(10.2.27)]{Zworski:2012}.
\subsection{The case $\phi^{II}(t,x,y,\xi,\eta)=x\xi-y\eta-S^{II}(t,\xi,\eta)$}
\label{q76}
In the remaining case, a parametrisation of $\Lambda_{\boldsymbol{\alpha}}$ 
in the sense of $\eqref{q11}$ can be achieved with coordinates 
$(t,\xi,\eta)$. We also introduce the abbreviations
\begin{equation}
\label{q77}
x_{t,\xi,\eta}:=\partial_\xi S^{II}_{\boldsymbol{\alpha}}(t,\xi,\eta)\quad\text{and}\quad
y_{t,\xi,\eta}:=-\partial_\eta S^{II}_{\boldsymbol{\alpha}}(t,\xi,\eta),
\end{equation}
and choose a smooth function $\kappa_{\epsilon}:\rz\to[0,1]$ satisfying
\begin{equation}
\label{q78}
\kappa_{\epsilon}\left(x\right)=
\begin{cases}
0, & |x|>\frac{2\epsilon}{3},\\
1, & |x|<\frac{\epsilon}{3},
\end{cases}
\end{equation}
that is used to localise around initial and final points \eqref{q77} of classical 
trajectories. We now assume that $A_{\boldsymbol{\alpha},k}$ is a function on 
$\Lambda_{\boldsymbol{\alpha}}$, therefore depending on the variables $(t,\xi,\eta)$,
and define 
\begin{equation}
\label{q79}
a^{II}_{\boldsymbol{\alpha},k}(t,\hat{x}_m,\hat{y}_n,\xi,\eta) = 
\kappa_{\epsilon}(\hat{x}_m-x_{t,\xi,\eta})\kappa_{\epsilon}(\hat{y}_n-y_{t,\xi,\eta})
A_{\boldsymbol{\alpha},k}(t,\xi,\eta).
\end{equation}
This function is localised on the set
\begin{equation}
\label{q91}
\begin{split}
\mathfrak{V}^{II}_{\boldsymbol{\alpha},\epsilon}:=
  &\bigl\{(t,x,y,\xi,\eta); \ (t,\tau,x_{t,\xi,\eta},\xi,y_{t,\xi,\eta},-\eta)\in
     \Lambda_{\boldsymbol{\alpha}},\\
  &\qquad |x-x_{t,\xi,\eta}|<\epsilon, \ |y-y_{t,x,\eta}|<\epsilon\bigr\}.
\end{split}
\end{equation}
Note that here, in contrast to the previous case \eqref{q70a}, we admit
the variables $x,y$ to be from a (small) neighbourhood of 
$\Lambda_{\boldsymbol{\alpha}}$.

In the present case, the application of $\ui\hbar\partial_t -\op_N(H)$ to the
respective expression of the form \eqref{q167} gives the following results.
\begin{lemma}
\label{q80}
Using a function of the form \eqref{q79} in \eqref{q167} one gets
\begin{equation}
\label{q82}
R_{\boldsymbol{\alpha},k}(t)_{mn} = \frac{1}{(2\pi\hbar)^{\frac{3}{2}}}\frac{\ell_x}{N}
\int_{\mathfrak{V}_{\boldsymbol{\alpha}}^{\xi,\eta}}c^{II}_{\boldsymbol{\alpha},k}(t,\hat{x}_m,\hat{y}_n,\xi,\eta)\,
\ue^{\frac{\ui}{\hbar}(\hat{x}_m\xi-\hat{y}_n\eta-S^{II}_{\boldsymbol{\alpha}}(t,\xi,\eta))}\ \ud\xi\ud\eta,
\end{equation}
where
\begin{equation}
\begin{split}
\label{cIIresult}
c^{II}_{\boldsymbol{\alpha},k}(t,x,y,\xi,\eta) 
 &:= \ui\hbar\partial_t a^{II}_{\boldsymbol{\alpha},k}(t,x,y,\xi,\eta) + \frac{\hbar}{\ui}
     (\partial_x H)(x_{t,\xi,\eta},\xi)(\partial_\xi a^{II}_{\boldsymbol{\alpha},k})(t,x,y,\xi,\eta)\\
 &\quad+\frac{\hbar}{2\ui}(\partial_x\partial_\xi H)(x_{t,\xi,\eta},\xi)
     a^{II}_{\boldsymbol{\alpha},k}(t,x,y,\xi,\eta)+O\bigl(\hbar^2\bigr). 
\end{split}
\end{equation}
\end{lemma}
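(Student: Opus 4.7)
The proof proceeds by the same WKB-type strategy as in Lemma~\ref{q71}, adapted to the new phase $\phi^{II}$ and to the two-variable integration $\boldsymbol{\theta}=(\xi,\eta)$. For the application of $\op_N(H)$ I would use Lemma~\ref{q39} to truncate the matrix sum to $|l-m|\le N^{1/M}$ with $O(\hbar^\infty)$ error, and then, writing $z:=\hat{x}_l-\hat{x}_m$, rewrite the exponential arising from the Weyl kernel and the amplitude phase as
\begin{equation*}
  e^{\frac{\ui}{\hbar}\hat{x}_l\xi}\,e^{\frac{2\pi\ui(m-l)\xi''}{\ell_\xi}}
  =e^{\frac{\ui}{\hbar}\hat{x}_m\xi}\,e^{-\frac{\ui}{\hbar}z(\xi''-\xi)}.
\end{equation*}
Taylor-expanding $H\bigl((\hat{x}_m+\hat{x}_l)/2,\xi''\bigr)$ in $z/2$ and $a^{II}_{\boldsymbol{\alpha},k}(t,\hat{x}_l,\hat{y}_n,\xi,\eta)$ in $z$, I would then convert each power of $z$ into a power of $\hbar$ by combining \eqref{q46} with integration by parts in $\xi''$; the boundary terms vanish by the $\ell_\xi$-periodicity of $H$. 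Exactly as in the proof of Lemma~\ref{q71}, the truncated sum over $l$ can be extended to all of $\gz$ at the cost of $O(\hbar^\infty)$, after which Poisson summation collapses the $\xi''$-integral to the point $\xi''\equiv\xi\pmod{\ell_\xi}$, and periodicity of $H$ in its second argument gives $\partial_{\xi''}^r H(\hat{x}_m,\xi'')\big|_{\xi''=\xi}=\partial_\xi^r H(\hat{x}_m,\xi)$.

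For the time derivative I would use the Hamilton-Jacobi equation associated with $\phi^{II}$, derived from the requirement that \eqref{q13} parametrises $\Lambda$, which reads $\partial_tS^{II}_{\boldsymbol{\alpha}}(t,\xi,\eta)=H(x_{t,\xi,\eta},\xi)$. This produces a leading-order term that cancels against the leading $H(\hat{x}_m,\xi)\,a^{II}_{\boldsymbol{\alpha},k}$ from $\op_N(H)U_{\boldsymbol{\alpha},k}$ only up to the discrepancy $H(\hat{x}_m,\xi)-H(x_{t,\xi,\eta},\xi)$. Expanding this discrepancy in $\hat{x}_m-x_{t,\xi,\eta}$ and using the identity
\begin{equation*}
  (\hat{x}_m-x_{t,\xi,\eta})\,e^{\frac{\ui}{\hbar}\phi^{II}}\Big|_{x=\hat{x}_m,\,y=\hat{y}_n}
  =\frac{\hbar}{\ui}\,\partial_\xi e^{\frac{\ui}{\hbar}\phi^{II}}\Big|_{x=\hat{x}_m,\,y=\hat{y}_n},
\end{equation*}
which follows from $\partial_\xi\phi^{II}=x-x_{t,\xi,\eta}$, followed by integration by parts in $\xi$ on the compact set $\mathfrak{V}_{\boldsymbol{\alpha}}^{\xi,\eta}$, each factor of $\hat{x}_m-x_{t,\xi,\eta}$ becomes an additional power of $\hbar$. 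Collecting contributions order by order in $\hbar$, the $\hbar^0$ terms cancel by the Hamilton-Jacobi equation, and the surviving $\hbar^1$ pieces assemble into \eqref{cIIresult}.

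The main obstacle is the careful bookkeeping of the $\hbar^1$ contributions. The Weyl Taylor expansion produces both a mid-point correction $\tfrac{1}{2}\partial_x\partial_\xi H\cdot a^{II}_{\boldsymbol{\alpha},k}$ and a cross term $\partial_\xi H\cdot\partial_x a^{II}_{\boldsymbol{\alpha},k}$, while differentiating $\partial_x H(x_{t,\xi,\eta},\xi)$ in $\xi$ by the chain rule produces an additional $\partial_\xi^2 S^{II}_{\boldsymbol{\alpha}}\cdot\partial_x^2 H$ contribution. One must verify that these pieces, together with the quadratic Taylor remainder in $\hat{x}_m-x_{t,\xi,\eta}$ and the error incurred when replacing $\hat{x}_m$ by $x_{t,\xi,\eta}$ in the coefficients of already $\hbar$-order terms, either combine into the compact form \eqref{cIIresult} or are absorbed into the $O(\hbar^2)$ remainder. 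This combinatorial cancellation is the analogue, in the $(t,\xi,\eta)$-coordinates on $\Lambda_{\boldsymbol{\alpha}}$, of the half-density Lie-derivative identity \eqref{q100} used in the case $\phi^I$, and it can be verified by tracking the Taylor orders in $z$ and in $\hat{x}_m-x_{t,\xi,\eta}$ separately.
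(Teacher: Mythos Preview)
Your overall strategy is correct and closely parallels the paper's, but there is one genuine methodological difference worth noting. For the $l$-sum, the paper does \emph{not} follow the route of Lemma~\ref{q71}. Instead it invokes Lemma~\ref{q42} together with Proposition~\ref{a6} from Appendix~\ref{20}, a packaged stationary-phase-for-sums result tailored to amplitudes of the special form~\eqref{q79}. Because $a^{II}_{\boldsymbol{\alpha},k}(t,x,y,\xi,\eta)=\kappa_\epsilon(x-x_{t,\xi,\eta})\kappa_\epsilon(y-y_{t,\xi,\eta})A_{\boldsymbol{\alpha},k}(t,\xi,\eta)$ is \emph{constant} in $x$ on an $\epsilon/3$-neighbourhood of the stationary point $x_{t,\xi,\eta}$, Proposition~\ref{a6} gives the $l$-sum exactly (up to $O(\hbar^\infty)$), and no $\partial_x a^{II}$ terms ever appear. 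This yields the clean intermediate expression~\eqref{cIIprelim} directly; the passage from~\eqref{cIIprelim} to~\eqref{cIIresult} is then precisely your integration-by-parts step in $\xi$ using $\partial_\xi\phi^{II}=x-x_{t,\xi,\eta}$.

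Your route---Taylor-expanding $a^{II}$ in $z=\hat{x}_l-\hat{x}_m$ and then collapsing via Poisson summation---produces the extra cross term $\tfrac{\hbar}{\ui}\partial_\xi H\cdot\partial_x a^{II}_{\boldsymbol{\alpha},k}$ that you flag in your last paragraph. This term does not algebraically combine into~\eqref{cIIresult}; rather it must be disposed of separately. The point you are missing is that $\partial_x a^{II}_{\boldsymbol{\alpha},k}$ is proportional to $\kappa'_\epsilon(\hat{x}_m-x_{t,\xi,\eta})$, hence supported where $|\hat{x}_m-x_{t,\xi,\eta}|\ge\epsilon/3$. On that support $\partial_\xi\phi^{II}=\hat{x}_m-x_{t,\xi,\eta}$ is bounded away from zero, so one further integration by parts in $\xi$ (with the harmless denominator $(\hat{x}_m-x_{t,\xi,\eta})^{-1}$) pushes this contribution into the $O(\hbar^2)$ remainder. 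With this observation your argument closes. The paper's use of Proposition~\ref{a6} simply sidesteps this bookkeeping by exploiting the cutoff structure of~\eqref{q79} from the outset; your approach is more hands-on but equally valid once the $\kappa'_\epsilon$-supported terms are treated as above.
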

\begin{proof}
Here the representation \eqref{q167} requires $J=2$ and the integration variables
are $\boldsymbol{\theta}=(\xi,\eta)$. The first step in evaluating \eqref{q54} 
consists of using Lemma~\ref{q42} to evaluate the application of $\op_N(H)$ 
to $U_{\boldsymbol{\alpha},k}$. Here, again, the matrix elements 
$({U_{\boldsymbol{\alpha},k}})_{ln}$ replace $\psi_{l}$ and the summation extends over 
$l$ with $|l-m|\leq N^{\frac{1}{M}}$ such that $n$ is fixed. This sum can be
evaluated with Proposition~\ref{a6}. This gives \eqref{q82}, where
\begin{equation}
\label{cIIprelim}
\begin{split}
c^{II}_{\boldsymbol{\alpha},k}(t,\hat{x}_m,\hat{y}_n,\xi,\eta) 
 &= \ui\hbar\partial_t a^{II}_{\boldsymbol{\alpha},\hbar}(t,\hat{x}_m,\hat{y}_n,\xi,\eta) \\
 &\quad+\bigl(H(x_{t,\xi,\eta},\xi)-H(\hat{x}_m,\xi)\bigr)
   a^{II}_{\boldsymbol{\alpha},k}(t,\hat{x}_m,\hat{y}_n,\xi,\eta)\\
 &\quad-\frac{\hbar}{2\ui}(\partial_x\partial_\xi H)(\hat{x}_m,\xi)
   a^{II}_{\boldsymbol{\alpha},k}(t,\hat{x}_m,\hat{y}_n,\xi,\eta)+O\bigl(\hbar^2\bigr). 
\end{split}
\end{equation}
We now employ a Taylor expansion (see also \cite[p.\,23]{Hoermander:1994}), 
\begin{equation}
\label{q116}
\begin{split}
\bigl(H(\hat{x}_m,\xi)-H(x_{t,\xi,\eta},\xi)\bigr)
&\ue^{\frac{\ui}{\hbar}\phi^{II}_{\boldsymbol{\alpha}}(t,\hat{x}_m,\hat{y}_n,\xi,\eta)}\\
&=p_{\boldsymbol{\alpha}}(t,\hat{x}_m,\xi,\eta)\bigl(\hat{x}_m-{x_{t,\xi,\eta}}\bigr)
    \ue^{\frac{\ui}{\hbar}\phi^{II}_{\boldsymbol{\alpha}}(t,\hat{x}_m,\hat{y}_n,\xi,\eta)}\\
&=p_{\boldsymbol{\alpha}}(t,\hat{x}_m,\xi,\eta)\frac{\hbar}{\ui}
    \partial_\xi\ue^{\frac{\ui}{\hbar}\phi^{II}_{\boldsymbol{\alpha}}(t,\hat{x}_m,\hat{y}_n,\xi,\eta)},
\end{split}
\end{equation}
use this in \eqref{cIIprelim} and perform an integration by parts in the
variable $\xi$. By \cite[p.\,24]{Hoermander:1994} one sees that this leads 
to \eqref{cIIresult}.
\end{proof}
We remark that, in analogy to \eqref{q100},  on $\Lambda_{\boldsymbol{\alpha}}$,
\begin{equation}
\label{q100a}
c^{II}_{\boldsymbol{\alpha},k}(\lambda)\ud\lambda^{\frac{1}{2}} = \ui\hbar 
\mathcal{L}_{X_{H_{\rm ext}}}\left(A_{\boldsymbol{\alpha},k}(\lambda)\ud\lambda^{\frac{1}{2}}
\right)+O(\hbar^2), 
\end{equation}
see \cite[(10.2.22),\,(10.2.27)]{Zworski:2012}.
\subsection{Some auxiliary results}
With \eqref{q156} in mind we now estimate 
\begin{equation}
\int_0^T\sum_{n,m=0}^{N-1}\overline{R_{\boldsymbol{\alpha}',k'}(t)}_{nm} 
R_{\boldsymbol{\alpha},k}(t)_{nm}\ \ud t 
\end{equation}
for all pairs $(\boldsymbol{\alpha},\boldsymbol{\alpha}')$ and fixed $T>0$.
There are four cases, depending on whether Lemma~\ref{q71} or
Lemma~\ref{q80} applies to $R_{\boldsymbol{\alpha},k}$ and $R_{\boldsymbol{\alpha}',k'}$,
respectively. We use the notations introduced in Sections~\ref{q66} and 
\ref{q76}, in particular those for the various amplitude and phase functions. 

We recall the definition \eqref{q21} of the sets $\Lambda_\alpha$, as well
as the representatives $\Lambda_{\boldsymbol{\alpha}}$ of equivalence classes
$\Lambda_{\widetilde\alpha}$ (with respect to translations by 
$\ell_x\gz\oplus \ell_\xi\gz$) as described around \eqref{q196}. We then define the 
set
\begin{equation}
\label{q203}
\Lambda_{\boldsymbol{\alpha}'\boldsymbol{\alpha}} :=
\left\{\hat{\lambda}^{\boldsymbol{\alpha}'};\ \lambda\in
\check{\Lambda}_{\boldsymbol{\alpha}}\cap\check{\Lambda}_{\boldsymbol{\alpha}'}\right\}.
\end{equation} 
We also use the interval $I_\beta\subset [0,T]$ as defined in 
\eqref{q20}-\eqref{q22}. 

The four cases that occur are covered in the Lemmata~\ref{q93}--\ref{q130},
the first one being:
\begin{lemma}
\label{q93}
Let $R_{\boldsymbol{\alpha},k}$ be of the form \eqref{q72a} and
$R_{\boldsymbol{\alpha}',k'}$ be of the form \eqref{q82}, then
\begin{equation}
\label{q94}
\begin{split}
\frac{1}{(2\pi\hbar)^{\frac{5}{2}}}\left(\frac{\ell_x}{N}\right)^2
&\sum_{m,n=0}^{N-1}\int_{I_\beta}\int_{\mathfrak{V}_{\boldsymbol{\alpha}'}^{\xi',\eta'}}
  \int_{\mathfrak{V}_{\boldsymbol{\alpha}}^\eta}
  \overline{c^{II}_{\boldsymbol{\alpha}',k'}(t,\hat{x}_m,\hat{y}_n,\xi',\eta')}\,
  c^I_{\boldsymbol{\alpha},k}(t,\hat{x}_m,\hat{y}_n,\eta)\\
 &\hspace*{2cm}\ue^{\frac{\ui}{\hbar}(\phi^I_{\boldsymbol{\alpha}}(t,\hat{x}_m,\hat{y}_n,\eta)-
   \phi^{II}_{\boldsymbol{\alpha}'}(t,\hat{x}_m,\hat{y}_n,\xi',\eta'))}\ud\eta\ud\xi'
    \ud\eta'\ud t\\
 &=\frac{1}{2\pi\hbar}\int_{\Lambda_{\boldsymbol{\alpha}'\boldsymbol{\alpha}}}
    g_{\boldsymbol{\alpha}',k',\hbar}
    \ue^{\frac{\ui\pi}{4}\sgn\Hess\phi^{I}_{\boldsymbol{\alpha}}}|\partial_x\xi|^{-\frac{1}{2}}
    e_{\boldsymbol{\alpha},k,\hbar}\ud t\ud\xi'\ud\eta' + O\bigl(\hbar^{\infty}\bigr),
\end{split}
\end{equation}
where $g_{\boldsymbol{\alpha}',k',\hbar}$ and $e_{\boldsymbol{\alpha},k,\hbar}$ are 
coefficients of half-densities on  $\Lambda_{\boldsymbol{\alpha}'\boldsymbol{\alpha}}$ 
possessing asymptotic expansions in $\hbar$ that are uniform in $t\in[0,T]$. 
In particular,
\begin{equation}
\label{q96}
\begin{split}
e_{\boldsymbol{\alpha},k,\hbar}(\lambda)
  &\sim \left(\ui\hbar\mathcal{L}_{X_{H_{\rm ext}}}
     {a}_{\boldsymbol{\alpha},k}(\lambda) +
    \sum_{n\geq 2}\hbar^n e_{\boldsymbol{\alpha},k,n}(\lambda)\right),
\end{split}
\end{equation}
with $\lambda=(t,\xi',\eta')\in\Lambda_{\boldsymbol{\alpha}'\boldsymbol{\alpha}}$. 
\end{lemma}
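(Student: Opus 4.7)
The plan is to evaluate \eqref{q94} by first converting the two finite sums over $m$ and $n$ into continuous integrals over $x$ and $y$, and then applying ordinary stationary phase in the three variables $(x,y,\eta)$, treating $(t,\xi',\eta')$ as parameters. The combination of the prefactor $(\ell_x/N)^2=(2\pi\hbar/\ell_\xi)^2$ with the $(2\pi\hbar)^{-5/2}$ in front is precisely what is needed so that, after stationary phase in three variables, the result is an integral with density $(2\pi\hbar)^{-1}$ over the three-dimensional manifold $\Lambda_{\boldsymbol{\alpha}'\boldsymbol{\alpha}}$.

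For the first step I convert each sum to an integral. The diameter restriction \eqref{q26} ensures that each non-vanishing lift $\hat{x}_m=m\ell_x/N$ lies unambiguously in $\mathfrak{V}_{\boldsymbol{\alpha}}^x$, so that $\sum_{m=0}^{N-1}$ may be extended to $\sum_{m\in\gz}$ without change. Poisson summation rewrites $(\ell_x/N)\sum_{m\in\gz}f(m\ell_x/N)$ as $\sum_{j\in\gz}\int_{\rz}f(x)\,\ue^{-\ui j\ell_\xi x/\hbar}\,\ud x$. For $j\neq 0$ the total phase has derivative $(\partial_x\phi^I_{\boldsymbol{\alpha}}-j\ell_\xi)/\hbar$, which is bounded away from zero on the compact support by the non-degeneracy condition on $\partial_x\xi$ imposed in Section~\ref{q166}, so integration by parts yields an $O(\hbar^\infty)$ contribution. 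The same reasoning handles the $n$-sum, leaving a six-dimensional oscillatory integral modulo $O(\hbar^\infty)$.

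Next I apply stationary phase in $(x,y,\eta)$. The critical points of
\[
\phi^I_{\boldsymbol{\alpha}}-\phi^{II}_{\boldsymbol{\alpha}'}=S^I_{\boldsymbol{\alpha}}(t,x,\eta)-y\eta-x\xi'+y\eta'+S^{II}_{\boldsymbol{\alpha}'}(t,\xi',\eta')
\]
satisfy $\xi'=\partial_x S^I_{\boldsymbol{\alpha}}(t,x,\eta)$, $\eta=\eta'$ and $y=\partial_\eta S^I_{\boldsymbol{\alpha}}(t,x,\eta)$, which describe exactly the set $\Lambda_{\boldsymbol{\alpha}'\boldsymbol{\alpha}}$ of \eqref{q203}; uniqueness of the critical point for given $(t,\xi',\eta')$ is provided by Lemma~\ref{q190}. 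A direct $3\times 3$ determinant calculation gives $\det\Hess_{x,y,\eta}(\phi^I_{\boldsymbol{\alpha}}-\phi^{II}_{\boldsymbol{\alpha}'})=-\partial_x^2 S^I_{\boldsymbol{\alpha}}=-\partial_x\xi$, and its signature equals $\sgn\Hess\phi^I_{\boldsymbol{\alpha}}$. The non-degeneracy of $\partial_x\xi$ on overlapping charts of differing types, built into the choice of generating set in Section~\ref{q166}, makes the stationary phase formula applicable, and it produces the factor $(2\pi\hbar)^{3/2}|\partial_x\xi|^{-1/2}\,\ue^{\frac{\ui\pi}{4}\sgn\Hess\phi^I_{\boldsymbol{\alpha}}}$ appearing on the right-hand side of \eqref{q94}.

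The leading-order amplitude evaluated at the stationary point is $\overline{c^{II}_{\boldsymbol{\alpha}',k'}}\cdot c^I_{\boldsymbol{\alpha},k}$ restricted to $\Lambda_{\boldsymbol{\alpha}'\boldsymbol{\alpha}}$. By \eqref{q100} the leading $\hbar$-order of $c^I_{\boldsymbol{\alpha},k}$ is $\ui\hbar\,\mathcal{L}_{X_{H_{\rm ext}}}a_{\boldsymbol{\alpha},k}$, and the contributions from the Borel-type tail $b_{\boldsymbol{\alpha},k,\hbar}$ together with the higher-order stationary-phase corrections assemble into the asymptotic series $e_{\boldsymbol{\alpha},k,\hbar}$ of \eqref{q96}; the analogous treatment of the $\boldsymbol{\alpha}'$-factor via \eqref{q100a} yields $g_{\boldsymbol{\alpha}',k',\hbar}$. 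The main obstacle throughout is the torus-versus-cover bookkeeping: one must invoke \eqref{q26} to guarantee uniqueness of the lifts $\hat{x}_m$ and $\hat{y}_n$, and Lemma~\ref{q190} to exclude spurious stationary points arising from other lattice translates of $\Lambda_{\boldsymbol{\alpha}}$ that could meet $\Lambda_{\boldsymbol{\alpha}'}$. Once these points are handled, the remainder of the argument is routine stationary phase.
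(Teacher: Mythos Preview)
Your overall strategy---Poisson summation to replace the $m,n$ sums by integrals, then stationary phase in $(x,y,\eta)$---is exactly the paper's approach. However, the Poisson step contains a genuine error. You claim that for $j\neq 0$ the phase derivative $\partial_x(\phi^I_{\boldsymbol{\alpha}}-\phi^{II}_{\boldsymbol{\alpha}'})-j\ell_\xi$ is bounded away from zero ``by the non-degeneracy condition on $\partial_x\xi$''. That non-degeneracy is irrelevant here: the quantity in question is $\xi_{t,x,\eta}-\xi'-j\ell_\xi$, where $\xi_{t,x,\eta}=\partial_x S^I_{\boldsymbol{\alpha}}$ lives in $\mathfrak{V}_{\boldsymbol{\alpha}}^\xi$ and $\xi'$ in $\mathfrak{V}_{\boldsymbol{\alpha}'}^{\xi'}$. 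The representatives $\Lambda_{\boldsymbol{\alpha}}$, $\Lambda_{\boldsymbol{\alpha}'}$ are chosen one per lattice-equivalence class (see \eqref{q196}) and may lie at different heights in the covering space, so $\xi_{t,x,\eta}-\xi'$ is generically close to $-s\ell_\xi$ for a fixed nonzero $s\in\gz$ determined by Lemma~\ref{q190}. Thus the surviving Poisson index is $j=s$, not $j=0$. This is precisely the content of Lemma~\ref{a1}, which the paper invokes to obtain \eqref{q200} with the additional phase $s\ell_\xi x-s\ell_\xi y$; the stationary-point conditions \eqref{q97} then carry the $s\ell_\xi$ shifts, e.g.\ $\eta'=\eta+s\ell_\xi$ rather than your $\eta'=\eta$. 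The same issue recurs in your $n$-sum.

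A second, smaller omission: you never verify that the total phase vanishes at the stationary point. The paper does this in \eqref{q104} via Lemma~\ref{q168}, using that both $\phi^I_{\boldsymbol{\alpha}}$ and $\phi^{II}_{\boldsymbol{\alpha}'}$ equal the action $W(\lambda)$ there. Without this step an unexplained oscillatory factor would remain on the right-hand side of \eqref{q94}.
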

\begin{proof}
We first apply Lemma~\ref{a1} to the sums over $n$ and $m$ on the left-hand
side of \eqref{q94}. Up to an error term of size $O(\hbar^{\infty})$ this gives 
two additional integrals,
\begin{equation}
\label{q200}
\frac{1}{(2\pi\hbar)^{\frac{5}{2}}}\int_{I_\beta}\int_0^{\ell_x}\int_0^{\ell_x}
\int_{\mathfrak{V}_{\boldsymbol{\alpha}'}^{\xi',\eta'}}\int_{\mathfrak{V}_{\boldsymbol{\alpha}}^{\eta}}
\overline{c^{II}_{\boldsymbol{\alpha}',k'}}\,c^I_{\boldsymbol{\alpha},k}
\ue^{\frac{\ui}{\hbar}(\phi^I_{\boldsymbol{\alpha}}-\phi^{II}_{\boldsymbol{\alpha}'}+sx\ell_\xi-sy\ell_\xi)}
\ud\eta\ud\xi'\ud\eta'\ud x\ud y\ud t
\end{equation}
where $s\in\gz$ is determined by Lemma \ref{q190} and Lemma \ref{a1}. We then 
use the stationary phase theorem to evaluate the integrals over the variables
$(x,y,\eta)$. In order to determine the stationary points we use 
Lemma~\ref{q190} and get
\begin{equation}
\label{q97}
\begin{aligned}
\partial_y (\phi^I_{\boldsymbol{\alpha}}-\phi^{II}_{\boldsymbol{\alpha}'})-s\ell_\xi = 0\quad
&\Leftrightarrow \quad\eta'=\eta+s\ell_\xi\\
\partial_x (\phi^I_{\boldsymbol{\alpha}}-\phi^{II}_{\boldsymbol{\alpha}'})+s\ell_\xi = 0\quad
&\Leftrightarrow\quad\xi'=\partial_x S^I_{\boldsymbol{\alpha}}(t,x,\eta)+s\ell_\xi\\
\partial_\eta (\phi^I_{\boldsymbol{\alpha}}-\phi^{II}_{\boldsymbol{\alpha}'}) = 0\quad
&\Leftrightarrow\quad\hat{y}^{\boldsymbol{\alpha}}=\partial_\eta 
S^I_{\boldsymbol{\alpha}}(t,x,\eta).
\end{aligned}
\end{equation}
With regard to the last equation we remark that the lifts 
$\hat{y}^{\boldsymbol{\alpha}}$ and $\hat{y}^{\boldsymbol{\alpha}'}$ to the covering
space can only differ (locally) by a constant $r\ell_x$, $r\in\gz$.

The equations \eqref{q97} can be solved to find $x(t,\xi',\eta')$ and 
$y(t,\xi',\eta')$. They determine the set of stationary points that can 
either be given in the form 
\begin{equation}
\label{q98c}
\bigl(t,x(t,\xi',\eta')-r\ell_x,y(t,\xi',\eta')-r\ell_x,\eta'-s\ell_\xi\bigr)
\in\mathfrak{V}^I_{\boldsymbol{\alpha}},
\end{equation}
or
\begin{equation}
\label{q99c}
\bigl(t,x(t,\xi',\eta'),y(t,\xi',\eta'),\xi',\eta')\in
\mathfrak{V}^{II}_{\boldsymbol{\alpha}'}.
\end{equation}
Due to
Lemma~\ref{q168} the Hessian of the total phase, as a function of the
variables $(x,y,\eta)$, on the left-hand side of \eqref{q94} is given
by
\begin{equation}
\label{q101}
\Hess_{x,y,\eta}(\phi^I_{\boldsymbol{\alpha}}-\phi^{II}_{\boldsymbol{\alpha}'}) =
\Hess\phi^I_{\boldsymbol{\alpha}}=
\begin{pmatrix}
\partial_x^2S^I_{\boldsymbol{\alpha}} & 0 & \partial_{x,\eta}S^I_{\boldsymbol{\alpha}}\\
0 & 0 & -1\\
\partial_{x,\eta}S^I_{\boldsymbol{\alpha}} & -1 & \partial^2_{\eta}S^I_{\boldsymbol{\alpha}}
\end{pmatrix}.
\end{equation} 
Recall that our definition of Hessians never involves derivatives with
respect to $t$. Hence,
\begin{equation}
\label{q102}
\det\Hess_{x,y,\eta}(\phi^I_{\boldsymbol{\alpha}}-\phi^{II}_{\boldsymbol{\alpha}'}) =
-\partial_x^2S^I_{\boldsymbol{\alpha}} = -\partial_x\xi.
\end{equation}
We recall that we have chosen the partitions small enough so that this
derivative does not vanish. Moreover, 
\begin{equation}
\label{q104}
(\phi^I_{\boldsymbol{\alpha}}-\phi^{II}_{\boldsymbol{\alpha}'})(\lambda) = 
W(\lambda)-W(\lambda) = 0,
\end{equation}
and
\begin{equation}
\label{q105}
\sgn\Hess_{x,y,\eta}(\phi^I_{\boldsymbol{\alpha}}-\phi^{II}_{\boldsymbol{\alpha}'})(\lambda) =
\sgn\Hess\phi^I_{\boldsymbol{\alpha}}(\lambda).
\end{equation}
An application of the stationary phase theorem then gives \eqref{q94}.
The asymptotic expansion \eqref{q96} follows from \eqref{q100}.
\end{proof}
The next case is:
\begin{lemma}
\label{q108}
Let both $R_{\boldsymbol{\alpha},k}$ and $R_{\boldsymbol{\alpha}',k'}$ be of the 
form \eqref{q82}, then
\begin{equation}
\label{q109}
\begin{split}
\frac{1}{(2\pi\hbar)^3}\left(\frac{\ell_x}{N}\right)^2
&\sum_{m,n=0}^{N-1}\int_{I_\beta}\int_{\mathfrak{V}_{\boldsymbol{\alpha}'}^{\xi',\eta'}}
  \int_{\mathfrak{V}_{\boldsymbol{\alpha}}^{\xi\eta}}
  \overline{c^{II}_{\boldsymbol{\alpha}',k'}(t,\hat{x}_m,\hat{y}_n,\xi',\eta')}\,
  c^{II}_{\boldsymbol{\alpha},k}(t,\hat{x}_m,\hat{y}_n,\xi,\eta)\\
 &\hspace*{2cm}\ue^{\frac{\ui}{\hbar}(\phi^{II}_{\boldsymbol{\alpha}}(t,\hat{x}_m,\hat{y}_n,\xi,\eta)-
   \phi^{II}_{\boldsymbol{\alpha}'}(t,\hat{x}_m,\hat{y}_n,\xi',\eta'))}\ud\xi\ud\eta
   \ud\xi'\ud\eta'\ud t\\
 &=\frac{1}{2\pi\hbar}\int_{\Lambda_{\boldsymbol{\alpha}'\boldsymbol{\alpha}}}
    g_{\boldsymbol{\alpha}',k',\hbar}
    \ue^{\frac{\ui\pi}{4}\sgn\Hess\phi^{II}_{\boldsymbol{\alpha}}}e_{\boldsymbol{\alpha},k,\hbar}
    \ud t\ud\xi'\ud\eta' + O\bigl(\hbar^{\infty}\bigr),
\end{split}
\end{equation}
where $g_{\boldsymbol{\alpha}',k',\hbar}$ and $e_{\boldsymbol{\alpha},k,\hbar}$ are 
coefficients of half-densities on  $\Lambda_{\boldsymbol{\alpha}'\boldsymbol{\alpha}}$ 
possessing asymptotic expansions in $\hbar$ that are uniform in $t\in[0,T]$. 
In particular,
\begin{equation}
\label{q111}
\begin{split}
e_{\boldsymbol{\alpha},k,\hbar}(\lambda)
  &\sim \left(\ui\hbar\mathcal{L}_{X_{H_{\rm ext}}}
     {a}_{\boldsymbol{\alpha},k}(\lambda) +
    \sum_{n\geq 2}\hbar^n e_{\boldsymbol{\alpha},k,n}(\lambda)\right),
\end{split}
\end{equation}
with $\lambda=(t,\xi',\eta')\in\Lambda_{\boldsymbol{\alpha}'\boldsymbol{\alpha}}$.
\end{lemma}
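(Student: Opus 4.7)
The plan is to mimic the three-step strategy used in the proof of Lemma~\ref{q93}, adapted to the case where both $R_{\boldsymbol{\alpha},k}$ and $R_{\boldsymbol{\alpha}',k'}$ are of $\phi^{II}$-type. First I would apply Lemma~\ref{a1} to the sums $\sum_{m=0}^{N-1}$ and $\sum_{n=0}^{N-1}$ on the left-hand side of \eqref{q109}, converting them, up to an $O(\hbar^\infty)$ remainder, into integrations over $x,y\in[0,\ell_x]$ and producing an extra phase factor $\ue^{\frac{\ui}{\hbar}(sx\ell_\xi - sy\ell_\xi)}$ with $s\in\gz$ the unique torus translation supplied by Lemma~\ref{q190}. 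The resulting seven-fold integral then admits a stationary-phase reduction in the four variables $(x,y,\xi,\eta)$, leaving $(t,\xi',\eta')$ free to range over $I_\beta \times \mathfrak{V}_{\boldsymbol{\alpha}'}^{\xi',\eta'}$.

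The critical-point equations are
\begin{align*}
\partial_x &: \xi - \xi' + s\ell_\xi = 0,\\
\partial_y &: -\eta + \eta' - s\ell_\xi = 0,\\
\partial_\xi &: x = \partial_\xi S^{II}_{\boldsymbol{\alpha}}(t,\xi,\eta) = x_{t,\xi,\eta},\\
\partial_\eta &: y = -\partial_\eta S^{II}_{\boldsymbol{\alpha}}(t,\xi,\eta) = y_{t,\xi,\eta}.
\end{align*}
For fixed $(t,\xi',\eta')\in\Lambda_{\boldsymbol{\alpha}'\boldsymbol{\alpha}}$ these have a unique solution lying in the support of the cutoffs $\kappa_\epsilon$ from \eqref{q79}, and identify the $\boldsymbol{\alpha}'$-lift of the underlying torus point as an $\ell_\xi\gz$-translate of the $\boldsymbol{\alpha}$-lift, in agreement with Lemmata~\ref{q168} and~\ref{q190}. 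Since $\phi^{II}_{\boldsymbol{\alpha}'}$ is linear in $(x,y)$ and independent of $(\xi,\eta)$, its Hessian with respect to the stationary-phase variables vanishes, so
\begin{equation*}
\Hess_{x,y,\xi,\eta}\bigl(\phi^{II}_{\boldsymbol{\alpha}} - \phi^{II}_{\boldsymbol{\alpha}'}\bigr) = \Hess\phi^{II}_{\boldsymbol{\alpha}},
\end{equation*}
and a direct cofactor expansion on the resulting $4\times 4$ symplectic-type block matrix yields $|\det\Hess\phi^{II}_{\boldsymbol{\alpha}}|=1$. This is why no $|\partial_x\xi|^{-1/2}$-type correction appears, in contrast to Lemma~\ref{q93}, and the signature factor is exactly $\ue^{\frac{\ui\pi}{4}\sgn\Hess\phi^{II}_{\boldsymbol{\alpha}}}$. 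The $\hbar$-powers balance cleanly: the $(2\pi\hbar)^{2}$ produced by stationary phase in four variables combines with the prefactor $(2\pi\hbar)^{-3}$ to yield $(2\pi\hbar)^{-1}$.

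To see that the total phase vanishes at the critical point, I would substitute back and use the first two stationary conditions to eliminate the $x(\xi-\xi'+s\ell_\xi)$ and $y(\eta'-\eta-s\ell_\xi)$ contributions, leaving $-S^{II}_{\boldsymbol{\alpha}}(t,\xi,\eta) + S^{II}_{\boldsymbol{\alpha}'}(t,\xi',\eta')$, which equals $-W(\lambda_{\boldsymbol{\alpha}}) + W(\lambda_{\boldsymbol{\alpha}'}) = 0$ by relation~\eqref{171} of Lemma~\ref{q168}. The half-density coefficient $e_{\boldsymbol{\alpha},k,\hbar}$ and its leading behaviour \eqref{q111} are then obtained by inserting the Lie-derivative identification \eqref{q100a} into the stationary-phase expansion and applying the standard half-density transformation rule on the Lagrangian manifold, precisely as at the corresponding step of the proof of Lemma~\ref{q93}.

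The main obstacle I anticipate lies in the covering-space/torus bookkeeping: confirming that the unique $s\in\gz$ provided by Lemma~\ref{q190} coincides precisely with the summation index reintroduced by Lemma~\ref{a1}, verifying that the stationary points remain inside the support of the cutoffs $\kappa_\epsilon(x-x_{t,\xi,\eta})$ and $\kappa_\epsilon(y-y_{t,\xi,\eta})$ so that any boundary contributions are genuinely $O(\hbar^\infty)$, and tracking that the vanishing of $\Hess\phi^{II}_{\boldsymbol{\alpha}'}$ in the stationary-phase variables is what lets the signature reduce cleanly to $\sgn\Hess\phi^{II}_{\boldsymbol{\alpha}}$ with no $\boldsymbol{\alpha}'$-dependent offset.
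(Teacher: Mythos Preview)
Your proposal is correct and follows essentially the same approach as the paper's proof: apply Lemma~\ref{a1} to convert the double sum into a double integral (picking up the $s\ell_\xi$ shifts via Lemma~\ref{q190}), then use stationary phase in $(x,y,\xi,\eta)$ with the same critical-point equations (these are the paper's \eqref{q112}), the same Hessian identification $\Hess_{x,y,\xi,\eta}(\phi^{II}_{\boldsymbol{\alpha}}-\phi^{II}_{\boldsymbol{\alpha}'})=\Hess\phi^{II}_{\boldsymbol{\alpha}}$ with $|\det|=1$ (the paper's \eqref{q119}--\eqref{q120}), and finally \eqref{q100a} for the leading half-density behaviour. Your account is in fact slightly more explicit than the paper's in spelling out the $\hbar$-power bookkeeping and the vanishing of the total phase at the critical point.
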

\begin{proof}
  We use a similar strategy as in the proof for Lemma \ref{q93}, and
  first perform the sum over $n$ and $m$ by applying
  Lemma~\ref{a1}. With the appropriate modifications, including an
  additional integration over the variable $\xi$ and a pre-factor
  $1/(2\pi\hbar)^3$, this gives a result similar to
  \eqref{q200}. We then apply the stationary phase theorem to the
  integrations over the variables $(x,y,\xi,\eta)$. The conditions
  determining stationary points of the phase function are
\begin{equation}
\label{q112}
\begin{aligned}
\partial_x (\phi^{II}_{\boldsymbol{\alpha}}-\phi^{II}_{\boldsymbol{\alpha}'})+s\ell_\xi = 0\quad
&\Leftrightarrow\quad{\xi}+s\ell_\xi=\xi'\\
\partial_y (\phi^{II}_{\boldsymbol{\alpha}}-\phi^{II}_{\boldsymbol{\alpha}'})-s\ell_\xi = 0\quad
&\Leftrightarrow \quad{\eta}+s\ell_\xi=\eta'\\
\partial_{{\xi}}(\phi^{II}_{\boldsymbol{\alpha}}-\phi^{II}_{\boldsymbol{\alpha}'}) = 0\quad
&\Leftrightarrow\quad \partial_{\xi}S^{II}_{\boldsymbol{\alpha}}(t,\xi,\eta)=
\hat{x}^{\boldsymbol{\alpha}}\\
\partial_{\eta}(\phi^{II}_{\boldsymbol{\alpha}}-\phi^{II}_{\boldsymbol{\alpha}'}) = 0\quad
&\Leftrightarrow\quad -\partial_{\eta}S^{II}_{\boldsymbol{\alpha}}(t,\xi,\eta)=
\hat{y}^{\boldsymbol{\alpha}}.
\end{aligned}
\end{equation}
These equations can be solved for $x(t,\xi',\eta')$ and $y(t,\xi',\eta')$. 
The set of stationary points can be either given as 
\begin{equation}
\label{q114}
\bigl(t,x(t,\xi',\eta')-r\ell_x,y(t,\xi',\eta')-r\ell_x,\xi'-s\ell_\xi,\eta'-s\ell_\xi\bigr)
\in\mathfrak{V}^{II}_{\boldsymbol{\alpha}},
\end{equation}
or as
\begin{equation}
\label{q210}
\bigl(t,x(t,\xi',\eta'),y(t,\xi',\eta'),\xi',\eta'\bigr)\in
\mathfrak{V}^{II}_{\boldsymbol{\alpha}'}.
\end{equation}
The Hessian of the phase function is
\begin{equation}
\label{q119}
\Hess_{x,y,{\xi},{\eta}}(\phi^{II}_{\boldsymbol{\alpha}}-\phi^{II}_{\boldsymbol{\alpha}'})
=\Hess\phi^{II}_{\boldsymbol{\alpha}}=
\begin{pmatrix}
0 & 0 & 1 & 0\\
0 & 0 & 0 & -1\\
1 & 0 & -\partial^2_\xi S^{II}_{\boldsymbol{\alpha}} & -\partial_{{\eta}{\xi}}S^{II}_{\boldsymbol{\alpha}}\\
0 & -1 & -\partial_{{\xi}{\eta}}S^{II}_{\boldsymbol{\alpha}} & -\partial^2_{{\eta}}S^{II}_{\boldsymbol{\alpha}}
\end{pmatrix},
\end{equation}
so that 
\begin{equation}
\label{q120}
\det\Hess_{x,y,\xi,\eta}(\phi^{II}_{\boldsymbol{\alpha}}-\phi^{II}_{\boldsymbol{\alpha}'})=1.
\end{equation}
The stationary phase theorem, together with \eqref{q100a}, now gives
the result.
\end{proof}
The third case is:
\begin{lemma}
\label{q121}
Let both $R_{\boldsymbol{\alpha},k}$ and $R_{\boldsymbol{\alpha}',k'}$ be of the 
form \eqref{q72a}, then
\begin{equation}
\label{q122}
\begin{split}
\frac{1}{(2\pi\hbar)^2}\left(\frac{\ell_x}{N}\right)^2
&\sum_{m,n=0}^{N-1}\int_{I_\beta}\int_{\mathfrak{V}_{\boldsymbol{\alpha}'}^{\eta'}}
  \int_{\mathfrak{V}_{\boldsymbol{\alpha}}^\eta}
  \overline{c^I_{\boldsymbol{\alpha}',k'}(t,\hat{x}_m,\hat{y}_n,\eta')}\,
  c^I_{\boldsymbol{\alpha},k}(t,\hat{x}_m,\hat{y}_n,\eta)\\
 &\hspace*{2cm}\ue^{\frac{\ui}{\hbar}(\phi^I_{\boldsymbol{\alpha}}(t,\hat{x}_m,\hat{y}_n,\eta)-
   \phi^I_{\boldsymbol{\alpha}'}(t,\hat{x}_m,\hat{y}_n,\eta'))}\ud\eta\ud\eta'\ud t\\
 &=\frac{1}{2\pi\hbar}\int_{\Lambda_{\boldsymbol{\alpha}'\boldsymbol{\alpha}}}
    g_{\boldsymbol{\alpha}',k',\hbar}\ue^{\frac{\ui\pi}{4}f_{\boldsymbol{\alpha}}}
    e_{\boldsymbol{\alpha},k,\hbar}\ud t\ud x'\ud\eta' + O\bigl(\hbar^{\infty}\bigr),
\end{split}
\end{equation}
where $g_{\boldsymbol{\alpha}',k',\hbar}$ and $e_{\boldsymbol{\alpha},k,\hbar}$ are 
coefficients of half-densities on  $\Lambda_{\boldsymbol{\alpha}'\boldsymbol{\alpha}}$ 
possessing asymptotic expansions in $\hbar$ that are uniform in $t\in[0,T]$. 
In particular,
\begin{equation}
\label{q96a}
\begin{split}
e_{\boldsymbol{\alpha},k,\hbar}(\lambda)
  &\sim \left(\ui\hbar\mathcal{L}_{X_{H_{\rm ext}}}
     {a}_{\boldsymbol{\alpha},k}(\lambda) +
    \sum_{n\geq 2}\hbar^n e_{\boldsymbol{\alpha},k,n}(\lambda)\right),
\end{split}
\end{equation}
with $\lambda=(t,x',\eta')\in\Lambda_{\boldsymbol{\alpha}'\boldsymbol{\alpha}}$. 
Moreover,
\begin{equation}
\label{q123}
f_{\boldsymbol{\alpha}} := \sgn\Hess_{y,\eta}\phi^I_{\boldsymbol{\alpha}}
=\sgn\begin{pmatrix}
0 & -1\\
-1 & \partial_{\eta}^2 S^I_{\boldsymbol{\alpha}}(\lambda)
\end{pmatrix}.
\end{equation}
\end{lemma}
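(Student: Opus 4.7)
The proof follows the scheme of Lemmas~\ref{q93} and \ref{q108}, only with a two-dimensional stationary-phase step instead of a three- or four-dimensional one. First I apply Lemma~\ref{a1} to the sums over $m$ and $n$ on the left-hand side of \eqref{q122}, thereby converting them into integrations over $x,y\in[0,\ell_x)$ at the cost of $O(\hbar^\infty)$ and introducing a relative torus-translation index $s\in\gz$ (uniquely determined by Lemma~\ref{q190}) through the extra phase $sx\ell_\xi-sy\ell_\xi$. The left-hand side of \eqref{q122} then takes the form
\begin{equation*}
\frac{1}{(2\pi\hbar)^2}\int_{I_\beta}\int_0^{\ell_x}\int_0^{\ell_x}\int_{\mathfrak{V}_{\boldsymbol{\alpha}'}^{\eta'}}\int_{\mathfrak{V}_{\boldsymbol{\alpha}}^{\eta}}\overline{c^I_{\boldsymbol{\alpha}',k'}}\,c^I_{\boldsymbol{\alpha},k}\,\ue^{\frac{\ui}{\hbar}(\phi^I_{\boldsymbol{\alpha}}-\phi^I_{\boldsymbol{\alpha}'}+sx\ell_\xi-sy\ell_\xi)}\,\ud\eta\,\ud\eta'\,\ud x\,\ud y\,\ud t,
\end{equation*}
in direct analogy with \eqref{q200}.

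Next I apply the stationary phase theorem in the two variables $(y,\eta)$, retaining $(t,x,\eta')$ as free parameters. Since $\phi^I_{\boldsymbol{\alpha}'}(t,x,y,\eta')=S^I_{\boldsymbol{\alpha}'}(t,x,\eta')-y\eta'$ is linear in $y$ and independent of $\eta$, and the correction $sx\ell_\xi-sy\ell_\xi$ is likewise linear in the differentiation variables, one finds
\begin{equation*}
\Hess_{y,\eta}(\phi^I_{\boldsymbol{\alpha}}-\phi^I_{\boldsymbol{\alpha}'})=\Hess_{y,\eta}\phi^I_{\boldsymbol{\alpha}}=\begin{pmatrix}0 & -1\\ -1 & \partial^2_\eta S^I_{\boldsymbol{\alpha}}\end{pmatrix},
\end{equation*}
with determinant $-1$, so the non-degeneracy hypothesis of the stationary phase theorem is automatically satisfied. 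The stationary conditions
\begin{equation*}
\partial_y(\phi^I_{\boldsymbol{\alpha}}-\phi^I_{\boldsymbol{\alpha}'})-s\ell_\xi=0\ \Leftrightarrow\ \eta'=\eta+s\ell_\xi,\qquad \partial_\eta\phi^I_{\boldsymbol{\alpha}}=0\ \Leftrightarrow\ y=\partial_\eta S^I_{\boldsymbol{\alpha}}(t,x,\eta),
\end{equation*}
determine $(y,\eta)$ as functions of $(t,x,\eta')$, and Lemma~\ref{q190} shows that the stationary set can equivalently be described in the chart $\boldsymbol{\alpha}$ by a shift by $r\ell_x$ in the $x$-variable or directly in the chart $\boldsymbol{\alpha}'$, as in \eqref{q98c}. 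By Lemma~\ref{q168} the phase $\phi^I_{\boldsymbol{\alpha}}-\phi^I_{\boldsymbol{\alpha}'}$ vanishes at the stationary set, as in \eqref{q104}, so the only oscillatory contribution that survives is the Maslov-type signature factor $\ue^{\frac{\ui\pi}{4}f_{\boldsymbol{\alpha}}}$ with $f_{\boldsymbol{\alpha}}$ given in \eqref{q123}.

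The stationary-phase theorem then produces the factor $2\pi\hbar$, so the overall prefactor becomes $1/(2\pi\hbar)$, and the remaining integration runs over $(t,x',\eta')\in\Lambda_{\boldsymbol{\alpha}'\boldsymbol{\alpha}}$, with $x'$ denoting the surviving $x$-coordinate in the $\boldsymbol{\alpha}'$-chart. The leading behaviour \eqref{q96a} of the amplitude $e_{\boldsymbol{\alpha},k,\hbar}$ is read off from \eqref{q100}, which already identifies $c^I_{\boldsymbol{\alpha},k}\,\ud\lambda^{1/2}$ as $\ui\hbar\,\mathcal{L}_{X_{H_\mathrm{ext}}}(a^I_{\boldsymbol{\alpha},k}\,\ud\lambda^{1/2})+O(\hbar^2)$; the higher-order corrections from the stationary-phase expansion are absorbed into the coefficients $e_{\boldsymbol{\alpha},k,n}$ for $n\geq 2$, and similarly $g_{\boldsymbol{\alpha}',k',\hbar}$ arises from $\overline{c^I_{\boldsymbol{\alpha}',k'}}$.

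The main technical hurdle I anticipate is not the stationary-phase computation itself---which is essentially two-dimensional and mirrors the calculations already carried out in Lemmas~\ref{q93} and \ref{q108}---but the careful book-keeping of the torus-translation indices $r,s$ and the consistent identification of the parametrisation $(t,x',\eta')$ of $\Lambda_{\boldsymbol{\alpha}'\boldsymbol{\alpha}}$ that makes the result an intrinsic half-density on that Lagrangian piece, independent of the choice of chart representatives; this is ensured by repeated appeal to Lemma~\ref{q168} and Corollary~\ref{173}.
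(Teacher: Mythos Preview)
Your proposal is correct and follows essentially the same approach as the paper: convert the double sum into integrals via Lemma~\ref{a1}, then apply stationary phase in the two variables $(y,\eta)$, obtaining the Hessian \eqref{q123} with determinant $-1$ and the stationary conditions $\eta'=\eta+s\ell_\xi$, $\hat{y}^{\boldsymbol{\alpha}}=\partial_\eta S^I_{\boldsymbol{\alpha}}(t,x',\eta)$. The paper's proof is slightly terser---it does not restate the vanishing of the phase at the stationary set or the identification via \eqref{q100}, since these carry over verbatim from Lemma~\ref{q93}---but the content is the same.
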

\begin{proof}
We again employ the same strategy as in the proofs of Lemmata~\ref{q93} and 
\ref{q108}. The sums over $n$ and $m$ are evaluated with the help of Lemma~\ref{a1}
and, with some modifications, this gives a result similar to \eqref{q200}. There is
integration over the variable $\xi'$ and, for convenience, we name the integration
variables introduced by Lemma~\ref{a1} as $x'$ and $y$. Next we apply the stationary 
phase theorem to the integrations over the variables $(y,\eta)$. The conditions 
determining stationary points of the phase function are 
\begin{equation}
\label{q127}
\begin{aligned}
\partial_y(\phi^I_{\boldsymbol{\alpha}}-\phi^I_{\boldsymbol{\alpha}'})-s\ell_\xi = 0\quad
&\Leftrightarrow \quad\eta'=\eta+s\ell_\xi\\
\partial_{\eta}(\phi^I_{\boldsymbol{\alpha}}-\phi^I_{\boldsymbol{\alpha}'}) = 0\quad
&\Leftrightarrow\quad\hat{y}^{\boldsymbol{\alpha}} = 
\partial_{\eta}S^I_{\boldsymbol{\alpha}}(t,x',\eta),
\end{aligned}
\end{equation}
and can be solved for $y(t,x',\eta')$. The stationary points can therefore be given
either in the form
\begin{equation}
\label{q128}
\bigl(t,x-r\ell_x,y(t,x',\eta')-r\ell_x,\eta'-s\ell_\xi\bigr)\in\mathfrak{V}^I_{\boldsymbol{\alpha}}
\end{equation}
or as
\begin{equation}
\label{q204}
\bigl(t,x',y\left(t,x',\eta'\right),\eta'\bigr)\in\mathfrak{V}^I_{\boldsymbol{\alpha}'}.
\end{equation}
The Hessian that is required by the stationary phase theorem is the one given in
\eqref{q123}, hence
\begin{equation}
\label{q129}
\det\Hess_{y,\eta}(\phi^I_{\boldsymbol{\alpha}}-\phi^I_{\boldsymbol{\alpha}'}) = -1.
\end{equation}
\end{proof}
Finally, the last case is:
\begin{lemma}
\label{q130}
Let $R_{\boldsymbol{\alpha},k}$ be of the form \eqref{q82} and
$R_{\boldsymbol{\alpha}',k'}$ be of the form \eqref{q72a}, then
\begin{equation}
\label{q131}
\begin{split}
\frac{1}{(2\pi\hbar)^{\frac{5}{2}}}\left(\frac{\ell_x}{N}\right)^2
&\sum_{m,n=0}^{N-1}\int_{I_\beta}\int_{\mathfrak{V}_{\boldsymbol{\alpha}'}^{\eta'}}
  \int_{\mathfrak{V}_{\boldsymbol{\alpha}}^{\xi,\eta}}
  \overline{c^I_{\boldsymbol{\alpha}',k'}(t,\hat{x}_m,\hat{y}_n,\eta')}\,
  c^{II}_{\boldsymbol{\alpha},k}(t,\hat{x}_m,\hat{y}_n,\xi,\eta)\\
 &\hspace*{2cm}\ue^{\frac{\ui}{\hbar}(\phi^{II}_{\boldsymbol{\alpha}}(t,\hat{x}_m,\hat{y}_n,\xi,\eta)-
   \phi^I_{\boldsymbol{\alpha}'}(t,\hat{x}_m,\hat{y}_n,\eta'))}\ud\xi\ud\eta
    \ud\eta'\ud t\\
 &=\frac{1}{2\pi\hbar}\int_{\Lambda_{\boldsymbol{\alpha}'\boldsymbol{\alpha}}}
    g_{\boldsymbol{\alpha}',k',\hbar}
    \ue^{\frac{\ui\pi}{4}f'_{\boldsymbol{\alpha}}}|\partial_\xi x|^{-\frac{1}{2}}
    e_{\boldsymbol{\alpha},k,\hbar}\ud t\ud\xi'\ud\eta' + O\bigl(\hbar^{\infty}\bigr),
\end{split}
\end{equation}
where $g_{\boldsymbol{\alpha}',k',\hbar}$ and $e_{\boldsymbol{\alpha},k,\hbar}$ are 
coefficients of half-densities on  $\Lambda_{\boldsymbol{\alpha}'\boldsymbol{\alpha}}$ 
possessing asymptotic expansions in $\hbar$ that are uniform in $t\in[0,T]$. 
In particular,
\begin{equation}
\label{q96b}
\begin{split}
e_{\boldsymbol{\alpha},k,\hbar}(\lambda)
  &\sim \left(\ui\hbar\mathcal{L}_{X_{H_{\rm ext}}}
     {a}_{\boldsymbol{\alpha},k}(\lambda) +
    \sum_{n\geq 2}\hbar^n e_{\boldsymbol{\alpha},k,n}(\lambda)\right),
\end{split}
\end{equation}
with $\lambda=(t,\xi',\eta')\in\Lambda_{\boldsymbol{\alpha}'\boldsymbol{\alpha}}$. 
Moreover,
\begin{equation}
\label{q132}
f'_{\boldsymbol{\alpha}} := \sgn\Hess_{y,\xi,\eta}\phi^{II}_{\boldsymbol{\alpha}}=
\sgn\begin{pmatrix}
0 & 0 & -1\\
0 & -\partial^2_{{\xi}}S^{II}_{\boldsymbol{\alpha}} 
& -\partial_{{\xi}{\eta}}S^{II}_{\boldsymbol{\alpha}}\\
-1 & -\partial_{{\xi}{\eta}}S^{II}_{\boldsymbol{\alpha}}
& -\partial^2_{{\eta}}S^{II}_{\boldsymbol{\alpha}}
\end{pmatrix}.
\end{equation} 
\end{lemma}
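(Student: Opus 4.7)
The proof follows the template of Lemmata~\ref{q93}, \ref{q108} and \ref{q121}, with the roles of $\phi^I$ and $\phi^{II}$ exchanged relative to Lemma~\ref{q93}. First, I would apply Lemma~\ref{a1} to the $m,n$-sums in \eqref{q131}, converting them into integrations over auxiliary variables $x',y\in[0,\ell_x]$ with a phase correction $\ue^{\frac{\ui}{\hbar}(sx'\ell_\xi - sy\ell_\xi)}$, where $s\in\gz$ is the unique integer furnished by Lemma~\ref{q190}, modulo a remainder of size $O(\hbar^\infty)$. This yields a six-fold oscillatory integral in the variables $(t,x',y,\xi,\eta,\eta')$.

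Next, I would apply the stationary phase theorem to the three-dimensional $(y,\xi,\eta)$-integration, keeping $(t,x',\eta')$ as parameters. Using $\phi^{II}_{\boldsymbol{\alpha}}(t,x',y,\xi,\eta) = x'\xi-y\eta-S^{II}_{\boldsymbol{\alpha}}(t,\xi,\eta)$ together with $\phi^{I}_{\boldsymbol{\alpha}'}(t,x',y,\eta') = S^{I}_{\boldsymbol{\alpha}'}(t,x',\eta')-y\eta'$, the critical-point equations become
\begin{align*}
\partial_y(\phi^{II}_{\boldsymbol{\alpha}}-\phi^{I}_{\boldsymbol{\alpha}'})-s\ell_\xi = 0 &\iff \eta'=\eta+s\ell_\xi,\\
\partial_\xi \phi^{II}_{\boldsymbol{\alpha}} = 0 &\iff \hat{x}^{\boldsymbol{\alpha}}=\partial_\xi S^{II}_{\boldsymbol{\alpha}}(t,\xi,\eta),\\
\partial_\eta \phi^{II}_{\boldsymbol{\alpha}} = 0 &\iff \hat{y}^{\boldsymbol{\alpha}}=-\partial_\eta S^{II}_{\boldsymbol{\alpha}}(t,\xi,\eta).
\end{align*}
These can be solved uniquely (up to the usual lattice ambiguity $r\ell_x$ in the lifted spatial coordinates) for $(y,\xi,\eta)$ as functions of $(t,x',\eta')$, placing the stationary locus on $\Lambda_{\boldsymbol{\alpha}'\boldsymbol{\alpha}}$ in either of the two admissible lifts, in complete analogy with \eqref{q114} and \eqref{q210}.

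Since $\phi^{I}_{\boldsymbol{\alpha}'}$ is linear in $y$ and independent of $(\xi,\eta)$, the $(y,\xi,\eta)$-Hessian of the total phase equals $\Hess_{y,\xi,\eta}\phi^{II}_{\boldsymbol{\alpha}}$, i.e.\ the matrix displayed in \eqref{q132}. Expansion along its first row gives determinant $\partial^2_\xi S^{II}_{\boldsymbol{\alpha}} = \partial_\xi x$, which is nonzero by our choice of partition, and signature $f'_{\boldsymbol{\alpha}}$; the phase itself vanishes at stationary points by $W(\lambda)-W(\lambda)=0$, as in \eqref{q104}. The stationary phase theorem therefore produces the Jacobian factor $|\partial_\xi x|^{-1/2}$ together with the signature factor $\ue^{\frac{\ui\pi}{4}f'_{\boldsymbol{\alpha}}}$ in front of the remaining three-fold integral, and the half-density identification \eqref{q100a} applied to $c^{II}_{\boldsymbol{\alpha},k}$ yields the leading asymptotics \eqref{q96b}. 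The only delicate point, identical to the previous three cases, is to verify that the integer $s\in\gz$ produced by Lemma~\ref{a1} coincides with the unique translation of Lemma~\ref{q190}, and to track the spatial lift $r\ell_x$ between the two charts consistently; both are guaranteed by the diameter restrictions \eqref{q26}, so no further difficulties arise beyond those already handled in Lemmata~\ref{q93}, \ref{q108} and \ref{q121}.
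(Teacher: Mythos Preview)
Your proof is correct and follows essentially the same approach as the paper: convert the $m,n$-sums to integrals via Lemma~\ref{a1}, then apply stationary phase in the variables $(y,\xi,\eta)$, identifying the Hessian \eqref{q132} with determinant $\partial_\xi x$. The only cosmetic difference is that you parametrise the remaining integral by $(t,x',\eta')$ whereas the paper records it in the coordinates $(t,\xi',\eta')$ on $\Lambda_{\boldsymbol{\alpha}'\boldsymbol{\alpha}}$; since both charts are available on this overlap this is merely a choice of local coordinates for the half-density.
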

\begin{proof}
We again employ the same strategy as in the proofs of Lemmata~\ref{q93}--\ref{q121}.
The sums over $n$ and $m$ are evaluated with the help of Lemma~\ref{a1} and, 
with some modifications, this gives a result similar to \eqref{q200}. Here we 
apply the stationary phase theorem to the integrations over the variables 
$(y,\xi,\eta)$. The conditions determining stationary points of the phase 
function are 
\begin{equation}
\label{q136}
\begin{aligned}
\partial_y(\phi^{II}_{\boldsymbol{\alpha}}-\phi^{II}_{\boldsymbol{\alpha}'}) = 0\quad
&\Leftrightarrow \quad\eta'=\eta+s\ell_\xi\\
\partial_{{\eta}}(\phi^{II}_{\boldsymbol{\alpha}}-\phi^{II}_{\boldsymbol{\alpha}'}) = 0\quad
&\Leftrightarrow\quad\hat{y}^{\boldsymbol{\alpha}}=
\partial_{\eta}S^{II}_{\boldsymbol{\alpha}}(t,\xi,\eta)\\
\partial_\xi(\phi^{II}_{\boldsymbol{\alpha}}-\phi^{II}_{\boldsymbol{\alpha}'}) = 0 \quad
&\Leftrightarrow\quad\hat{x}^{\boldsymbol{\alpha}}=
\partial_\xi S^{II}_{\boldsymbol{\alpha}}(t,{\xi},{\eta}).
\end{aligned}
\end{equation}
These equations can be solved for $x(t,\xi',\eta')$ and $y(t,\xi',\eta')$. 
They determine the set of stationary points that can either be given in the form 
\begin{equation}
\label{q98a}
\bigl(t,x(t,\xi',\eta')-r\ell_x,y(t,\xi',\eta')-r\ell_x,\xi'-s\ell_\xi,\eta'-s\ell_\xi\bigr)
\in\mathfrak{V}^{II}_{\boldsymbol{\alpha}},
\end{equation}
or
\begin{equation}
\label{q99a}
\bigl(t,x(t,\xi',\eta'),y(t,\xi',\eta'),\eta')\in
\mathfrak{V}^I_{\boldsymbol{\alpha}'}.
\end{equation}
In order to apply the stationary phase theorem we still need the
determinant of the Hessian as given in \eqref{q132}, i.e.
\begin{equation}
\label{q137}
\det\Hess_{y,\xi,\eta}(\phi^{II}_{\boldsymbol{\alpha}}-\phi^{II}_{\boldsymbol{\alpha}'}) = 
\partial^2_\xi S^{II}_{\boldsymbol{\alpha}}=\partial_{\xi}x.
\end{equation}
We recall that we have chosen the partitions such that $\partial_{\xi}x\neq 0$.
\end{proof}
\subsection{Semiclassical construction}
\label{q159}
We now have all the means necessary for the construction of a discrete scFIO as 
in Definition~\ref{q58} that satisfies the estimate \eqref{q53} to any desired 
(positive) power in $\hbar$. In order to achieve this, Lemma~\ref{q50} implies 
that we need an estimate
\begin{equation}
\label{e1}
\int_0^T\tr R(t)^\ast R(t)\ud t = O(\hbar^\beta)
\end{equation}
with $\beta>3$. To get this estimate, we perform the double sum over
$\boldsymbol{\alpha}$ and $\boldsymbol{\alpha'}$ in \eqref{q156} by first fixing 
$\boldsymbol{\alpha'}$ and summing over $\boldsymbol{\alpha}$, and subsequently
summing over $\boldsymbol{\alpha'}$. 

In the following approach the Lemmata \ref{q93}--\ref{q130} will be used
for the summation over $\boldsymbol{\alpha}$, and this should only involve the 
densities $e_{\boldsymbol{\alpha},k,\hbar}$. However, the Lemmata \ref{q121} and 
\ref{q130} show that for fixed ${\boldsymbol{\alpha'}}$ this is not necessarily 
possible as the phases have to be chosen correctly. In order to make such an 
approach feasible we therefore require that $a^{I}_{\boldsymbol{\alpha},k}$ is of the 
form
\begin{equation}
\label{e2}
a^{I}_{\boldsymbol{\alpha},k}(t,x,y,\eta)  =
\kappa_{\epsilon}(y-y_{t,x,\eta})A^I_{\boldsymbol{\alpha},k}(t,x,\eta),
\end{equation}
with $A^I_{\boldsymbol{\alpha},k}$ defined locally on $\Lambda$. Moreover, we require 
that the phases of the amplitude functions $a^{I/II}_{\boldsymbol{\alpha},k}$ are given 
by $\pi/2$ times the Maslov index \eqref{q17}, such that
\begin{equation}
\label{e3}
a^{I/II}_{\boldsymbol{\alpha},k}\ue^{-\frac{\ui\pi}{2}\mu(\gamma(\lambda))}\in\rz.
\end{equation}
For this to be possible we need to investigate the phases $f_{\boldsymbol{\alpha}}$ 
and $f'_{\boldsymbol{\alpha}}$ in \eqref{q122} and \eqref{q131} in combination 
with an additional phase coming from the Maslov bundle contribution. 
\begin{lemma}
\label{q139}
Let $\lambda\in\Lambda_{\boldsymbol{\alpha}'\boldsymbol{\alpha_1}}\cap
\Lambda_{\boldsymbol{\alpha}'\boldsymbol{\alpha_2}}$, and let $f_{\boldsymbol{\alpha_1}}$ 
and $f'_{\boldsymbol{\alpha_2}}$ be as defined in \eqref{q122} and \eqref{q131},
respectively. Then
\begin{equation}
\label{q140}
-\sgn\Hess\phi^{I}_{\boldsymbol{\alpha_1}}(\lambda)+f_{\boldsymbol{\alpha_1}}(\lambda)=
-\sgn\Hess\phi^{II}_{\boldsymbol{\alpha_2}}(\lambda)+f'_{\boldsymbol{\alpha_2}}(\lambda).
\end{equation}
\end{lemma}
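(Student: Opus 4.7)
The plan is to reduce both sides of \eqref{q140} to a single quantity, namely $-\sgn(\partial_x\xi)$ evaluated at $\lambda$, by applying Sylvester's law of inertia (through the Schur complement) to each Hessian in turn, and then identifying $\sgn(\partial_x\xi)$ with $\sgn(\partial_\xi x)$ through the Legendre transform between the two generating functions.

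For the left-hand side of \eqref{q140} I would split $\Hess\phi^{I}_{\boldsymbol{\alpha_1}}$ as given in \eqref{q101} into the $1\times 1$ top block $\partial_x^2 S^{I}_{\boldsymbol{\alpha_1}}$ and the $2\times 2$ bottom block, which is exactly the matrix defining $f_{\boldsymbol{\alpha_1}}$ in \eqref{q123}. Because the off-diagonal coupling vector has a zero in its first entry, the Schur correction vanishes by a short direct calculation, and Sylvester's law yields $\sgn\Hess\phi^{I}_{\boldsymbol{\alpha_1}}(\lambda) = f_{\boldsymbol{\alpha_1}}(\lambda) + \sgn(\partial_x\xi)(\lambda)$, so the left-hand side reduces to $-\sgn(\partial_x\xi)(\lambda)$.

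For the right-hand side I would proceed in two substeps. First, after reordering the variables in \eqref{q119} to $(x,\xi,y,\eta)$ the Hessian $\Hess\phi^{II}_{\boldsymbol{\alpha_2}}$ takes the block form $\begin{pmatrix} 0 & G \\ G^T & -S \end{pmatrix}$ with $G=\begin{pmatrix} 1 & 0 \\ 0 & -1 \end{pmatrix}$ invertible and $S$ the $2\times 2$ matrix of pure second derivatives of $S^{II}_{\boldsymbol{\alpha_2}}$. Such a matrix is congruent, via the shift $K=\tfrac12(G^T)^{-1}S$, to $\begin{pmatrix} 0 & G \\ G^T & 0 \end{pmatrix}$, whose signature is manifestly $0$; hence $\sgn\Hess\phi^{II}_{\boldsymbol{\alpha_2}}=0$. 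Second, reordering the variables in \eqref{q132} to $(\xi,y,\eta)$ produces the same block pattern as in the previous paragraph, with top block $-\partial_\xi^2 S^{II}_{\boldsymbol{\alpha_2}}$ and bottom block of signature zero; an analogous vanishing Schur correction then gives $f'_{\boldsymbol{\alpha_2}}(\lambda) = -\sgn(\partial_\xi x)(\lambda)$. Hence the right-hand side of \eqref{q140} equals $-\sgn(\partial_\xi x)(\lambda)$.

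The conceptually essential final step is to identify $\sgn(\partial_x\xi)$ with $\sgn(\partial_\xi x)$ at $\lambda$. Since $\phi^{I}_{\boldsymbol{\alpha_1}}$ and $\phi^{II}_{\boldsymbol{\alpha_2}}$ parametrise the same piece of $\Lambda$ near $\lambda$, the functions $S^{I}_{\boldsymbol{\alpha_1}}$ and $S^{II}_{\boldsymbol{\alpha_2}}$ are related by a partial Legendre transform in the pair $x\leftrightarrow\xi$ (with $t$ and $\eta$ as parameters); the non-vanishing conditions on $\partial_x\xi$ and $\partial_\xi x$ imposed immediately after Definition~\ref{q58} state precisely that this transform is non-singular on the overlap. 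Differentiating the Legendre relation $S^{II}(t,\xi,\eta)+S^{I}(t,x,\eta)=x\xi$ at a critical point twice yields $\partial_x^2 S^{I}\cdot\partial_\xi^2 S^{II}=1$, so the two derivatives share the same sign and \eqref{q140} follows. The only place where some care is needed is the signature-zero fact for the full $4\times 4$ Hessian $\Hess\phi^{II}_{\boldsymbol{\alpha_2}}$; the remaining steps are routine Sylvester bookkeeping.
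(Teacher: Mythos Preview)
Your proof is correct and relies on the same two underlying facts as the paper---Sylvester's law of inertia and the relation $\partial_\xi x\cdot\partial_x\xi=1$ on the overlap---but the organisation is genuinely different. The paper first records $\sgn\Hess\phi^{II}_{\boldsymbol{\alpha_2}}=0=f_{\boldsymbol{\alpha_1}}$ (the $2\times 2$ matrix in \eqref{q123} has determinant $-1$), and then settles the remaining identity $-\sgn\Hess\phi^{I}_{\boldsymbol{\alpha_1}}=f'_{\boldsymbol{\alpha_2}}$ in one stroke by exhibiting an explicit congruence $-\Hess_{y,\xi,\eta}\phi^{II}_{\boldsymbol{\alpha_2}}=P^{\ast}(\Hess\phi^{I}_{\boldsymbol{\alpha_1}})P$, where $P$ encodes the change of variable $\partial_\xi=(\partial_\xi x)\partial_x$; Sylvester then finishes. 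You instead reduce each of the four Hessians separately via Schur complements to a scalar sign, obtaining $-\sgn(\partial_x\xi)$ on the left and $-\sgn(\partial_\xi x)$ on the right, and only at the end invoke the Legendre/inverse-function relation $\partial_x^2 S^{I}\cdot\partial_\xi^2 S^{II}=1$ to match them. Your route is more explicit and makes the vanishing of the Schur corrections transparent; the paper's is shorter but hides the computation inside the congruence matrix $P$. One minor remark: the $4\times 4$ Hessian \eqref{q119} is already in the block form $\begin{pmatrix}0&G\\ G^{T}&-S\end{pmatrix}$ in the given ordering $(x,y,\xi,\eta)$, so no reordering is needed there.
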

\begin{proof}
We have to prove that
\begin{equation}
\label{q141}
-\sgn\Hess\phi^{I}_{\boldsymbol{\alpha_1}}+\sgn\Hess_{y,\eta}\phi^{I}_{\boldsymbol{\alpha_1}}=
-\sgn\Hess\phi^{II}_{\boldsymbol{\alpha_2}}+\sgn\Hess_{y,\xi,\eta}\phi^{II}_{\boldsymbol{\alpha_2}}.
\end{equation}
$\Hess\phi_{\boldsymbol{\alpha}_2}$ is given by \eqref{q119}, and using \eqref{q123} 
a simple calculation shows that
\begin{equation}
\label{q142}
\sgn\Hess\phi^{II}_{\boldsymbol{\alpha_2}}=0=\sgn\Hess_{y,\eta}\phi^{I}_{\boldsymbol{\alpha_1}}.
\end{equation} 
We now compare \eqref{q101} in the proof of Lemma \ref{q93} with \eqref{q132} 
in the proof of Lemma \ref{q130}. Since 
$\lambda\in\Lambda_{\boldsymbol{\alpha}'\boldsymbol{\alpha}}$, in a neighbourhood of 
$\lambda$ there is a bijection between $\partial_\xi$ and $\partial_x$ given by
\begin{equation}
\label{q143}
\partial_\xi=\left(\partial_\xi x\right)\partial_x.
\end{equation}
In combination with Lemma \ref{q168} this shows that there exists a hermitian 
matrix $P$ that is a product of a permutation matrix and a diagonal matrix with 
diagonal entries $1$ and $\eqref{q143}$, such that
\begin{equation}
\label{q144}
-\Hess_{y,{\xi},{\eta}}\phi^{II}_{\boldsymbol{\alpha_2}}(\lambda)=
\left(P^\ast\Hess\phi^{I}_{\boldsymbol{\alpha_1}}P\right)(\lambda),
\end{equation} 
where $\Hess\phi^{I}_{\boldsymbol{\alpha_1}}$ is given in \eqref{q101}. Thus, 
\cite[Theorem 3,p.\ 187]{Lancaster:1969} shows that
\begin{equation}
\label{q145}
-\sgn\Hess\phi^{I}_{\boldsymbol{\alpha_1}}(\lambda)=
\sgn\Hess_{y,{\xi},{\eta}}\phi^{II}_{\boldsymbol{\alpha_2}}(\lambda),
\quad\lambda\in\Lambda_{\boldsymbol{\alpha}'\boldsymbol{\alpha}}, 
\end{equation}
which proves the claim.
\end{proof}
We note that an application of Lemma \ref{q139} to the right-hand sides of
\eqref{q122} and \eqref{q131} establishes the desired property that the 
phases appearing in these expressions only depend on $\boldsymbol{\alpha'}$, 
but not on $\boldsymbol{\alpha}$.

In order to facilitate the construction of the semiclassical time evolution 
we introduce the globally defined half-density $|A_{0}|\ud\lambda^{\frac{1}{2}}$ 
as the solution of the the initial value problem 
\begin{equation}
\label{q85}
\begin{cases}
\left(\mathcal{L}_{X_{\rm ext}}|A_0|\right)(\lambda)\ud \lambda^{\frac{1}{2}}=0, 
  & \lambda\in\Lambda,\\
|A_0|(\lambda)\ud\lambda^{\frac{1}{2}}=|\ud t\wedge \ud y\wedge \ud\eta|^\frac{1}{2}, 
  & \lambda\in\Lambda_0.
\end{cases}
\end{equation}
Here $|\ud t\wedge \ud y\wedge\ud \eta|$ is the canonical density 
of the conormal bundle $\Lambda_0$, see \eqref{a2}. The unique solution of 
\eqref{q85} is   
\begin{equation}
\label{q86}
|A_0|\ud\lambda^{\frac{1}{2}}=
\pi^{\ast}\left(|\ud t\wedge \ud y\wedge\ud \eta|^\frac{1}{2}\right),
\end{equation}
where $\pi^{\ast}$ is the pull-back of the projection
\begin{equation}
\label{q87}
\pi(t,\tau,x,\xi,y,-\eta) = (t,y,-\eta), 
\quad (t,\tau,x,\xi,y,-\eta)\in\Lambda,
\end{equation}
see \cite[Lemma 6.1]{Duistermaat:1975}. The quantity $A_0\ud\lambda^\frac{1}{2}$ 
then involves the Maslov-factor \eqref{e3} in addition to 
$|A_0|\ud\lambda^{\frac{1}{2}}$.

We also define a shifted nearest neighbourhood $N(\lambda)$ of 
$\lambda\in\Lambda$ using the partition
$\left\{\Lambda_{\boldsymbol{\alpha}}\right\}_{\boldsymbol{\alpha}}$,
\begin{equation}
\label{t5}
\lambda'\in N(\lambda)\quad\Leftrightarrow \quad 
\lambda'\in\Lambda_{\boldsymbol{\alpha}'\boldsymbol{\alpha}} 
\quad \mbox{and} \quad 
\lambda\in\Lambda_{\boldsymbol{\alpha}\boldsymbol{\alpha}'}.
\end{equation}
Then
\begin{equation}
\label{t6}
\lambda'\sim\lambda \quad \Leftrightarrow \quad 
\lambda',\lambda\in N(\lambda),
\end{equation}
defines an equivalence relation on 
\begin{equation}
\label{t7aa}\textstyle
\tilde{\Lambda}:=\bigcup_{\boldsymbol{\alpha}}\Lambda_{\boldsymbol{\alpha}}
\setminus\bigcup_{\boldsymbol{\alpha}}\partial\Lambda_{\boldsymbol{\alpha}},
\end{equation} 
and thus yields a disjoint decomposition 
\begin{equation}
\label{t8}
\left\{\tilde{\Lambda}_{\boldsymbol{\beta}}\right\}_{\boldsymbol{\beta}}
\end{equation}
of $\tilde{\Lambda}$. The boundaries in \eqref{t6} can be neglected 
since they do not contribute to the integral estimates in the following. 
\begin{prop}
\label{q146}
For every $\beta>0$ there exists a discrete scFIO $U_{\rm scl}(t)$ in
the sense of Definition~\ref{q58} that satisfies the assumptions of
Lemma~\ref{q50}.  Away from caustics it has the following Van Vleck
form
\begin{equation}
\label{q146a}
{U_\mathrm{scl}(t)}_{m_Nn_N}=\frac{1}{\sqrt{2\pi\hbar}}\frac{\ell_x}{N}\sum_{\alpha_N}
\left|\det\left(\frac{\partial^2W}{\partial x\partial y}(\lambda_{\alpha_N})\right)
\right|^{\frac{1}{2}}\ue^{\frac{\ui}{\hbar}W(\lambda_{\alpha_N})+\frac{\ui\pi}{2}\tilde{\mu}(\gamma_{\alpha_N})}
+O(\hbar).
\end{equation}
Here
$\lambda_{\alpha_N}=(t,\tau,x_{m_N},\xi_{\alpha_N},y_{n_N},-\eta_{\alpha_N})$
is in a suitable neighbourhood of the non-caustic point, the sum
extends over all trajectories $\gamma_{\alpha_N}$ connecting
$(y_{n_N},\eta_{\alpha_N})$ with $(x_{m_N},\xi_{\alpha_N})$, and
\begin{equation}
\label{e6}
\tilde{\mu}(\gamma_{\alpha_N})={\mu}(\gamma_{\alpha_N})+
\frac{1}{2}\sgn\Hess\phi_{\alpha}(\lambda_{\alpha_N}).
\end{equation}
\end{prop}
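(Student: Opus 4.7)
The plan is to determine the amplitudes $a_{\boldsymbol{\alpha},k}$ appearing in~\eqref{q167} order by order in $\hbar$ so that the error term $R(t)$ satisfies~\eqref{e1} with arbitrarily large $\beta$, and then to extract the Van Vleck form at non-caustic points by applying stationary phase to the resulting type-I oscillatory integral representation. The main mechanism that drives the construction is the four-cases computation of $\tr(R^{*}_{\boldsymbol{\alpha}',k'}R_{\boldsymbol{\alpha},k})$ in Lemmata~\ref{q93}--\ref{q130}, which expresses this quantity as an integral over $\Lambda_{\boldsymbol{\alpha}'\boldsymbol{\alpha}}$ of $\overline{g_{\boldsymbol{\alpha}',k',\hbar}}\,e_{\boldsymbol{\alpha},k,\hbar}$ with a phase factor of stationary-phase type.

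For the amplitude construction I would first observe, via Lemma~\ref{q139}, that the phases $\sgn\Hess\phi^{I}_{\boldsymbol{\alpha}}$ and $\sgn\Hess\phi^{II}_{\boldsymbol{\alpha}}$ arising in~\eqref{q94},~\eqref{q109},~\eqref{q122},~\eqref{q131} combine with the additional signatures $f_{\boldsymbol{\alpha}}$, $f'_{\boldsymbol{\alpha}}$ to yield a phase that depends only on $\boldsymbol{\alpha}'$ and can be absorbed into a single Maslov-type factor, provided $a^{I/II}_{\boldsymbol{\alpha},k}$ are chosen to satisfy~\eqref{e3}. Summing over $\boldsymbol{\alpha}$ for fixed $\boldsymbol{\alpha}'$ then forces the leading contribution to vanish, which by~\eqref{q96},~\eqref{q111},~\eqref{q96a},~\eqref{q96b} amounts to the transport equation $\mathcal{L}_{X_{H_{\rm ext}}}A_{\boldsymbol{\alpha},0}=0$. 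Together with the initial condition dictated by $\|U_{\rm scl}(0)-\eins\|_{\HS}=O(\hbar^\infty)$ this selects the globally defined half-density~\eqref{q85}--\eqref{q86}, up to the Maslov factor~\eqref{e3} that ensures consistent gluing via the transition functions~\eqref{q15}. Higher-order $A_{\boldsymbol{\alpha},k}$ with $k\geq 1$ would be determined iteratively by inhomogeneous transport equations driven by the subleading terms in $e_{\boldsymbol{\alpha},k-1,\hbar}$, which are solvable by integration along the flow with vanishing initial condition, and Borel summation produces a smooth amplitude. Truncating at a sufficiently large order $k=k(\beta)$ to absorb the $\hbar^{-2}$ prefactor in \eqref{q53} yields the desired bound $\int_0^T\|R(t)\|^2_{\HS}\,\ud t=O(\hbar^\beta)$.

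For the Van Vleck form I would fix $(x_{m_N},y_{n_N})$ in a non-caustic region. Each classical trajectory $\gamma_{\alpha_N}$ connecting the endpoints in time $t$ lies in a patch $\Lambda_{\boldsymbol{\alpha}}$ on which, after shrinking if necessary, $\partial_x\xi\neq 0$ so that the type-I representation $\phi^{I}=S^{I}(t,x,\eta)-y\eta$ is available. The half-density $|A_{0}|\,\ud\lambda^{1/2}$ from~\eqref{q86}, when pulled back to coordinates $(t,x,\eta)$ via~\eqref{q87}, has coefficient $|\partial^{2}_{x\eta}S^{I}|^{1/2}$. Applying stationary phase to the $\eta$-integral in~\eqref{q167} at the critical point $y_{n_N}=\partial_\eta S^{I}(t,x_{m_N},\eta)$ produces the factor $\sqrt{2\pi\hbar}\,|\partial^{2}_\eta S^{I}|^{-1/2}\ue^{\frac{\ui\pi}{4}\sgn(\partial^{2}_\eta S^{I})}$ and replaces the phase by the action $W(\lambda_{\alpha_N})$. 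The Legendre-type identity $\partial_x\partial_y W=\partial^{2}_{x\eta}S^{I}/\partial^{2}_\eta S^{I}$, which follows by implicit differentiation of $y=\partial_\eta S^{I}$, then converts the resulting amplitude into $|\partial^{2}W/\partial x\partial y|^{1/2}$, and the signature contribution combines with the Maslov factor~\eqref{e3} to give $\tilde\mu$ as in~\eqref{e6}. Summing over the finitely many trajectories yields~\eqref{q146a}, with higher orders absorbed into the $O(\hbar)$ remainder.

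The main obstacle is the coherent bookkeeping of the three independent sources of phase: the transition functions~\eqref{q15} of the Maslov bundle across patches of different types, the stationary-phase signatures in Lemmata~\ref{q93}--\ref{q130}, and the final stationary phase that collapses the $\eta$-integral to the position-representation matrix element. Lemma~\ref{q139} is the decisive tool that makes these three contributions globally compatible on $\Lambda$, while the size restriction~\eqref{q26} on the patches, together with the non-vanishing conditions on $\partial_x\xi$ and $\partial_\xi x$ imposed after~\eqref{q167}, guarantees that each stationary phase argument used in the above construction is valid within its patch.
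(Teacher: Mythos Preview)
Your proposal is correct and follows essentially the same route as the paper: the iterative transport construction via Lemmata~\ref{q93}--\ref{q130} and Lemma~\ref{q139}, the global half-density~\eqref{q85}--\eqref{q86} with Maslov phase~\eqref{e3}, Borel summation, and a stationary-phase reduction for the Van Vleck form. The one point you underplay is the verification of the initial condition~\eqref{q52}: the paper does this explicitly by taking $|A^I_0|(0,x,\eta)=1$, using $\phi^I(0,x,y,\eta)=(x-y)\eta$, applying the summation Lemma~\ref{a1} to convert the $\eta$-integral into a discrete sum, and then checking that the resulting exponential sum over the lattice collapses to $\delta_{mn}$ up to $O(\hbar^\infty)$---this step genuinely uses the discrete structure and is not just a formality.
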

In order to emphasise that matrix indices range from $0$ to $N-1$,
with a direct implication on the points $x_n$ and, therefore, on
classical quantities, we added an index $N$. Note that the functions
$\phi_{\alpha}$ in \eqref{e6} are independent of $N$.
\begin{proof}
We have to prove that Lemma \ref{q50} is satisfied up to an arbitrary but fixed 
$T>0$. We fix $T$ and choose $\mathfrak{V}_{\boldsymbol{\alpha}}$ sufficiently small 
so as to fulfill the assumptions made previously in this section.

The first step is to prove the condition \eqref{q52} imposed on the initial value
$U_{\rm scl}(0)$. We recall that the initial Lagrangian manifold $\Lambda_0$,
see \eqref{a2}, is generated by a function $\phi^I(t,x,y,\eta)$, see \eqref{q8}.
In agreement with \eqref{q86} we choose for the amplitude function in \eqref{q167},
\begin{equation}
\label{q147}
|A^I_k|(0,x,\eta)=
\begin{cases}
1, &k=0\\
0, &k\geq1,
\end{cases}
\end{equation}
and localise
\begin{equation}
\label{179}
|A_{\boldsymbol{\alpha},k}|\ud\lambda^{\frac{1}{2}}:=
\Psi_{\boldsymbol{\alpha}} |A^I_k|\ud\lambda^{\frac{1}{2}}, 
\end{equation}
where $\Psi_\alpha$ is given in \eqref{q22}.

In order to evaluate $U_{\rm scl}(0)$ we recall that
\begin{equation}
\label{q148}
\phi^I (0,x,y,\eta)=(x-y)\eta
\end{equation}
holds. With Lemma~\ref{a1} we then obtain
\begin{equation}
\label{q180}
\int_{\mathfrak{V}_{\boldsymbol{\alpha}}^\eta}\Psi_{\boldsymbol{\alpha}}
\ue^{\frac{\ui}{\hbar}\phi_{\boldsymbol{\alpha}}}\ud \eta
=\sum_{j=0}^{N-1}
\Psi_{\boldsymbol{\alpha}}(0,\tau,\hat{x}_,\eta_j,\hat{y}_n,-\eta_j)
\ue^{\frac{\ui}{\hbar}(\hat{x}_m-\hat{y}_n)\eta_j} + O(\hbar^{\infty})
\end{equation}
where $\eta_j =j\ell_\xi/N+r\ell_\xi$ with some $r\in\gz$. The
condition \eqref{q197} hence implies
\begin{equation}
\label{q150}
U_{\rm scl}(0)_{mm} = 1 + O(\hbar^{\infty}).
\end{equation}
Off the diagonal, for $U\left(0\right)_{mn}$ with $m\neq n$, we note that 
\begin{equation}
\label{q149}
\frac{1}{\hbar}(\hat{x}_m-\hat{y}_n)\eta_j = \frac{2\pi(m-n)j}{N}+
2\pi l\quad \mbox{for some} \quad l\in\gz.
\end{equation}
Therefore,
\begin{equation}
\sum_{j=0}^{N-1}\ue^{\frac{\ui}{\hbar}(\hat{x}_m-\hat{y}_n)\eta_j}=0,
\end{equation}
which implies that
\begin{equation}
\label{q153}
U_{\rm scl}(0)_{mn}=O(\hbar^\infty).
\end{equation}
Hence, the initial condition \eqref{q52} is fulfilled.

The main task in this proof is to obtain the estimate \eqref{q51}. Every 
term $\tr R_{\boldsymbol{\alpha}',k'}(t)^\ast R_{\boldsymbol{\alpha},k}(t)$ can be 
written as one of the four cases described by the Lemmata \ref{q93}--\ref{q130}. 
We evaluate the integrals ocurring in these Lemmata recursively in their orders 
$k$ on every set $\tilde{\Lambda}_{\boldsymbol{\beta}}$, and note that in each of 
the four cases the half-densities $e_{\boldsymbol{\alpha},k,\hbar}$ possess complete 
asymptotic expansions in powers of $\hbar$ with similar leading terms, see 
\eqref{q96}, \eqref{q111}, \eqref{q96a} and \eqref{q96b}. Since 
$\partial_x\xi=(\partial_\xi x)^{-1}$, the additional factors in \eqref{q94} 
and \eqref{q131} reflect the choice of coordinates for the respective 
half-densities $e_{\boldsymbol{\alpha},k,\hbar}$.

Suppose a global half-density $|A_{k}|\ud\lambda^{\frac{1}{2}}$ on $\Lambda$ 
is given. It can be localised in the covering space using the partition of unity 
$\{\Psi_{\boldsymbol{\alpha}},\Lambda_{\boldsymbol{\alpha}}\}_{\boldsymbol{\alpha}}$ 
of \eqref{q196}, as 
\begin{equation}
\label{q162}
|A_{\boldsymbol{\alpha},k}|\ud\lambda^{\frac{1}{2}}:=
\Psi_{\boldsymbol{\alpha}}|A_{k}|\ud\lambda^{\frac{1}{2}}.
\end{equation}
Due to \eqref{q197} we then have 
\begin{equation}
\label{q163}
\sum_{\boldsymbol{\alpha}}\left(\mathcal{L}_{X_{H_{\rm ext}}}|A_{\boldsymbol{\alpha},k}|
\ud\lambda^{\frac{1}{2}}_{\boldsymbol{\alpha}}\right)(\lambda_{\boldsymbol{\alpha}})
=\sum_{\boldsymbol{\alpha}}\left(\Psi_{\boldsymbol{\alpha}}\left(\mathcal{L}_{X_{H_{\rm ext}}}
|A_k|\ud\lambda^{\frac{1}{2}}_{\boldsymbol{\alpha}}\right)\right)(\lambda_{\boldsymbol{\alpha}})
\end{equation}
in a neighbourhood of the points $\lambda_{\boldsymbol{\alpha}}$, where the sum 
in \eqref{q163}, respectively \eqref{q197}, extends over all ${\boldsymbol{\alpha}}$ 
with $\lambda_{\boldsymbol{\alpha}}\in\Lambda_{\boldsymbol{\alpha}}$ corresponding to the 
same point $\lambda$ in the extended torus phase space, i.e., where 
$\check{\lambda}_{\boldsymbol{\alpha}}=\lambda$.

Hence, combining the Lemmata \ref{q93}--\ref{q130} 
with Lemma \ref{q139} shows that for fixed ${\boldsymbol{\alpha'}}$ in 
\eqref{q156} we can extract the coefficient depending on ${\boldsymbol{\alpha'}}$ 
and perform the ${\boldsymbol{\alpha'}}$-summation. This shows that if we start 
with the globally defined half-density $A_{0}\ud\lambda^{\frac{1}{2}}$ of \eqref{q86}, 
using \eqref{q163} we see that when $k=0$,
\begin{equation}
\label{q164}
\sum_{\boldsymbol{\alpha}}\mathcal{L}_{X_{H_{\rm ext}}}\left(|A_{\boldsymbol{\alpha},0}|
\ud\lambda^{\frac{1}{2}}\right)=0.
\end{equation}
Then, for fixed ${\boldsymbol{\alpha'}}$, we sum all half-densities of the 
next order in $\hbar$, except the term coming from the Lie derivatives, and 
denote the resulting half-density on $\Lambda$ by $r_1\ud\lambda^{\frac{1}{2}}$. 
The next step then is to solve the inhomogeneous Cauchy problem, 
\begin{equation}
\label{q165}
-\left(\mathcal{L}_{H_{\rm ext}}|A_{k}|\ud\lambda^{\frac{1}{2}}\right)
+r_{k}\ud\lambda^{\frac{1}{2}}  = 0,
\end{equation}
for $k=1$ by the method of characteristics in analogy to \cite[Theorem
1.4.1]{Duistermaat:1974}. This can be done since the extended
Hamiltonian flow is complete and tangential to $\Lambda$. Again,
\eqref{q162} and \eqref{q163}, as well as an application of
Lemma~\ref{q190}, show that in \eqref{q156} the next order in $\hbar$
(with $\boldsymbol{\alpha'}$ fixed) also vanishes. We proceed in this
recursive manner by solving \eqref{q165} with the method of
characteristics. The emerging half-density $r_k$ is composed of the
half-densities of order $k$ in $\hbar$ in the asymptotic expansions of
$e_{\boldsymbol{\alpha},k,\hbar}$.  Again \eqref{q162} and
\eqref{q163} together with the Lemmata \ref{q93}--\ref{q130} show that
every term of a given order in $\hbar$ is recursively canceled.  An
application of Borel's theorem \cite[Theorem 4.15]{Zworski:2012} then
shows that $U_{\rm scl}$ exists and satisfies the claim.

The Van Vleck expression follows from an application of the stationary phase 
theorem to the $\eta$-integration in \eqref{q167}, see, e.g., 
\cite[p.\,288]{Meinrenken:1992}.
\end{proof} 
%


%
\section{Trace formula}
\label{t1a}
In this section we prove Theorem~\ref{t12}. We recall that the
starting point is the relation \eqref{TFidea} where on the left-hand
side the sum extends over all eigenvalues $E_n$ of the quantum
Hamiltonian $\op_N(H)$ (counted with their multiplicities), and
$\rho\in C^\infty(\rz)$ is a test function with compactly supported
Fourier transform
\begin{equation}
\label{t27}
\hat{\rho}(E):=\int_{\rz}\rho(x)\ue^{-\ui Ex}\ud x.
\end{equation}
The trace formula \eqref{t18} is derived from \eqref{TFidea} by using
the semiclassical expression for $\tr U(t)$, developed in
Section~\ref{q166}, on the right-hand side.

We assume $E$ to be a regular value of the classical Hamiltonian
$H\in C^\infty(\tz)$ so that the energy surface $H^{-1}(E)$ \eqref{t10} 
is a submanifold of the phase space $\tz$. In this setting the sets 
$\mathcal{O}_T$ of fixed points of $\Phi^T$ always satisfy the clean 
intersection property \cite[p.\,280]{Guillemin:2013}. In particular, 
$\mathcal{O}_0\cap H^{-1}(E)=H^{-1}(E)$. The set $\mathfrak{P}_E$ of
periodic orbits of the Hamiltonian flow with energy $E$ consists
of primitive periodic orbits, $p^{\#}$, and their $m$-fold
repetitions, $p$, with periods $t_p=mt_{p^{\#}}$.

Since the energy surface is one dimensional a simple calculation shows 
that when every point on $H^{-1}(E)$ is non-stationary one has
\begin{equation}
\label{t31}
\vol\bigl(H^{-1}(E)\bigr)=\sum_{p^{\#}}t_{p^{\#}},
\end{equation}
where the sum is over all primitive periodic orbits in $H^{-1}(E)$.

Now let $p\in\mathfrak{P}_E$ and denote its lift to the covering 
space by $\hat{p}$. Then, with some obvious abuse of the notation in \eqref{q14}, 
we set for the action of $p$,
\begin{equation}
\label{t11}
W_p:=W+Et_p+s\ell_\xi x,
\end{equation}
with 
\begin{equation}
\label{choice}
t=t_p, \ (y,\eta)=\hat{p}(0) \ \mbox{and} \ (x,\xi)=\hat{p}\left(t_p\right)
\end{equation}
in 
\eqref{q14}. The integer $s$ is determined by $\xi-\eta=s\ell_\xi$. It turns out that 
$W_p$ is independent of the choice $(t_p,-E,x,\xi,y,-\eta)$ satisfying \eqref{choice}.

The trace formula also requires the Conley-Zehnder index $\sigma_p$ of 
periodic orbits $p\in\mathfrak{P}_E$.  For its definition we refer to 
Appendix~\ref{a17}. 

With this information we are ready to prove our main result.
\begin{proof}[Proof of Theorem~\ref{t12}]
Our first observation is that due to Proposition~\ref{q146} we can
approximate the unitary time evolution $U(t)$ semiclassically by
some $U_\mathrm{scl}(t)$ to any desired order $\beta>0$ in the
sense of Lemma~\ref{q50}. Hence
$|\tr(U(t)-U_\mathrm{scl}(t))|=O(\hbar^\infty)$, and we therefore
now study $\tr U_\mathrm{scl}(t)$. In order to obtain an asymptotic
expansion in powers of $\hbar$ we have to calculate the asymptotic
expansions of $\tr U_{\boldsymbol{\alpha},k}$ for every $k\in\nz_0$.

We first consider the case where $\Lambda_{\boldsymbol{\alpha}}$ is
generated by a function $\phi^I_\alpha$, see Section~\ref{q66}. With
$U_{\boldsymbol{\alpha},k}$ of the form \eqref{q167} one finds
\begin{equation}
\label{t14}
\begin{split}
&\frac{1}{2\pi}\int_\rz\tr U_{\boldsymbol{\alpha},k}(t)\hat\rho(t)\
    \ue^{\frac{\ui}{\hbar}Et}\ud t\\
&\quad=\frac{1}{(2\pi)^2\hbar}\frac{\ell_x}{N}\sum_{m=0}^{N-1}\int_\rz\int_\rz 
    a^I_{\boldsymbol{\alpha},k}(t,\hat{x}_m,\hat{x}_m,\eta)\hat{\rho}(t)\ue^{\frac{\ui}{\hbar}
    [\phi^I_{\boldsymbol{\alpha}}(t,\hat{x}_m,\hat{x}_m,\eta)+E t]}\ud\eta\ud t\\
&\quad=\frac{1}{(2\pi)^2\hbar}\int_\rz\int_\rz\int_\rz  
     a^I_{\boldsymbol{\alpha},k}(t,\hat{x},\hat{x},\eta)\hat{\rho}(t)
     \ue^{\frac{\ui}{\hbar}[\phi^I_{\boldsymbol{\alpha}}(t,\hat{x},\hat{x},\eta)
     +s\ell_\xi x+Et]}\ud\eta\ud x\ud t+\Or(\hbar^{\infty}),
\end{split}
\end{equation}
where the last line follows from an application of Lemma~\ref{a1}. It
turns out that the integer $s$ arising from Lemma~\ref{a1} is the same
as in \eqref{t11}. We apply the stationary phase theorem and identify
as the conditions of stationary phase,
\begin{equation}
\label{t16}
\begin{aligned}
&\partial_t\phi^I_{\boldsymbol{\alpha}}=-E\\
&\partial_x\phi^I_{\boldsymbol{\alpha}}+s\ell_\xi=\eta\\
&\partial_\eta\phi^I_{\boldsymbol{\alpha}}+r\ell_x=x.
\end{aligned}
\end{equation}
Here the last identity follows from
$\hat{y}|_{y=x}=\hat{x}+r\ell_x$. These conditions imply that $x$ and
$\eta$ are on a periodic orbit of period $t$ and energy $E$.
When $t=0$ the set of such periodic points is given by the energy
surface $H^{-1}(E)$. When $t\neq 0$, the points are on a
non-trivial periodic orbit $p\in\mathfrak{P}_E$. We define
$p_{\boldsymbol{\alpha}}:=\Lambda_{\boldsymbol{\alpha}}\cap p$ and
choose $\Lambda_{\boldsymbol{\alpha}}$ small enough such that
$p_{\boldsymbol{\alpha}}$ is connected.

Upon applying the stationary phase theorem we are left with an integral over
$p_{\boldsymbol{\alpha}}$, which we parametrise with the variable $\nu$, 
leading to
\begin{equation}
\label{t17}
\frac{1}{2\pi}\int_\rz\tr U_{\boldsymbol{\alpha},k}(t)\,\hat\rho(t)
\ue^{\frac{\ui}{\hbar}Et}\ud t 
\sim\sum_{p_{\boldsymbol{\alpha}}}\frac{\hat{\rho}(t)}{2\pi}
\int_{p_{\boldsymbol{\alpha}}}\psi_{\boldsymbol{\rho}}(\nu)a_{\hbar,k}(\nu)
\ue^{{\frac{\ui}{\hbar}W_p-\ui\frac{\pi}{2}\sigma_p}}\ud\nu.
\end{equation}
Here $a_{\hbar,k}$ is a smooth function on $p_{\boldsymbol{\alpha}}$
with a complete asymptotic expansion in $\hbar$, and
$\psi_{\boldsymbol{\rho}}$ is determined by
$\Psi_{\boldsymbol{\alpha}}$, see \eqref{q22}. The Maslov contribution
$\sigma_p$ was derived in \cite[Theorem 13]{Meinrenken:1994} by means
of the identification outlined in \cite[p.\,137]{Guillemin:2013} and
is independent of $\nu\in p_{\boldsymbol{\alpha}}$; the same is true
for the action $W_p$.

In order to find the leading semiclassical contribution it suffices to
study $\tr U_{\boldsymbol{\alpha},0}$ in \eqref{q167}. The rest
contributes terms of at least $\Or(\hbar)$ to \eqref{t13}. Since there
are only finitely many contributions labelled by $\boldsymbol{\alpha}$
the errors can be uniformly bounded with respect to $\hbar$. For a two
dimensional phase space the condition of regular periodic orbits
amounts to the identity $\det\left|\eins-\ud f_p\right|=1$ in
\cite[Eq.~(11.23)]{Guillemin:2013}, where $f_p$ denotes the Poincar\'e
map along $p$.  Using this, as well as
\cite[pp.\,279\,\&\,282]{Guillemin:2013}, we conclude that the
right-hand side of \eqref{t17} is equal to
\begin{equation}
\label{t19}
\begin{aligned}
&\frac{\hat{\rho}(0)}{2\pi}\int_{H^{-1}(E)}\psi_{\rho}(\nu)\ud\nu\ 
     \bigl(1+O(\hbar)\bigr)\\
&\hspace{0.5cm}+\sum_{p_{\boldsymbol{\alpha}}}\frac{\hat{\rho}(t_p)}{2\pi}
    \int_{p_{\boldsymbol{\alpha}}}\psi_{\boldsymbol{\rho}}(t)
    \ue^{{\frac{\ui}{\hbar}W_p-\ui\frac{\pi}{2}\sigma_p}}\ud t\ \bigl(1+O(\hbar)\bigr),
\end{aligned}
\end{equation}
where the path $p_{\boldsymbol{\alpha}}$ is parametrised by time. 

The same calculations can be done in the case \eqref{q12}. We again have to 
solve \eqref{t16} together with $\partial_\xi\phi^{II}_{\boldsymbol{\alpha}}=x$.
The sum over all $\boldsymbol{\alpha}$ can be performed as $\psi_{\boldsymbol{\rho}}$ 
satisfies the relation \eqref{q197}. This finally proves the theorem.
\end{proof}
%


%
\section{A semiclassical quantisation condition}
\label{sec5}
In this section our intention is to explore to what extent the trace formula
of Theorem~\ref{t12} allows us to characterise individual eigenvalues of the 
quantum Hamiltonian $\op_N(H)$. 

The first step is to observe that Theorem~\ref{t12} can be rewritten as follows.
\begin{lemma}
\label{z10}
With the assumptions of Theorem \ref{t12}, one has, locally uniformly in $r\in\rz$,
\begin{equation}
\label{z11}
\begin{split}
&\sum_n\rho\left(\frac{E_n-E-r\hbar}{\hbar}\right)\\
&\hspace{2cm}=\sum_{k\in\gz}\sum_{p\in\mathfrak{P}^{\#}_E}\rho
     \left(t_{p^{\#}}^{-1}\left(\frac{W_{p^{\#}}}{\hbar}+rt_{p^{\#}}
     -\frac{\pi}{2}\sigma_{p^\#}-2\pi k\right)\right) + O(\hbar).
\end{split}
\end{equation}
\end{lemma}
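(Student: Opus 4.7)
The plan is to apply Theorem~\ref{t12} at the shifted energy $E+r\hbar$, reorganise the resulting Gutzwiller sum as a double sum over primitive orbits $p^\#$ and integer repetitions $n\in\gz$, and then invoke the Poisson summation formula in $n$ to dualise the oscillatory exponentials into shifted evaluations of $\rho$. Since $E$ is a regular value and $r$ ranges over a compact set, for $\hbar$ sufficiently small $E+r\hbar$ is still regular, the classical data depend smoothly on the energy, and Theorem~\ref{t12} applies directly, with an $O(\hbar)$ remainder that will need to be tracked uniformly in $r$.

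The first concrete step is to expand the classical ingredients around $E$. The standard identity $\partial_E W_{p^\#}=t_{p^\#}$ gives $W_{p^\#}(E+r\hbar)=W_{p^\#}(E)+r\hbar\,t_{p^\#}+O(\hbar^2)$, while $t_{p^\#}$ changes only by $O(\hbar)$ and $\sigma_{p^\#}$ is locally constant in $E$. Writing the $n$-fold repetition $p$ of a primitive $p^\#$ via $W_p=nW_{p^\#}$, $t_p=nt_{p^\#}$ and $\sigma_p=n\sigma_{p^\#}$ (the last identity holding modulo~$4$, which is harmless because it enters only through $\ue^{-\ui\pi\sigma_p/2}$), the phase in Theorem~\ref{t12} becomes $\exp\bigl(\ui n(W_{p^\#}/\hbar+rt_{p^\#}-\pi\sigma_{p^\#}/2)\bigr)$ up to $O(\hbar)$, and $a_p(\hbar)=t_{p^\#}+O(\hbar)$. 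Combining this with the geometric identity \eqref{t31}, $\vol(H^{-1}(E))=\sum_{p^\#}t_{p^\#}$, the $\hat\rho(0)$ term of Theorem~\ref{t12} is absorbed as the $n=0$ contribution of the double sum, and the right-hand side takes the form
\[
\sum_{p^\#\in\mathfrak{P}^{\#}_E}\frac{t_{p^\#}}{2\pi}\sum_{n\in\gz}\hat\rho(nt_{p^\#})\,\ue^{\ui n C_{p^\#}}+O(\hbar),\qquad C_{p^\#}:=\frac{W_{p^\#}}{\hbar}+rt_{p^\#}-\frac{\pi}{2}\sigma_{p^\#}.
\]

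The core step is then Poisson summation on the inner series for each fixed $p^\#$. With $g(x):=\hat\rho(xt_{p^\#})\,\ue^{\ui x C_{p^\#}}$ the Fourier transform $\tilde g(\xi)=\int g(x)\,\ue^{-\ui x\xi}\,\ud x$ is computed using the inversion formula $\rho(a)=(2\pi)^{-1}\int\hat\rho(\xi)\,\ue^{\ui a\xi}\,\ud\xi$ to yield $\tilde g(2\pi k)=(2\pi/t_{p^\#})\,\rho\bigl((C_{p^\#}-2\pi k)/t_{p^\#}\bigr)$. The identity $\sum_{n}g(n)=\sum_{k}\tilde g(2\pi k)$ then cancels the prefactor $t_{p^\#}/(2\pi)$ and produces exactly the summand on the right-hand side of \eqref{z11}.

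The main obstacle I anticipate is not algebraic but one of uniformity: the $O(\hbar^\infty)$ error from Theorem~\ref{t12}, the $O(\hbar)$ perturbation of the classical orbit data under $E\mapsto E+r\hbar$, and the Poisson-dual tail in $k$ must combine into a single $O(\hbar)$ term that is locally uniform in $r$. This should be attainable because only finitely many primitive orbits enter (the compact support of $\hat\rho$ bounds $|n|t_{p^\#}$, and primitive orbits of bounded period on the one-dimensional energy surface are finite in number), classical orbit data depend smoothly on $E$, and $\rho$ is Schwartz as the inverse Fourier transform of a compactly supported smooth function, so the sum over $k$ converges absolutely with tails controllable uniformly in $r$.
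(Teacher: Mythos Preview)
Your proposal is correct and follows essentially the same route as the paper: apply Theorem~\ref{t12} at the shifted energy $E+r\hbar$, Taylor-expand the actions via $\partial_E W_{p^\#}=t_{p^\#}$, rewrite the Gutzwiller sum as a double sum over primitive orbits and repetitions (absorbing the $\hat\rho(0)$ term as the $n=0$ contribution via \eqref{t31}), and then Poisson-sum in the repetition index. Your treatment is somewhat more explicit about the Poisson computation and the uniformity in $r$; note also that the paper asserts $\sigma_p=k\sigma_{p^\#}$ exactly (see \eqref{z20}), so your ``modulo~4'' caveat is unnecessary here.
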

\begin{proof}
We note that $W_p=kW_{p^{\#}}$, $t_p=kt_{p^\#}$ and $\sigma_p=k\sigma_{p^\#}$
(see \eqref{z20}) if $p$ is a $k$-fold repetition of $p^{\#}$. This allows us to write the 
periodic-orbit sum in \eqref{t18} as a double sum over the (finitely many) primitive
periodic orbits and their repetitions.  

Furthermore, a comparison of the left-hand sides of \eqref{t18} and \eqref{z11}
reveals that the latter requires the `classical' energy to be $E+r\hbar$. We therefore
need to evaluate the right-hand side of \eqref{t18} at this energy. Using
\begin{equation}
\label{ActTayl}
W_p(E+r\hbar) = W_p(E) + r\hbar t_p(E) + O(\hbar^2)
\end{equation}
we obtain 
\begin{equation}
\label{tz18}
\sum_{k\in\gz}\sum_{p^{\#}}\hat{\rho}\left(kt_{p^{\#}}\right)t_{p^{\#}}
\ue^{\ui k\bigl(\frac{W_{p^{\#}}}{\hbar}+rt_{p^{\#}}-\frac{\pi}{2}\sigma_{p^{\#}}\bigr)}
+ O(\hbar).
\end{equation}
The remainder estimate depends on $\rho$ but is locally uniform in $r$. 
Applying Poisson summation to the sum over $k$ proves the claim.  
\end{proof}
The right-hand side of \eqref{z11} suggests that in the vicinity of $E$ 
one finds an eigenvalue of $\op_N(H)$, iff 
\begin{equation}
\label{BScon1}
\frac{W_{p^{\#}}}{\hbar}+rt_p-\frac{\pi}{2}\sigma_{p^\#}\approx 2\pi k
\end{equation}
%
In other words, together with \eqref{ActTayl} this would be some form
of a Bohr-Sommerfeld quantisation condition: $E+r\hbar$ is an
approximate eigenvalue, iff
\begin{equation}
\label{BScon2}
\frac{1}{2\pi\hbar}W_{p^{\#}}(E+r\hbar) = k 
+ \frac{1}{4}\sigma_{p^{\#}}(E+r\hbar) + O(\hbar)
\end{equation}
holds. 
In the following we want to explore to what extent such a
relation can be derived from the trace formula. Our approach uses the
tools developed in \cite{Petkov:1998} to estimate eigenvalue
clustering. 

In order to determine the cases where \eqref{BScon1} is fulfilled we now 
introduce some counting measures. We assume that $E$ is regular value of $H$. 
Then the energy surface $H^{-1}(E)$ consists of a finite and disjoint set 
of periodic orbits. In analogy to \cite[p.\,23]{Petkov:1998} we introduce 
the function 
\begin{equation}
\label{301}
Q(r;E,\hbar):=\frac{1}{2\pi}\sum_{p^{\#}\in\mathfrak{P}_E}
\left[\pi-\frac{W_{p^{\#}}}{\hbar}+\frac{\pi}{2}\sigma_{p^{\#}}-rt_{p^{\#}}\right]_{2\pi},
\end{equation}
where $[z]_{2\pi}=z\bmod 2\pi$ such that $-\pi<[z]_{2\pi}\leq\pi$. We remark that, 
as a function of $r$, each term in \eqref{301} jumps in value by one at the 
points in the set
\begin{equation}
\label{302}
\Omega_{p^{\#}}(\hbar):=\left\{r\in\rz; \ \left[\frac{W_{p^{\#}}}{\hbar}
-\frac{\pi}{2}\sigma_{p^{\#}}+rt_{p^{\#}}\right]_{2\pi}=0\right\}.
\end{equation}  
We now define
\begin{equation}
\label{321a}
\begin{split}
N_{\min}(\hbar,E,r)
  &:=\left|\left\{p^{\#}\in\mathfrak{P}_E^{\#};\ \Omega_{p^{\#}}(\hbar)
    \cap\left[-\frac{r}{2},\frac{r}{2}\right]\neq\emptyset\right\}\right|,\\
N_{\max}(\hbar,E,r)
  &:=\left|\left\{p^{\#}\in\mathfrak{P}_E^{\#}; \ \Omega_{p^{\#}}(\hbar)
    \cap\left[-\frac{3r}{2},\frac{3r}{2}\right]\neq\emptyset\right\}\right|,
\end{split}
\end{equation}
and relate these cardinalities to the function \eqref{301}.
\begin{lemma}
\label{320}
If $r>0$ is sufficiently small, then
\begin{equation}
\label{321}
\begin{split}
N_{\min}(\hbar,E,r)
  &=Q\left(\frac{r}{2};E,\hbar\right)
       -Q\left(-\frac{r}{2};E,\hbar\right)+O(r)\\
N_{\max}(\hbar,E,r)
  &=Q\left(\frac{3r}{2};E,\hbar\right)
       -Q\left(-\frac{3r}{2};E,\hbar\right)+O(r).
\end{split}
\end{equation}
\end{lemma}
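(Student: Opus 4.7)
The plan is to decompose each summand in \eqref{301} as the sum of a linear function of $r$ and an integer-valued step function that counts elements of $\Omega_{p^\#}(\hbar)$, and then to match these jump counts with the definitions of $N_{\min}$ and $N_{\max}$.

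For fixed $\hbar$ and $p^\#\in\mathfrak{P}_E^\#$, set $f_{p^\#}(r):=\pi-W_{p^\#}/\hbar+\tfrac{\pi}{2}\sigma_{p^\#}-rt_{p^\#}$, so that the $p^\#$-th summand of $2\pi Q(r;E,\hbar)$ is $[f_{p^\#}(r)]_{2\pi}$. Since $t_{p^\#}>0$, $f_{p^\#}$ is strictly decreasing in $r$; the normalisation $-\pi<[\,\cdot\,]_{2\pi}\le\pi$ together with the identification $\Omega_{p^\#}(\hbar)=\{r:f_{p^\#}(r)\equiv\pi\pmod{2\pi}\}$ implies that the sawtooth $r\mapsto\tfrac{1}{2\pi}[f_{p^\#}(r)]_{2\pi}$ is affine with slope $-t_{p^\#}/(2\pi)$ between consecutive points of $\Omega_{p^\#}(\hbar)$ and jumps upwards by exactly $1$ at each such point. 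Summing this decomposition over the finite set $\mathfrak{P}_E^\#$ gives, for any $r_1<r_2$,
\begin{equation*}
Q(r_2;E,\hbar)-Q(r_1;E,\hbar)=\sum_{p^\#\in\mathfrak{P}_E^\#}\#\bigl(\Omega_{p^\#}(\hbar)\cap(r_1,r_2]\bigr)-\frac{r_2-r_1}{2\pi}\sum_{p^\#\in\mathfrak{P}_E^\#}t_{p^\#},
\end{equation*}
in which the linear term is $O(r_2-r_1)$ since $\mathfrak{P}_E^\#$ is finite ($E$ being a regular value of $H$).

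Next I would restrict $r>0$ to the range $3r<2\pi/\max_{p^\#}t_{p^\#}$. Each $\Omega_{p^\#}(\hbar)$ is an arithmetic progression of step $2\pi/t_{p^\#}$, so under this smallness both $[-r/2,r/2]$ and $[-3r/2,3r/2]$ meet $\Omega_{p^\#}(\hbar)$ in at most one point. Consequently $\#(\Omega_{p^\#}(\hbar)\cap(-r/2,r/2])\in\{0,1\}$ agrees with the indicator of $\Omega_{p^\#}(\hbar)\cap[-r/2,r/2]\ne\emptyset$; summing over $p^\#$ recovers $N_{\min}(\hbar,E,r)$ as defined in \eqref{321a}, and the identical count with $3r/2$ in place of $r/2$ recovers $N_{\max}(\hbar,E,r)$. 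Combining with the decomposition displayed above establishes both lines of \eqref{321}.

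The work thus lies almost entirely in the sawtooth decomposition and the periodicity-based bound that keeps each intersection at most a singleton; there is no deep obstacle, only a bookkeeping subtlety at the endpoints $\pm r/2$ (respectively $\pm 3r/2$). For each $p^\#$ the coincidence $\pm r/2\in\Omega_{p^\#}(\hbar)$ occurs only on a discrete $r$-set, so one interprets \eqref{321} as an identity holding off this exceptional measure-zero set, the finite jumps of $Q$ across such points being harmlessly absorbed into the $O(r)$ remainder as $r\to 0$.
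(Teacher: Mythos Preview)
Your proof is correct and follows essentially the same approach as the paper. The paper's proof is much terser: it relies on the remark immediately preceding the lemma (that each summand of $Q$ jumps by one exactly at points of $\Omega_{p^\#}(\hbar)$), then simply records the per-orbit indicators $N_{\min,p^\#}$, $N_{\max,p^\#}$ and states the resulting identity \eqref{311} without spelling out the linear $O(r)$ drift or the endpoint subtlety. Your explicit sawtooth decomposition and the care taken with the half-open interval $(r_1,r_2]$ versus the closed interval in \eqref{321a} make transparent what the paper leaves implicit, but the underlying idea is identical.
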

\begin{proof}
We recall that $|\mathfrak{P}_E^{\#}|$ is finite since $E$ 
is a regular value. Furthermore, $r$ sufficiently small means that 
$\Omega_{p^{\#}}(\hbar)\cap[-3r/2,3r/2]$ has at most one element. We then
set $N_{\min,p^{\#}}=1$ if $[-r/2,r/2]\cap\Omega(\hbar,p^{\#})\neq\emptyset$ 
and $N_{\max,p^{\#}}=1$ if $[-3r/2,3r/2]\cap\Omega(\hbar,p^{\#})\neq\emptyset$. 
This gives
\begin{equation}
\label{311}
\begin{aligned}
Q\left(\frac{r}{2};E,\hbar\right)-Q\left(-\frac{r}{2};E,\hbar\right)+O(r)
  &=\sum_{p^{\#}\in\mathfrak{P}_E^{\#}}N_{\min,p^{\#}},\\
Q\left(\frac{3r}{2};E,\hbar\right)-Q\left(-\frac{3r}{2};E,\hbar\right)+O(r)
  &=\sum_{p^{\#}\in\mathfrak{P}_E^{\#}}N_{\max,p^{\#}},
\end{aligned}
\end{equation}  
which proves the claim.
\end{proof}
Since the number of primitive periodic orbits is finite we can always
find $\omega_0>0$ and $r_0>0$ such that for all $r\in[-r_0,r_0]$,
$0<\hbar<\hbar_0$ and $p^{\#}\in\mathfrak{P}_E^{\#}$,
\begin{equation}
\label{303}
\left|\frac{W_{p^{\#}}}{\hbar}-\frac{\pi}{2}\sigma_{p^{\#}}+rt_{p^{\#}}\right|>\omega_0. 
\end{equation}
We remark that, in contrast to \cite[Eq.\,(1.9)]{Petkov:1998}, in \eqref{303} 
an absolute value is taken.

We now define a local eigenvalue counting function as
\begin{equation}
\label{304}
N_{E,r}(\hbar):=\left|\{E_n\in\sigma(\op_N(H)); 
\ |E_n-E|<r\hbar\}\right|.
\end{equation}
Here the eigenvalues in \eqref{304} are counted with their multiplicities.  
In view of the Bohr-Sommerfeld condition \eqref{BScon1} 
our aim therefore is to identify situations where, for small $r$, one has
$N_{E,r}(\hbar)\geq 1$. To this end we establish an upper and a 
lower bound for the local eigenvalue counting function. 
\begin{prop}
\label{304a}
There exist $\hbar_0>0$ and $r_0>0$ such that for all 
$0<r<r_0$ and $0<\hbar<\hbar_0$,
\begin{equation}
\label{312}
N_{\min}(\hbar,E,r)\leq N_{E,r}(\hbar)
\leq N_{\max}(\hbar,E,r).
\end{equation}
\end{prop}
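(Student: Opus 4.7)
The proof adapts the test-function strategy of Petkov--Popov \cite{Petkov:1998} for eigenvalue clustering to the present setting. The idea is to sandwich the sharp counting $N_{E,r}(\hbar)$ between smoothed counting quantities accessible via the trace formula in its Poisson-dual form from Lemma~\ref{z10}, and to identify these smoothed quantities with the orbit counts $N_{\min}$ and $N_{\max}$ via Lemma~\ref{320}.

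The plan is to choose a non-negative bump $\rho\in\mathcal{S}(\rz)$ with $\hat\rho\in C_0^\infty(\rz)$, $\rho(0)=1$, and rapid decay. Setting $\rho_r(x):=\rho(x/r)$ preserves the compact support of the Fourier transform (only dilating it), and applying Lemma~\ref{z10} with $r$-parameter $0$ yields
\begin{equation*}
\sum_n \rho_r\!\left(\frac{E_n-E}{\hbar}\right) = \sum_{k\in\gz}\sum_{p^\#\in\mathfrak{P}_E^\#}\rho\!\left(\frac{W_{p^\#}/\hbar - \pi\sigma_{p^\#}/2 - 2\pi k}{r\,t_{p^\#}}\right) + O(\hbar).
\end{equation*}
Using the definition of $\Omega_{p^\#}(\hbar)$, the $(k,p^\#)$ term equals $\rho(-r^*/r)$ where $r^*$ is the unique real number at which the associated Bohr--Sommerfeld residue vanishes; hence the double sum is effectively controlled by the cardinality of $\{p^\#\in\mathfrak{P}_E^\#:\Omega_{p^\#}(\hbar)\cap[-Cr,Cr]\neq\emptyset\}$ for a constant $C$ determined by the effective width of $\rho$. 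For each $p^\#$ only at most one value of $k$ contributes significantly, since the spacing of these $k$-values is $2\pi/t_{p^\#}$, i.e.\ large compared to the width of $\rho$ when $r$ is small. Choosing a majorant $\rho$ satisfying $\rho(x)\geq\eins_{[-1,1]}(x)$ gives the upper bound with $C=3/2$ (matching $N_{\max}$), while a minorant $\rho$ dominated by $\eins_{[-1,1]}$ gives the lower bound with $C=1/2$ (matching $N_{\min}$).

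The principal obstacle is that $\rho$ cannot be simultaneously compactly supported in $x$ and band-limited, so $\rho_r$ only approximates $\eins_{|x|<r}$ with inevitable tails. Condition~\eqref{303} provides the uniform separation needed to absorb these tails into $O(\hbar^\infty)$ on the orbit-sum side: the Bohr--Sommerfeld residues $W_{p^\#}/\hbar-\pi\sigma_{p^\#}/2+rt_{p^\#}$ remain uniformly bounded away from $0$ for non-resonant $r$, so the rapid decay of $\rho$ suppresses those contributions uniformly in $r\in(-r_0,r_0)$ and $\hbar\in(0,\hbar_0)$. The factors $\tfrac12$ and $\tfrac32$ in \eqref{321a} provide precisely the slack required to accommodate the smooth transition region of $\rho$ between its high and low values, and taking $r_0$ and $\hbar_0$ sufficiently small forces the residual errors below the integer gap of $1$, converting the analytic inequalities into the sharp integer sandwich~\eqref{312}.
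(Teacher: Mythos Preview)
Your approach is essentially the same as the paper's: both adapt the Petkov--Popov test-function strategy, sandwich $N_{E,r}(\hbar)$ between smoothed counts accessible through the trace formula, and then invoke integrality to strip the $O(r)+O_r(\hbar)$ errors. The paper routes the argument through an intermediate analytic inequality (with the Weyl term $\tfrac{r}{\pi}\vol(H^{-1}(E))$ separated out) and cites \cite[Theorem~1.1]{Petkov:1998} directly, whereas you work via the Poisson-summed form of Lemma~\ref{z10}; these are equivalent presentations of the same mechanism.

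One imprecision worth correcting: the tails of your band-limited majorant/minorant $\rho$ are \emph{not} $O(\hbar^\infty)$---they are controlled by the rapid decay of $\rho$ at the fixed separation scale $\omega_0$ from~\eqref{303}, which is independent of $\hbar$. What actually happens is that these tail contributions are $O(r)$ (or smaller, depending on the precise choice of $\rho$), and they are absorbed together with the $O_r(\hbar)$ error from the trace formula by the final integrality step. Also note that dilating to $\rho_r$ enlarges $\supp\hat\rho_r$ like $1/r$, so the implicit constant in the $O(\hbar)$ of Lemma~\ref{z10} depends on $r$; this is harmless because you fix $r$ before sending $\hbar\to 0$, exactly as in the paper's $O_r(\hbar)$ notation.
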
 
\begin{proof}
In a first step we prove the bounds
\begin{equation}
\label{305}
\begin{split}
&N_{\min}(\hbar,E,r)-C_0r+O_r(\hbar)\\
&\hspace{3cm}\leq N_{E,r}(\hbar)
    -\frac{r}{\pi}\vol\left(H^{-1}(E)\right)\\
&\hspace{6cm}\leq N_{\max}(\hbar,E,r)
+C_0r+O_r(\hbar)
\end{split}
\end{equation}
with some $C_0>0$. In view of Lemma~\ref{320} this can be done very
much in analogy to the proof of \cite[Theorem\ 1.1]{Petkov:1998}.  We
note that in the present case the integrals over the energy surface in
\cite{Petkov:1998} can be carried out since the Liouville measure of a
periodic orbit is known explicitly; it is $\ud t$ when $t$ is the time
coordinate. We also note that in the present case the analogue of
\cite[Theorem\ 5.1]{Petkov:1998} is given by our Theorem~\ref{t12},
and even includes an improved error term.

In order to prove \eqref{312} we note that the first terms in the
first and third line of \eqref{305} are monotonously increasing in
$r$. Since the first terms in each line of \eqref{305} are integers we
can neglect the terms with $C_0$, $\vol(H^{-1}(E))$ and
$O_{r}(\hbar)$ when $\hbar$ and $r$ are sufficiently small.
\end{proof}
Obviously, when under the conditions of Proposition~\ref{304a} we have
that $N_{\min}(\hbar,E,r)=N_{\max}(\hbar,E,r):=N$ then
$N_{E,r}=N$. Moreover, the upper bound for $N_{E,r}$ is easily
obtained by a combination of Proposition~\ref{304a} and Eq.~\eqref{321a}.
\begin{cor}
\label{313a}
With the assumptions of Proposition~\ref{304a} we have
$N_{E,r}\leq\left|\mathfrak{P}^{\#}_{E}\right|$.
\end{cor}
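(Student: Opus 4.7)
The claim is essentially a direct consequence of the upper bound already established in Proposition~\ref{304a} combined with the definition \eqref{321a} of $N_{\max}$. My plan is therefore to chain these two facts together; there is no real analytic difficulty, and in particular no new estimate on eigenvalue clustering or on the trace formula is required beyond what has already been proved.

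First I would invoke Proposition~\ref{304a}, which under the stated hypotheses on $\hbar_0$ and $r_0$ guarantees that for all $0<\hbar<\hbar_0$ and $0<r<r_0$ one has the upper estimate $N_{E,r}(\hbar) \leq N_{\max}(\hbar,E,r)$. Next I would appeal to the definition \eqref{321a}, which expresses $N_{\max}(\hbar,E,r)$ as the cardinality of the set of those primitive periodic orbits $p^{\#} \in \mathfrak{P}_E^{\#}$ for which $\Omega_{p^{\#}}(\hbar) \cap [-3r/2,3r/2]$ is non-empty. Since this set is, by construction, a subset of $\mathfrak{P}_E^{\#}$, its cardinality is bounded above by $|\mathfrak{P}_E^{\#}|$. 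Combining these two inequalities yields the desired bound $N_{E,r}(\hbar) \leq |\mathfrak{P}_E^{\#}|$.

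The only thing worth being a little careful about is that $|\mathfrak{P}_E^{\#}|$ is indeed finite, so that the inequality is meaningful. This was already remarked on in the proof of Lemma~\ref{320}, where finiteness was deduced from the assumption that $E$ is a regular value of $H$: the energy surface $H^{-1}(E)$ is then a compact one-dimensional submanifold of the compact torus $\tz$, and its connected components are precisely the primitive periodic orbits, of which there are thus only finitely many. This observation is not really an obstacle, only a sanity check. In summary, the proof is a two-line consequence of Proposition~\ref{304a} and the set-theoretic upper bound on $N_{\max}$, and I do not expect any genuine obstacle to materialise.
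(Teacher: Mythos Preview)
Your proposal is correct and matches the paper's approach exactly: the paper states just before the corollary that the upper bound is ``easily obtained by a combination of Proposition~\ref{304a} and Eq.~\eqref{321a},'' which is precisely the two-step argument you give. No separate proof is written out in the paper, so your short justification is entirely appropriate.
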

We obtain the following consequence of Proposition~\ref{304a}:
\begin{prop}
\label{313c}
In addition to the assumption of Proposition~\ref{304a}, suppose that
the exact Bohr-Sommerfeld condition 
\begin{equation}
\label{BScon1a}
\frac{W_{p^{\#}}}{\hbar}-\frac{\pi}{2}\sigma_{p^\#}= 2\pi k.
\end{equation}
holds. Then for every $\hbar<\hbar_0$ we have $N_{E,r}\geq 1$.
\end{prop}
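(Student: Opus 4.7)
The plan is to read off the conclusion directly from the lower bound established in Proposition~\ref{304a}, combined with a very simple observation about where the set $\Omega_{p^{\#}}(\hbar)$ lies when the exact Bohr-Sommerfeld condition \eqref{BScon1a} holds.

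First I would evaluate the defining equation of $\Omega_{p^{\#}}(\hbar)$ in \eqref{302} at $r=0$. If the exact Bohr-Sommerfeld condition \eqref{BScon1a} holds for some primitive periodic orbit $p^{\#}\in\mathfrak{P}_E^{\#}$ and some $k\in\gz$, then $\frac{W_{p^{\#}}}{\hbar}-\frac{\pi}{2}\sigma_{p^{\#}}=2\pi k$, so $0\in\Omega_{p^{\#}}(\hbar)$. In particular, $0$ lies in the interval $[-r/2,r/2]$ for every $r>0$, so
\begin{equation}
\Omega_{p^{\#}}(\hbar)\cap\left[-\frac{r}{2},\frac{r}{2}\right]\neq\emptyset,
\end{equation}
which by the definition \eqref{321a} implies $N_{\min}(\hbar,E,r)\geq 1$ for every such $\hbar$ and every $r>0$.

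Next I would invoke the lower bound in Proposition~\ref{304a}, which for $0<r<r_0$ and $0<\hbar<\hbar_0$ states $N_{\min}(\hbar,E,r)\leq N_{E,r}(\hbar)$. Combining this with the previous step gives $N_{E,r}(\hbar)\geq 1$, which is precisely the claim.

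There is essentially no technical obstacle here, since the hard work is already contained in Proposition~\ref{304a} and its underlying trace-formula analysis; the only thing to be mildly careful about is that the parameter $r$ chosen in Proposition~\ref{304a} must satisfy $0<r<r_0$, but this is guaranteed by the hypothesis, and the exact condition \eqref{BScon1a} forces $0\in\Omega_{p^{\#}}(\hbar)$ independently of how small $r$ is taken. Thus the argument is essentially a one-line consequence of the monotonicity of $N_{\min}$ in $r$ together with the lower bound of Proposition~\ref{304a}.
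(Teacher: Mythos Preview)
Your proof is correct and follows exactly the same approach as the paper's own proof: show that the Bohr--Sommerfeld condition \eqref{BScon1a} forces $0\in\Omega_{p^{\#}}(\hbar)$, hence $N_{\min}(\hbar,E,r)\geq 1$ via \eqref{321a}, and then apply the lower bound of Proposition~\ref{304a}. The paper merely compresses this into a single sentence, while you spell out each step.
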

\begin{proof}
We have to show that $N_{\min}(\hbar,E,r)\geq 1$ for every
$\hbar<\hbar_0$. But this follows if we insert the Bohr-Sommerfeld
condition \eqref{BScon1a} in \eqref{321a} resp.\ \eqref{302}.
%
\end{proof}
This proposition is the closest one can get towards a Bohr-Sommerfeld
quantisation condition on the basis of the trace formula only. Notice
that our approach never makes use of any eigenfunctions. Therefore, we
have not relied on constructing quasi-modes, which is the usual
approach to Bohr-Sommerfeld conditions, see, e.g., \cite{Charles:2003}
for the case of Toeplitz operators on compact K\"ahler manifolds.


%
\section{Examples}
\label{t1}
In this section we discuss some simple examples. We begin with
classical Hamiltonians that are either of the form $H(x,\xi)=H(x)$,
or of the form $H(x,\xi)=H(\xi)$.

When the classical Hamiltonian is of the form $H(x)$ the equations of
motion are
\begin{equation}
\label{t2}
\dot{\xi}=-\partial_x H =-H', \quad \dot{x}=\partial_\xi H=0.
\end{equation}
The solutions of \eqref{t2} are given by $x(t)=x_0$ and $\xi(t)=\xi_0-tH'$. 
We assume that the energy $E=H(x_0)$ is not critical. A (primitive) 
periodic orbit is given when $\xi(t_{p^\#})=\xi_0 \pm \ell_\xi$. Hence the 
primitive period is $t_{p^\#}=|\ell_\xi/H'(x_0)|$. Moreover, the action is 
$W_p=\ell_\xi x$ and originates from the third term in \eqref{t11}.

The set $\mathfrak{P}_E$ of periodic orbits of energy $E$ is discrete, 
as $E$ is non-critical. Periodic orbits $p\in\mathfrak{P}_E$ can be 
labelled by the points $x_0\in [0,\ell_x)$ such that $H(x_0)=E$.

On the classical side it remains to determine the Conley-Zehnder index 
of a periodic orbit. A short calculation shows that 
$T\Phi_{t}V_M(H^{-1}(E))=V_M(\Phi_{t}(H^{-1}(E)))$ for every $t$
(see Appendix~\ref{a17}). Therefore, we have the extremal case that every 
point in the extended phase space belongs to a caustic 
\cite[p.\,280]{Meinrenken:1992}. From \eqref{a20} we hence conclude that 
the Conley-Zehnder index is equal to zero. A somewhat lengthy but 
straightforward calculation using \cite[Lemma 8.3]{Meinrenken:2000} indeed 
shows that the third and second term in \cite[Proposition 12]{Meinrenken:1994} 
cancel each other in complete agreement with assumption (A2) in
\cite[p.\,9]{Meinrenken:1994}.

With this input one can set up the trace formula \eqref{t18} in the form
\begin{equation}
\label{t5b}
\begin{aligned}
\sum_m\rho \left(\frac{E-E_m}{\hbar}\right)
  &=\sum_{x_0\in H^{-1}([0,\ell_x))}\frac{\ell_\xi}{2\pi |H'(x_0)|}\hat\rho(0) 
         \bigl(1+O(\hbar)\bigr)\\
  &\quad+\sum_{x_0\in H^{-1}([0,\ell_x))}\sum_{k\in\gz\setminus\{0\}}
         \frac{\ell_\xi}{2\pi |H'(x_0)|}\hat{\rho}\left(\frac{k\ell_\xi}{|H'(x_0)|}\right)
         \ue^{\frac{\ui}{\hbar}k\ell_\xi x_0}\bigl(1+O(\hbar)\bigr).
\end{aligned}
\end{equation}
This trace formula can also be proved directly, making use of the explicit
knowledge of the eigenvalues,
\begin{equation}
\label{t4}
E_m=H\lk\frac{\ell_x m}{N}\rk, \quad m=1,\ldots,N,
\end{equation}
and rewriting 
\begin{equation}
\label{t5a}
\sum_{m=1}^N\rho \left(\frac{E-E_m}{\hbar}\right)
=\frac{1}{2\pi}\int_{\rz}\hat{\rho}(t)\sum_{m=1}^{N}
\ue^{\frac{\ui t}{\hbar}\left[E-H\left( \frac{\ell_xm}{N}\right)\right]}\ud t.
\end{equation}
We apply \cite[Theorem 4]{Dixon:1937} to the sum on the right-hand side,
and use the stationary phase theorem for the emerging integrals over a 
variable $x$ in addition to the integral over $t$. The phase functions are
$\phi(t,x)=t(E-H(x))\pm nx\ell_\xi$, so that the condition of stationary 
phase gives $E=H(x)$ in the variable $t$, and $-tH'(x)=\mp n\ell_\xi$ 
in the varibale $x$. The first condition determines the points $x_0$ labelling
(primitive) periodic orbits of energy $E$, and the second condition 
provides the periods of the orbits. Carrying out the stationary phase theorem
eventually gives \eqref{t5b}.

The case of a classical Hamiltonian $H(x,\xi)=H(\xi)$ is very similar.  A 
particular example of this leads to a discretised version of the negative 
Laplacian. As the Weyl symbol of the Laplacian on $\rz$, i.e., the negative
second derivative, is $H_\rz(x,\xi)=\xi^2$, a naive guess of the classical
Hamiltonian leading to a discretised Laplacian would be to restrict this
function to $\fz$. However, on $\tz$ this would correspond to a 
non-continuous function. Continuity could be restored by choosing
$H(x,\xi)=(\xi-\ell_\xi/2)^2$ for $(x,\xi)\in\fz$. Although this function 
is not smooth on $\tz$, an operator $\op_N(H)$ could still be defined using
\eqref{q34} or \cite[Theorem 2.3]{Ligabo:2016}. Its eigenvalues are
\begin{equation}
\label{t4a}
E_m=H\left(\frac{\ell_\xi m}{N}\right)
=\left(\frac{m}{N}-\frac{1}{2}\right)^2 \ell_\xi^2, \quad m=1,\ldots,N,
\end{equation}
and one could use this explicit expression to set up a trace formula.

A reverse approach would be to start with the `natural' discretised Laplacian
as a difference operator, e.g., defined on $\ell_2(\gz)$ as
\begin{equation}
\label{discLap}
(-\Delta f)_n := -(f_{n+1}+f_{n-1}-2f_n),\qquad (f_n)\in\ell_2(\gz).
\end{equation}
In the present context, where $\op_N(H)$ is an operator on $\kz^N$ expressed
in terms of the phase-space translations \eqref{q32}, the closest analogue to 
\eqref{discLap} would be \eqref{discLapintro}, see also \cite[p.\,160]{Faria:2010}. 
It then follows from \eqref{schroedsymb} that in order to realise this operator as 
$\op_N(H)$ one has to choose
\begin{equation}
\label{t5e}
H(x,\xi)=\frac{\ell_\xi^2}{2\pi^2}\left(1-\cos\left(\frac{2\pi \xi}{\ell_\xi}\right)\right),
\end{equation}
which is independent of $x$. Hence, in analogy to \eqref{t4a} the 
eigenvalues of the discretised Laplacian $-\hbar^2\Delta$ are
\begin{equation}
\label{t5d}
E_m:=\frac{\ell_\xi^2}{2\pi^2}\left(1-\cos\left(\frac{2\pi m}{N}\right)\right), 
\quad m=1,\ldots,N.
\end{equation}
Zero is always a non-degenerate eigenvalue. If $N$ is even, the largest 
possible value $\frac{\ell_\xi^2}{\pi^2}$ is also a non-degenerate 
eigenvalue. Every other eigenvalue is two-fold degenerate. 

As a final example we mention (a variant of) the Harper model 
\cite{Harper:1955} with classical Hamiltonian
\begin{equation}
\label{r1}
H(x,\xi)=\cos\left(\frac{2\pi x}{\ell_x}\right)+\cos\left(\frac{2\pi \xi}{\ell_\xi}\right).
\end{equation}
In $\fz$ the critical points of this function are given by $(0,0)$, $(\ell_x/2,\ell_\xi/2)$, 
$(0,\ell_\xi/2)$ and $(\ell_x/2,0)$. At the first point $H$ takes a maximum, at the 
second a minimum and the other two are saddle points. The energy surface 
$H^{-1}(0)$, as seen in $\fz$, is the straight line connecting the two saddle 
points plus its continuation connecting the points $(\ell_x/2,\ell_\xi)$ and 
$(\ell_x,\ell_\xi/2)$. Therefore, every periodic orbit on the torus is a projection of a 
periodic orbit in the covering space. Hence, its action \eqref{t11} has $s=0$
and, in absolute value, is the phase space area enclosed by the orbit. The
Conley-Zehnder index is $\sigma_p=2$ for a (primitive) orbit at positive
energy, and $\sigma_p=-2$ at negative energy.

With all this input one can then evaluate the trace formula \eqref{t18}, as well
as the quantisation conditions discussed in Section~\ref{sec5}. The latter
are rigorous versions of the condition (1.9) in \cite{Harper:1955}, or
(2) in \cite{Gat:2003}.
%


%
\subsection*{Acknowledgements}
SE gratefully acknowledges financial support through a
Postdoktorandenstipendium of the Deutsche Forschungsgemeinschaft,
which allowed him to work at Royal Holloway, University of London,
where major parts of this research were carried out. We thank Omri Gat
for discussions on the Harper model. We thank Steve Zelditch for
pointing out to us related results in the framework of Toeplitz
quantisation, in particular Ref.~\cite{Borthwick:1998}.
\appendix
%
%
\section{Explicit comparison of anti-Wick and Weyl quantisations}
\label{appAW}
Anti-Wick quantisation is an alternative to the Weyl
quantisation \eqref{q34}--\eqref{q32}, and in this appendix we provide
explicit expressions for the matrix elements of anti-Wick
operators; for details see \cite{Bouzouina:1996}.

If $f\in C^{\infty}(\tz)$ is a classical observable, its anti-Wick quantisation is the operator
\be
\label{antiwick}
\op_N^{AW}(f):=\frac{1}{2\pi\hbar}\int_{\fz}f(x,\xi)\,P(x,\xi)\ \ud x\,\ud\xi,
\ee
in $\kz^N$, where 
\be
\label{projector}
P(x,\xi)_{jk}=\sqrt{\frac{1}{\pi\hbar}}\frac{\ell_x}{N} \sum_{n,m\in\mathbb{Z}} 
\ue^{\frac{\ui}{\hbar}\xi\big(\frac{j-k}{N}-(n-m)\big)\ell_x}\ue^{-\frac{1}{2\hbar}
\big[\big(\frac{j\ell_x}{N}-n\ell_x-x\big)^2+\big(\frac{k\ell_x}{N}-m\ell_x-x\big)^2 \big]}
\ee
is a projector on a coherent state localised at $(x,\xi)\in\tz$. (Here we used the
definitions of \cite{Bouzouina:1996} with the choices $\kappa=(0,0)$, $z=\ui$ and 
$(a,b)=(\ell_x,\ell_\xi)$.)
\begin{prop}
\label{matrixel}
Let $f\in C^{\infty}(\tz)$, then the matrix elements of the anti-Wick operator 
$\op_N^{AW}(f)$ are 
\be
\label{melements}
\begin{split}
&{\op_N^{AW}(f)}_{jk}\\
&\hspace{0.25cm}=\sqrt{\frac{1}{\pi\hbar}} \sum_{n\in\mathbb{Z}} \ue^{-\frac{1}{4\hbar} 
    \big(\frac{j-k}{N}-n\big)^2\ell_x^2}\int_{-\infty}^\infty\frac{1}{\ell_{\xi}} \int_0^{\ell_{\xi}} 
    f(x,\xi)\, \ue^{\frac{\ui}{\hbar}\xi\big(\frac{j-k}{N}-n\big)\ell_{x}}\,\ue^{-\frac{1}{\hbar}
    \big[x-\big(\frac{j+k}{2N}\ell_{x}-\frac{n}{2}x\big)\big]^2}\ \ud\xi\,\ud x.
\end{split}
\ee
\end{prop}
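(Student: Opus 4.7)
The plan is to substitute the explicit kernel of $P(x,\xi)$ into the definition \eqref{antiwick} and perform two reductions: an unfolding that combines one of the two integer sums with the $x$-integration over $[0,\ell_x)$, and a completion of the square in the resulting Gaussian. The rest is bookkeeping of prefactors via the quantisation condition \eqref{q29a}.

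Concretely, I would plug \eqref{projector} into \eqref{antiwick}, giving a double sum over $(n,m)\in\gz^2$ and an integral over $\fz=[0,\ell_x)\times[0,\ell_\xi)$. In the $x$-integral I perform the change of variables $x\mapsto x+m\ell_x$. By periodicity of $f$ one has $f(x+m\ell_x,\xi)=f(x,\xi)$, and a direct inspection of the exponents in \eqref{projector} shows that after this shift the integrand depends on $(n,m)$ only through the difference $n-m$, both in the phase $\ue^{\frac{\ui}{\hbar}\xi(\frac{j-k}{N}-(n-m))\ell_x}$ and in the first Gaussian factor. The remaining $m$-summation then tiles $\rz$ with the intervals $[m\ell_x,(m+1)\ell_x)$, so that $\sum_m\int_0^{\ell_x}\cdots\,\ud x$ becomes $\int_\rz\cdots\,\ud x$, and the double sum collapses to a single sum (which I rename $n$).

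Next, I would complete the square. Writing $a:=\tfrac{j\ell_x}{N}-n\ell_x$ and $b:=\tfrac{k\ell_x}{N}$, the identity
\begin{equation}
(a-x)^2+(b-x)^2 \;=\; 2\bigl(x-\tfrac{a+b}{2}\bigr)^2 + \tfrac{1}{2}(a-b)^2
\end{equation}
splits the Gaussian exponent in \eqref{projector} into an $x$-dependent part producing $\exp\bigl(-\tfrac{1}{\hbar}\bigl[x-\tfrac{j+k}{2N}\ell_x+\tfrac{n}{2}\ell_x\bigr]^2\bigr)$ and an $x$-independent part $\exp\bigl(-\tfrac{1}{4\hbar}\bigl(\tfrac{j-k}{N}-n\bigr)^2\ell_x^2\bigr)$ which can be pulled outside the integrals. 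Combining the prefactors $\tfrac{1}{2\pi\hbar}$ from \eqref{antiwick} and $\tfrac{\ell_x}{N}$ from \eqref{projector} and using \eqref{q29a} in the form $\tfrac{\ell_x}{2\pi\hbar N}=\tfrac{1}{\ell_\xi}$ yields exactly the prefactor $\sqrt{\tfrac{1}{\pi\hbar}}\cdot\tfrac{1}{\ell_\xi}$ displayed in \eqref{melements}.

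The only point that deserves care is the justification for interchanging the $m$-summation with the $x$-integration and with the $n$-summation in the unfolding step. This is routine: for each fixed pair $(j,k)$ the Gaussian factors provide uniform, super-exponential decay in both summation indices, while smoothness of $f$ on $\tz$ guarantees rapid decay of its $\xi$-Fourier coefficients, so the double sum is absolutely convergent and Fubini applies. No analytic subtlety beyond this appears, so I expect the main task to be solely the algebraic bookkeeping described above.
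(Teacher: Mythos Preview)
Your proposal is correct and follows essentially the same approach as the paper: substitute the projector into the definition, complete the square in the Gaussian exponent, shift the $x$-variable by $m\ell_x$ using the periodicity of $f$, and collapse the double sum to a single one via $\mu=n-m$ while the $m$-sum unfolds the $x$-integral to all of $\rz$. The only difference is the order in which you carry out the shift and the completion of the square, which is immaterial; your treatment of the prefactor via \eqref{q29a} and your justification of the interchange of sums and integrals are also in line with the paper's (briefer) remarks.
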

\begin{proof}
We use \eqref{projector} in \eqref{antiwick} and interchange integration and summation, 
exploiting the exponential damping term. We then use the identity
\be
\ue^{-\frac{1}{2\hbar}\big[\big(\frac{j\ell_{x}}{N}-n\ell_{x}-x\big)^2+
\big(\frac{k\ell_{x}}{N}-m\ell_{x}-x\big)^2\big]} 
=\ue^{-\frac{1}{\hbar}\big[x-\big(\frac{j+k}{2N}\ell_{x}-\frac{n+m}{2}\ell_{x}\big)\big]^2}
\ue^{-\frac{1}{4\hbar}\big(\frac{j-k}{N}-(n-m)\big)^2\ell_{x}^2}, 
\ee
shift the summation index, $n-m=\mu$, and change variables in the $x$-integration, 
$x\mapsto x+ml_x$. Hence we obtain
\be
\begin{split}
&{\op_N^{AW}(f)}_{jk}=\\
&\frac{1}{\ell_{\xi}}\sqrt{\frac{1}{\pi\hbar}}\sum_{\mu\in\gz}\ue^{-\frac{1}{4\hbar}
    \big(\frac{j-k}{N}-\mu\big)^2\ell_{x}^2}\int_{0}^{\ell_{\xi}}\sum_{m\in\gz}\int_{m\ell_x}^{(m+1)
    \ell_x}\ue^{\frac{\ui}{\hbar}\xi\big(\frac{j-k}{N}-\mu\big)\ell_{x}}\ue^{-\frac{1}{\hbar}
    \big[x-\big(\frac{j+k}{2N}\ell_{x}-\frac{n}{2}\ell_{x}\big)\big]^2}\,f(x,\xi)\,\ud x\,\ud\xi,
\end{split}
\ee
where in the last step we used $f(x+l_x,\xi)=f(x,\xi)$. The summation over $m$ can now be 
carried out, proving the claim.
\end{proof}
More explicit expressions can be obtained when $f$ depends either on only 
$x$ or $\xi$. First assume that $f(x,\xi)=\varphi(x)$, then the $\xi$-integration can be
performed, yielding
\begin{equation}
{\op_N^{AW}(f)}_{jk}=\delta_{jk}\sqrt{\frac{1}{\pi\hbar}} \int_{-\infty}^\infty \varphi(x)\,
\ue^{-\frac{1}{\hbar}\big(x-\frac{j\ell_x}{N}\big)^2}\,\ud x.
\end{equation}
With the Fourier expansion
\begin{equation}
\varphi(x)=\sum_{n\in\gz}\varphi_n\,\ue^{2\pi\ui n\frac{x}{\ell_x}}
\end{equation}
this becomes
\begin{equation}
{\op_N^{AW}(f)}_{jk}=\delta_{jk}\sum_{n\in\gz}\varphi_n\,\ue^{2\pi\ui\frac{jn}{N}}\,
\ue^{-\hbar\frac{\pi^2n^2}{\ell_x^2}}.
\end{equation}
If, however, $f(x,\xi)=\phi(\xi)$, then the $x$-integration can be performed,
\begin{equation}
{\op_N^{AW}(f)}_{jk}=\sum_{n\in\mathbb{Z}}\ue^{-\frac{1}{4\hbar} 
\big(\frac{j-k}{N}-n\big)^2 \ell_x^2}\frac{1}{\ell_{\xi}} \int_0^{\ell_{\xi}}\phi(\xi)\,
\ue^{\frac{\ui}{\hbar}\xi\big(\frac{j-k}{N}-n\big)\ell_{x}}\,\ud \xi.
\end{equation}
With the Fourier series
\begin{equation}
\phi(\xi)=\sum_{m\in\gz}\phi_m\,\ue^{2\pi\ui m\frac{\xi}{\ell_\xi}}
\end{equation}
this simplifies to
\begin{equation}
{\op_N^{AW}(f)}_{jk}=\sum_{n\in\gz}\phi_{k-j+nN}\,\ue^{-\hbar\frac{\pi^2(k-j+nN)^2}{\ell_\xi^2}}.
\end{equation}
As an example, the anti-Wick quantisation of the classical Hamiltonian
\eqref{schroedsymb} with potential $V(x)=\cos(\frac{2\pi x}{\ell_x})$, 
as in the Harper Hamiltonian \eqref{r1}, is
\begin{equation}
{\op_N^{AW}(H)}_{jk}=-\frac{N^2}{\ell_x^2}\left(\delta_{j,k-1}+\delta_{j,k+1}-2\right)\,
\ue^{-\hbar\frac{\pi^2}{\ell_\xi^2}} + \delta_{jk}\,\cos\left(\frac{2\pi j}{N}\right)\,
\ue^{-\hbar\frac{\pi^2}{\ell_x^2}}.
\end{equation}
We contrast this with the Weyl quantisation of the same symbol,
\begin{equation}
{\op_N(H)}_{jk}=-\frac{N^2}{\ell_x^2}\left(\delta_{j,k-1}+\delta_{j,k+1}-2\right) + 
\delta_{jk}\,\cos\left(\frac{2\pi j}{N}\right),
\end{equation}
which is a discretised Schr\"odinger operator in the usual sense.

From this example one concludes that Weyl quantisation allows one to represent
discretised Schr\"odinger operators, or indeed other difference operators, using 
an $\hbar$-independent symbol. If one were to represent the same operator in
the framework of anti-Wick quantisation one would have to use a symbol
with an $\hbar$-expansion in all orders.

\section{A semiclassical summation formula}
\label{20}
In this appendix we prove a technical statement which is applied in
the main part in several places. For that purpose we need the
following preparation where, for simplicitly we denote either of the
ordered pairs $(\ell_x,\ell_\xi)$ or $(\ell_\xi,\ell_x)$ by
$(\ell,\ell^\ast)$.
\begin{lemma}
\label{a1}
Let $a\in\C_0^\infty(0,\ell)$, and let $\phi\in C^{\infty}(\rz)$ be
such that there exists $\nu\in\gz$ with
\begin{equation}
\label{a10}
(\nu-1) \ell^\ast < \phi'(t) < (\nu+1)\ell^\ast.
\end{equation}
Then 
\begin{equation}
\label{a2a}
\frac{\ell}{N}\sum_{n=0}^{N-1}a\bigl(\tfrac{n\ell}{N}\bigl)
\ue^{\frac{\ui}{\hbar}\phi\bigl(\tfrac{n\ell}{N}\bigr)} 
= \int_0^\ell a(t)\ue^{\frac{\ui}{\hbar}(\phi(t)-\nu \ell^\ast t)} \ud t
 + O\bigl(\hbar^{\infty}\bigr).
\end{equation}
\end{lemma}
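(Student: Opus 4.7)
The natural approach is via the Poisson summation formula. Since $a \in C_0^\infty(0,\ell)$ has compact support strictly inside $(0,\ell)$, the values $a(n\ell/N)$ for $n \in \gz \setminus \{0,\ldots,N-1\}$ vanish once $N$ is large enough, so one may extend the sum to $n \in \gz$ without changing it. Setting $g(t) := a(t)\ue^{\ui\phi(t)/\hbar}$, Poisson summation with spacing $\ell/N$ gives
\begin{equation}
\frac{\ell}{N}\sum_{n\in\gz} g\bigl(\tfrac{n\ell}{N}\bigr)
= \sum_{k\in\gz} \tilde g\bigl(\tfrac{2\pi k N}{\ell}\bigr)
= \sum_{k\in\gz}\int_0^{\ell} a(t)\,\ue^{\frac{\ui}{\hbar}(\phi(t)-k\ell^\ast t)}\,\ud t,
\end{equation}
where the identification of the Fourier frequencies uses $N = \ell_x\ell_\xi/(2\pi\hbar)$ together with $\ell\,\ell^\ast = \ell_x\ell_\xi$ to rewrite $2\pi k N/\ell = k\ell^\ast/\hbar$. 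The term $k=\nu$ is precisely the right-hand side of \eqref{a2a}, so the remaining task is to show $\sum_{k\neq\nu}$ of the integral above is $O(\hbar^\infty)$.

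For this, put $\psi_k(t) := \phi(t)-k\ell^\ast t$. The open-interval condition \eqref{a10} and compactness of $\supp a$ yield $\varepsilon>0$ with $(\nu-1)\ell^\ast + \varepsilon \leq \phi'(t)\leq (\nu+1)\ell^\ast - \varepsilon$ on $\supp a$, whence
\begin{equation}
|\psi_k'(t)|\geq \varepsilon + \max\bigl(0,|k-\nu|-1\bigr)\ell^\ast \qquad \text{for all } t\in\supp a,\ k\neq\nu.
\end{equation}
Thus $\psi_k$ is non-stationary on $\supp a$, and repeated integration by parts using the standard identity $\ue^{\ui\psi_k/\hbar} = (\hbar/\ui\psi_k')\partial_t \ue^{\ui\psi_k/\hbar}$ produces, for every $M\in\nz$, an estimate
\begin{equation}
\Bigl|\int_0^\ell a(t)\,\ue^{\ui\psi_k(t)/\hbar}\,\ud t\Bigr|\leq C_M\,\frac{\hbar^M}{(1+|k-\nu|)^M}.
\end{equation}
For $M\geq 2$ this is summable in $k$, and since $M$ is arbitrary the full sum over $k\neq\nu$ is $O(\hbar^\infty)$, establishing \eqref{a2a}.

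The main technical point is to make the integration-by-parts estimate sufficiently uniform in $k$: at each step one picks up factors built from $1/\psi_k'$ and its derivatives (up to derivatives of $a$), and one must verify that these are controlled by inverse powers of $|k-\nu|$ when $|k-\nu|$ is large so that the $\hbar^\infty$ bound survives the summation over $k$. Everything else is bookkeeping: the extension of the discrete sum to all integers, the algebraic identification of frequencies via $N\hbar = \ell_x\ell_\xi/(2\pi)$, and the recognition of the $k=\nu$ term as the target integral.
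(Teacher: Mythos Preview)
Your proof is correct and follows essentially the same approach as the paper: Poisson summation (the paper cites a version from Dixon--Ferrar yielding a cosine series, while you use the standard exponential form) followed by a non-stationary phase argument to kill all terms except the one with $k=\nu$. If anything, your treatment is slightly more careful than the paper's, since you explicitly track the $k$-dependence of the integration-by-parts bound to justify summability, whereas the paper merely asserts that only one term survives.
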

\begin{proof}
Taking into account that
\begin{equation}
\label{a19a}
\frac{2\pi N}{\ell}=\frac{\ell^\ast}{\hbar},
\end{equation}
and applying the Poisson summation formula, see
e.g.\ \cite[Theorem~4]{Dixon:1937}, gives
\begin{equation}
\label{a3}
\frac{\ell}{N}\sum_{n=0}^{N-1}a\bigl(\tfrac{n\ell}{N}\bigr)
\ue^{\frac{\ui}{\hbar}\phi\bigl(\frac{n\ell}{N}\bigr)} = 
\int_0^\ell a(t)\ue^{\frac{\ui}{\hbar}\phi(t)}\ \ud t 
+ 2\sum_{n=1}^\infty\int_0^\ell a(t)
\ue^{\frac{\ui}{\hbar}\phi(t)}\cos\bigl(\tfrac{n\ell^\ast t}{\hbar}\bigl)\ud t.
\end{equation}
The phase $\phi$ in the first integral has no stationary points in
$\supp a$, whereas the stationary points in the second integral are
determined by $\phi'(t)=\pm n\ell^\ast$. Due to \eqref{a10} this gives
$n=|\nu|$, and therefore only one term in the sum over $n$ exceeds
$O(\hbar^{\infty})$. Hence \eqref{a2a} follows.
\end{proof}
This Lemma is an essential ingredient needed to prove the following
statement.
\begin{prop}
\label{a6}
Let $H\in C^\infty(\rz^2)$ be periodic with respect to
$\ell_x\gz\oplus \ell_\xi\gz$, and let $h\in C^\infty(\mu
\ell_\xi,(\mu+1)\ell_\xi)$, where $\mu\in\gz$. Define
\begin{equation}
\label{a9}
\varphi(x,\xi) = x\xi-h(\xi),
\end{equation}
and assume that $A\in C^\infty(\rz)$ is compactly supported in
$\bigl(\mu \ell_\xi,(\mu+1)\ell_\xi\bigr)$. Moreover, let
$\kappa_{\epsilon}$ be of the form $\eqref{q78}$ and define
\begin{equation}
\label{a11}
a(x,\xi) = \kappa_{\epsilon}\bigl(x-h'(\xi)\bigr)a(\xi).
\end{equation}
One then obtains for every fixed $l\in\gz$
\begin{equation}
\label{a7}
\begin{split}
&\sum_{m,\atop|l-m|\leq N^{\frac{1}{M}}}\frac{\ell_x}{N}\frac{1}{\ell_\xi}\int_0^{\ell_\xi}H(x_l,\eta)
   \ue^{\frac{\ui}{\hbar}(x_l-x_m)\eta}\ud\eta\int_{\mu \ell_\xi}^{(\mu+1)\ell_\xi}a(x_m,\xi)
   \ue^{\frac{\ui}{\hbar}\varphi(x_m,\xi)}\ud\xi \\
&\hspace{2.5cm}=\frac{\ell_x}{N}\int_0^{\ell_\xi}H(x_l,\xi)A(\xi)
   \ue^{\frac{\ui}{\hbar}\varphi(x_l,\xi)}\ud\xi + O\bigl(\hbar^{\infty}\bigr).
\end{split}
\end{equation}
where the error estimate depends uniformly on $l$.
\end{prop}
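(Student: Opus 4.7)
The plan is to evaluate the $\eta$-integral exactly via Fourier series, then to recognise the emerging $m$-sum as the discrete analogue of a Moyal symbol composition, and finally to close with stationary phase in $\xi$. Expanding $H(x_l,\eta)=\sum_{n\in\gz}H_n(x_l)\ue^{2\pi\ui n\eta/\ell_\xi}$ and using the identity $(x_l-x_m)/\hbar=2\pi(l-m)/\ell_\xi$ from \eqref{q29a}, the $\eta$-integral collapses to the single Fourier coefficient $H_{m-l}(x_l)$. Since $H$ is smooth, $|H_n(x_l)|=O_k(|n|^{-k})$ holds for every $k$, uniformly in $l$. Combined with the uniform boundedness of the $\xi$-integral in $m$ (which is clear from $|a(x_m,\xi)|\leq\|A\|_\infty$), this permits dropping the restriction $|l-m|\leq N^{1/M}$ at the cost of an $O(\hbar^\infty)$ error, leaving an absolutely convergent sum over all $m\in\gz$.

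Next I would interchange the sum with the $\xi$-integral, substitute $m=n+l$, and use $\ue^{\frac{\ui}{\hbar}x_m\xi}=\ue^{2\pi\ui m\xi/\ell_\xi}$ (together with $\ue^{2\pi\ui l\mu}=1$) to rewrite the integrand as $\ue^{\frac{\ui}{\hbar}(x_l\xi-h(\xi))}\sum_{n\in\gz}H_n(x_l)\,a(x_{n+l},\xi)\,\ue^{2\pi\ui n\xi/\ell_\xi}$. A Taylor expansion $a(x_{n+l},\xi)=\sum_{k\geq 0}\frac{1}{k!}(n\ell_x/N)^k\partial_x^k a(x_l,\xi)$, combined with the elementary identity
\begin{equation*}
\sum_{n\in\gz}H_n(x_l)\,n^k\,\ue^{2\pi\ui n\xi/\ell_\xi}=\left(\frac{\ell_\xi}{2\pi\ui}\right)^k\partial_\xi^k H(x_l,\xi)
\end{equation*}
and the key cancellation $(\ell_x/N)^k(\ell_\xi/(2\pi\ui))^k=(\hbar/\ui)^k$, produces inside the $\xi$-integral the formal Moyal-type asymptotic series $\sum_{k\geq 0}\frac{1}{k!}(\hbar/\ui)^k\,\partial_\xi^k H(x_l,\xi)\,\partial_x^k a(x_l,\xi)$. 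The Taylor remainders are controlled by the rapid decay of $H_n(x_l)$, which dominates the polynomial growth $|n|^k$ in the remainder terms.

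The final step is to apply stationary phase to the $\xi$-integral with phase $\varphi(x_l,\xi)=x_l\xi-h(\xi)$; its unique critical point $\xi^\ast$ satisfies $h'(\xi^\ast)=x_l$, i.e.\ $x_l-h'(\xi^\ast)=0$. Because $\kappa_\epsilon\equiv 1$ in a neighbourhood of the origin by \eqref{q78}, all derivatives $\kappa_\epsilon^{(k)}(0)$ with $k\geq 1$ vanish. Iterating the chain rule for the composition $a(x,\xi)=\kappa_\epsilon(x-h'(\xi))A(\xi)$ shows that $\partial_x^k a(x_l,\xi)$ together with all its $\xi$-derivatives vanish at $\xi=\xi^\ast$ for every $k\geq 1$. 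Consequently each $k\geq 1$ contribution to the series above is $O(\hbar^\infty)$ after stationary phase, leaving only the $k=0$ term. The same plateau property of $\kappa_\epsilon$ near $\xi^\ast$ then lets me replace $a(x_l,\xi)$ by $A(\xi)$ modulo $O(\hbar^\infty)$, yielding the right-hand side of \eqref{a7}; uniformity in $l$ follows from the $l$-independent bounds on the Fourier coefficients of the periodic symbol $H$. The main obstacle will be the careful bookkeeping of the interlocked Taylor-Fourier summations and the verification that every subleading amplitude vanishes to all orders at the critical point, which is ultimately guaranteed by the construction of $\kappa_\epsilon$ as a plateau cutoff.
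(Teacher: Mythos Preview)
Your argument is correct and reaches the same conclusion, but it follows a genuinely different path from the paper.

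The paper first extends the $m$-sum to exactly $N$ consecutive integers (using the same Fourier-decay observation you make), then invokes its Poisson-summation Lemma~\ref{a1} to replace the discrete sum by an integral over a continuous variable $x$. This produces a triple integral in $(x,\xi,\eta)$ with phase $x(\xi-\eta-(\mu-1)\ell_\xi)-h(\xi)+x_l\eta$; a two-dimensional stationary phase in $(x,\eta)$, with $\xi$ carried as a parameter, then yields the right-hand side. The plateau property of $\kappa_\epsilon$ enters at the very end, killing all subleading stationary-phase corrections because $a$ is constant in $x$ near the critical point $x=x_l$.

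Your route instead evaluates the $\eta$-integral exactly by Fourier expansion, extends the sum to all of $\gz$, and Taylor-expands $a(x_{n+l},\xi)$ in $x$ to exhibit the Moyal-type series $\sum_k\frac{1}{k!}(\hbar/\ui)^k\partial_\xi^k H\,\partial_x^k a$; you then use a one-dimensional stationary-phase argument in $\xi$ only to certify that each $k\geq 1$ term is $O(\hbar^\infty)$, since its amplitude vanishes to infinite order at the critical point $h'(\xi^\ast)=x_l$. The same plateau mechanism is at work, just organised differently.

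The paper's approach is shorter once Lemma~\ref{a1} is in hand and avoids the bookkeeping of the double (Taylor--Fourier) expansion. Your approach, on the other hand, is more self-contained, makes the pseudodifferential (Moyal) structure of the composition $\op_N(H)U_{\boldsymbol{\alpha},k}$ explicit, and never needs the Poisson-summation lemma. Both arguments ultimately rest on the same two ingredients: rapid decay of the Fourier coefficients of $H$ and the flatness of $\kappa_\epsilon$ at its plateau.
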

\begin{proof}
We first notice that the support of $\kappa_{\epsilon}(x-h'(\xi))$ is
compact, and therefore the function is bounded. Thus the second
integral on the left-hand side in \eqref{a7} is uniformly bounded with
respect to $x_m$ and $\hbar$. The first integral is a Fourier
integral, and since $H$ is smooth, it is of size $O(|l-m|^{-\infty})$. 
Hence, we can add further terms to the series such
that it includes exactly $N$ consecutive integers $m_0,\dots,m_0+N-1$
centered around $l$, causing an error of size $O(\hbar^\infty)$.

Since $\eta\in[0,\ell_\xi]$ and $(\mu-1)\ell_\xi<\xi<(\mu+1)\ell_\xi$,
we have $(\mu-2)\ell_\xi<\xi-\eta<\mu \ell_\xi$. Thus we can apply
Lemma~\ref{a1} with $\nu=\mu-1$ and obtain
\begin{equation}
\label{a20a}
\begin{split}
&\sum_{m,\atop|l-m|\leq N^{\frac{1}{M}}}\frac{\ell_x}{N}\int_0^{\ell_\xi}H(x_l,\eta)
   \ue^{\frac{\ui}{\hbar}(x_l-x_m)\eta}\ud\eta\int_{\mu \ell_\xi}^{(\mu+1)\ell_\xi}a(x_m,\xi)
   \ue^{\frac{\ui}{\hbar}\varphi(x_m,\xi)}\ud\xi \\
&\hspace{1.5cm}=\int_{\rz}\int_0^{\ell_\xi}\int_{\mu \ell_\xi}^{(\mu+1)\ell_\xi}a(x,\xi)H(x_l,\eta)
   \ue^{\frac{\ui}{\hbar}\phi_l(x,\xi,\eta)}
   \ud x\ud\xi\ud\eta + O\bigl(\hbar^\infty\bigr),
\end{split}
\end{equation}
where
\begin{equation}
\label{a21v}
\phi_l(x,\xi,\eta)= x\bigl(\xi-\eta-(\mu-1) \ell_\xi\bigr)-h(\xi)+x_l\eta.
\end{equation}
Regarding $\xi$ as a parameter, we can approximate the integral over
$x$ and $\eta$ with the stationary phase theorem. The Hessian of the
phase function satisfies $|\det\Hess\phi_l|=1$ and
$\sgn\Hess\phi_l=0$, and the stationary points are given by $x=x_l$
and $\eta=\xi-(\mu-1)\ell_\xi$. Since by \eqref{a11} the function $a$
is constant with respect to $x$ in an $\epsilon$-neighbourhood of
$\{(h'(\xi),\xi);\ \xi_0\leq\xi\leq\xi_0+\lambda_{\xi}\}$, all
subleading terms of finite order in $\hbar$ vanish. Therefore,
\begin{equation}
\label{a20b}
\begin{split}
 &\int_{\rz}\int_0^{\ell_\xi}\int_{\mu \ell_\xi}^{(\mu+1)\ell_\xi}a(x,\xi)H(x_l,\eta)
   \ue^{\frac{\ui}{\hbar}\phi_l(x,\xi,\eta)}\ud x\ud\xi\ud\eta \\
 &\qquad = 2\pi\hbar \int_0^{\ell_\xi}a(x_l,\xi)H(x_l,\xi-\mu \ell_\xi)
           \ue^{\frac{\ui}{\hbar}\phi_l(x_l,\xi,\xi-\mu \ell_\xi)}\ud\xi +
           O\bigl(\hbar^\infty\bigr)\\
 &\qquad = \frac{\ell_\xi \ell_x}{N}\int_0^{\ell_\xi}a(\xi)H(x_l,\xi)
           \ue^{\frac{\ui}{\hbar}[x_l\xi-h(\xi)]}\ud\xi + O\bigl(\hbar^\infty\bigr).
\end{split}
\end{equation}
\end{proof}
\section{The Conley-Zehnder index}
\label{a17}
We outline the definition of the Conley-Zehnder index using results
and notations of \cite{Duistermaat:1976,Meinrenken:1994}.

Let $(E,\omega)$ be a symplectic vector space. Its Lagrangian
Grassmannian $\Lambda(E)$ is the set of all Lagrangian subspaces of
$E$. Given $L_1,L_2,L_3\in\Lambda(E)$, we define on $L_1\oplus
L_2\oplus L_3$ the quadratic form
$Q(x_1,x_2,x_3)=\omega(x_1,x_2)+\omega(x_2,x_3)+\omega(x_3,x_1)$, as
well as its signature $s(L_1,L_2,L_3):=\sgn Q$

Suppose now that $L_1$ and $L_2$ depend on $t\in[a,b]$, and that $L_3$
is transversal (see \cite[p.\,143]{Lee:2013}) to $L_1$ and $L_2$ for
all $t\in[a,b]$. We then define \cite[(5)]{Meinrenken:1994}
\begin{equation}
\label{a19aa}
[L_1:L_2]_{a}^{b}=\frac{1}{2}[s(L_1(a),L_2(a),L_3)-s(L_1(b),L_2(b),L_3)].
\end{equation}
Let $0=t_0<t_1<\ldots<t_{\nu}<\ldots<t_k=T$ be a partition of $[0,T]$
and assume that for every subinterval $[t_{\nu-1},t_{\nu}]$ there
exists $L_{3,\nu}\in\Lambda(E)$ such that $L_1,L_2$ and $L_{3,\nu}$
satisfy the above requirements. We then set
\begin{equation}
\label{a18a}
[L_1:L_2]_0^T=\sum\limits_{\nu=1}^k[L_1:L_2]_{t_{\nu-1}}^{t_\nu}.
\end{equation}
Now let $(M,\omega)$ be a symplectic manifold, with cotangent bundle
$\pi:T^\ast M\to M$. The vertical bundle $V_M$ is defined by
$V_M(p)=\ker\left.\T\pi\right|_{p}$, $p\in M$, where $\T\pi$ is the
differential of the projection $\pi$, and is a Lagrangian submanifold.
We assume that a Hamiltonian $H\in C^\infty(M)$ is given that
generates a Hamiltonian flow $\Phi_t$. We also assume that
$\lambda\in\rz$ is a regular value of $H$ and that $p$ is a periodic
orbit of $\Phi_t$ in $H^{-1}(\lambda)\subset M$ with period $t_p$.

We now assume that $\dim M=2$. Then a periodic orbit $p$ itself is a
Lagrangian submanifold. Using \cite[(A2) p.\,9]{Meinrenken:1994} we
now define the Conley-Zehnder index of a periodic orbit $p$ by
\cite[p.\,10]{Meinrenken:1994},
\begin{equation}
\label{a20}
\sigma_{p}=\left[\T\Phi_{t_p}^{-1}V_M(q):V_M(\Phi_{t_p}^{-1}(q))\right]_0^{t_p},
\end{equation}
where $q$ is an arbitrary point on $p$; the expression \eqref{a20} is
independent of the choice of $q$. In a two dimensional phase space the
Conley-Zehnder index is additive, i.e.,
\begin{equation}
\label{z20}
\sigma_{p}=k\sigma_{p^{\#}},
\end{equation}
if $p$ is a $k$-fold repetition of a primitive periodic orbit $p^{\#}$, see 
\cite[p.\,10]{Meinrenken:1994}.


%
%
{\small
\providecommand{\bysame}{\leavevmode\hbox to3em{\hrulefill}\thinspace}
\providecommand{\MR}{\relax\ifhmode\unskip\space\fi MR }
\providecommand{\MRhref}[2]{%
  \href{http://www.ams.org/mathscinet-getitem?mr=#1}{#2}
}
\providecommand{\href}[2]{#2}

%
\end{document}